\newcommand{\change}[1]{{\color{black} #1}}
\newcommand{\E}{\mathbb{E}}
\newcommand{\R}{\mathbb{R}}
\renewcommand{\P}{\mathbb{P}}
\newcommand{\A}{\mathcal{A}}
\newcommand{\M}{\mathcal{M}}
\newcommand{\F}{\mathcal{F}}
\newtheorem{theorem}{Theorem}[section]
\newtheorem{lemma}[theorem]{Lemma}
\newtheorem{prop}[theorem]{Proposition}
\newtheorem{corollary}[theorem]{Corollary}
\newtheorem{assumption}{Assumption}
\newtheorem{remark}{Remark}[section]
\newcommand*{\barfix}[2][.175ex]{%
	\mathpalette{\@barfix{#1}}{#2}%
}
\newcommand*{\@barfix}[3]{%
	\vbox{%
		\kern#1\relax
		\hbox{$#2#3\m@th$}%
	}%
}
\begin{document}


\vspace{-2cm}
\renewcommand{\UrlFont}{\footnotesize}
\title{Formation of Optimal Interbank Networks under Liquidity Shocks}

\author{Daniel E. Rigobon\thanks{Department of Operations Research \& Financial Engineering (ORFE), Princeton University.  \href{mailto:drigobon@princeton.edu}{\texttt{drigobon@princeton.edu}}}  \\
\vspace{-0cm}
\and Ronnie Sircar\thanks{Department of Operations Research \& Financial Engineering (ORFE), Princeton University. \href{mailto:sircar@princeton.edu}{\texttt{sircar@princeton.edu}}} \\
\vspace{-0cm}
}
\date{\today}

\maketitle
\vspace{-0.8cm}


\begin{center}
\renewcommand{\UrlFont}{\normalsize}
\begin{abstract}
\change{We study the formation of an optimal interbank network in a model where banks control both their supply of liquidity, through cash reserves, and their exposures to other banks' risky projects. 
The value of each bank's project may suddenly decline depending on their cash reserves and both the occurence and magnitude of liquidity shocks. In two distinct settings, we solve the system-wide optimal control problem and obtain explicit formulas for the unique optimal allocations of capital. In the first decentralized setting, banks seek only to maximize their own expected utility. Second, a central planner aims to maximize the sum of all banks' expected utilities. Both of the resulting networks exhibit a `core-periphery' structure in equilibrium. However, in the decentralized setting, banks elect to hold less cash reserves compared to the centralized setting, leading to greater susceptibility to liquidity shortages. We characterize the behavior of the planner's optimal allocation as the size of the system grows. Surprisingly, the relative welfare gap is of constant order. Finally, we derive co-investment requirements that allow the decentralized system to achieve the planner's optimal level of risk. In doing so, we find that banks in the network's core are subjected to the highest co-investment requirements -- ensuring that they are sufficiently incentivized to hold significant cash reserves. Our analysis may inform regulators' requirements on banks' liquidity reserves, as have been debated in the wake of the 2023 regional banking crisis in the US.}
\end{abstract}
\end{center}

\begin{minipage}{0.7\textwidth}
	\small
	\textbf{Keywords:} Financial Institutions, Optimal Control, Investment\\ 
\end{minipage}


\newpage


\section{Introduction}
\label{sec:intro}

\change{In this paper, we construct a continuous-time model of a financial system in which banks are exposed to both their own and possibly each others' liquidity shocks. We analyze the equilibria in this model in two distinct settings: decentralized and centralized. In contrast to much of the existing literature on banking networks, which assumes that the linkages between financial institutions are exogenous, we allow this network to form endogenously based on optional decisions taken by participating institutions.}

The model proceeds as follows. Banks may specialize in different activities; some collect a large number of deposits, whereas others specialize in revenue generation. This heterogeneity is modeled by unique, proprietary, investment opportunities (e.g. a portfolio of commercial loans) available to each bank. We assume that these opportunities are scalable, and accessible to other banks within the system. For example, a deposit-collecting bank $i$ can obtain the large returns of investment bank $j$'s unique revenue-generating operations by directly investing into project $j$. We note that this construction is similar to the single-period models of~\citet{Rochet1996}, \citet{acemoglu2015systemic}, \change{and~\citet{erol2017network}}, and we also refer to these unique investment opportunities as `projects'. In these previous models and ours, the riskiness of these projects is tied to some decision taken by the associated bank.

Although these projects may accrue large rates of return, they are subject to a degree of risk. More precisely, a bank's project may be impacted by unpredictable liquidity shocks of random magnitude. The shocks are assumed to represent, for example, additional liquidity required for a bank's project to avoid losses, such as occurs in~\citet{Rochet1996} and \citet{acemoglu2015systemic}, \change{or the withdrawal of a large number of deposits, which triggered the decline and ultimate failure of Silicon Valley Bank in 2023.} If the size of such a shock exceeds the bank's current supply of liquidity (i.e. cash reserves), then the project's value instantaneously drops. \change{We refer to such an event as a liquidity shortage. When this occurs, all banks investing in this project suffer losses proportional to their stake in it.} Therefore, conditioned on the arrival of a liquidity shock, a bank's supply of cash determines their project's level of risk. Conversely, if a bank has sufficient cash reserves when a liquidity shock arrives, their project suffers no losses. Figure~\ref{fig:ex_liq_shock} illustrates the relationship between a bank's liquidity supply (i.e. cash reserves), the distribution of a liquidity shock's size, and the probability of liquidity shortage conditioned on shock arrival. 
\begin{figure}[!h]
	\centering
	\begin{tikzpicture}[xscale = 1.5, yscale=3, domain=0:5]	
		\draw[thick, ->](0,0)--(5,0);
		\draw[thick, ->](0,0)--(0,1.1);
		\draw[thick] (0.05,0) -- (-0.05,0) node [left, xshift = -2pt] {\footnotesize 0};
		\draw[thick] (0.05,1) -- (-0.05,1) node [left, xshift = -2pt] {\footnotesize 1};
		\draw[red, very thick] plot(\x,{1-exp(-\x)}) node [xshift = 35pt, yshift = -2pt] {\footnotesize \shortstack{CDF, size of \\ liquidity shock}};
		\draw[blue, very thick] (1,1) -- (1,-0.1) node [below] {\footnotesize \shortstack{Liquidity\\supply of firm}};
		\draw [decorate,decoration={brace,amplitude=5pt}, cyan, thick, xshift = -30pt, yshift = 0pt](1,0) -- (1,0.63) node [midway, xshift = -50pt] {\footnotesize \shortstack{Probability of \\ sufficient liquidity}};	
		\draw [decorate,decoration={brace,amplitude=5pt}, violet, thick, xshift = -30pt, yshift = 0pt](1,0.63) -- (1,1) node [midway, xshift = -50pt] {\footnotesize \shortstack{Probability of \\a liquidity shortage}};
		\draw [dashed, thick] (0,0.63) -- (5,0.63);
	\end{tikzpicture}
	\captionsetup{width=.8\linewidth, font = small, justification=justified}
	\caption{For a single bank, the relationship between the cumulative distribution function (CDF) of the size of a liquidity shock, their supply of cash, and the conditional probabilities of (in)sufficient liquidity.}
	\label{fig:ex_liq_shock}
\end{figure}
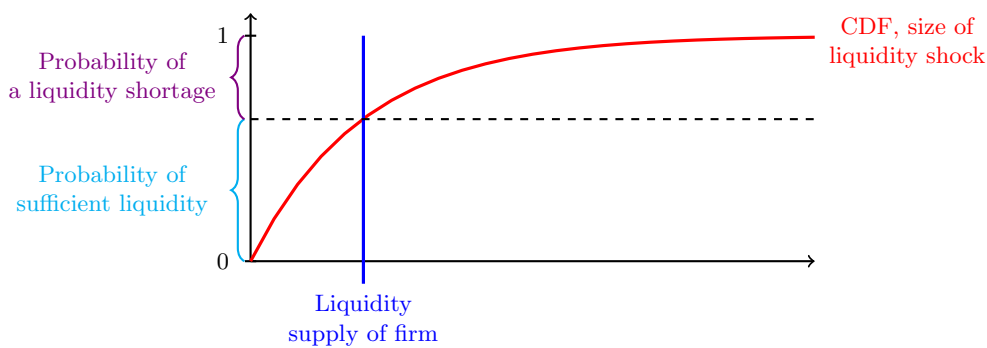

\change{Since the original draft of this paper, there have been some notable bank runs on mid-size US banks in 2023, particularly Silicon Valley Bank (SVB), First Republic Bank (FRB) and Signature Bank. These seem to have been initiated by interest rate increases by the Federal Reserve in its efforts to reduce inflation, leading to losses in value to these banks' bond portfolios. Rising interest rates also led to higher financing costs on many of their large institutional clients within the Silicon Valley tech sector, who then began withdrawing their deposits to meet their own liquidity needs. This impact on banks was then exacerbated, as other large depositors became concerned about the banks going into distress and their inability to recover beyond the $\$250$K insured by the FDIC. They withdrew large amounts very quickly, typically using online banking apps, and often outside of business hours and over weekends. For instance, on March 8, 2023, depositors withdrew $\$42$ billion from SVB in one day. 

However, these banks did not have sufficient cash reserves to meet depositors' demand, with significant amounts of capital tied up in long-term treasury bonds. These bonds had been purchased during an earlier low interest rate era and were liquidated at considerable loss. Ultimately this was not enough to restore depositor confidence, and the banks ended up insolvent due to lack of liquidity. The run on SVB likely lowered confidence in similar tech industry concentrated banks such as FRB, which led to a run on these banks. While our model was not designed to capture this kind of contagion effect, it does indeed focus on the impact and prevention of liquidity shocks through cash reserves in an inter-connected banking network. 
}

A key focus of this paper is that each bank endogenously chooses to allocate its capital between cash reserves (i.e. supply of liquidity) and other banks' risky projects. To that end, we will study the optimal capital allocations for two extreme settings of the financial system. First, we consider the \emph{decentralized} case -- wherein each bank freely allocates their capital with pure self-interest. They seek only to maximize their utility of wealth at a given terminal time. We note that this setting reflects a game-theoretic equilibrium. Second, we consider a \emph{centralized} setting -- where a single social planner makes the allocation decisions for all banks concurrently and aims to maximize the sum of individual banks' utilities.

In both decentralized and centralized cases, we derive the dynamic programming equations for the respective value functions, and explicitly compute the optimal allocations, \change{which yield a core-periphery network structure in both cases.} Under stricter technical conditions, we also prove uniqueness. We observe a discrepancy between the optimal allocations computed in both settings: the social planner chooses to hold a greater supply of liquidity. This occurs because our model captures a simple negative externality: when individualistically determining how much cash to hold in reserves, a bank sets the level of risk experienced by external investors in its project. An individual bank, operating in its own interests, fails to internalize these external investors' exposures when choosing their supply of cash. In contrast, the planner is cognizant of the network-wide welfare and acts accordingly by reducing the level of risk through increased cash reserves according to each project's degree of external investment. By design, the social planner achieves the welfare-maximizing (i.e. first-best) allocation for the financial system. As a consequence of this discrepancy, liquidity shortages are less likely to occur under the social planner than under decentralized behavior. However, we also observe that interbank exposures are larger in the centralized setting, and hence each liquidity shortage becomes more damaging to the system. This tension between the likelihood and severity of extreme events bears a resemblance to the `robust yet fragile' observation made by~\citet{Gai2010} and~\citet{acemoglu2015Bsystemic}. In particular, we find that this feature is associated with the socially optimal allocation of capital in the financial system.

We also study how the two optimal allocations differ as the financial system's size increases. Two natural points of comparison are: 1) the difference in, and 2) the ratio of, social welfare between both settings. The former comparison measures the nominal size of the inefficiency, and the latter its relative size (which we refer to as the relative welfare gap and was dubbed the `price of anarchy' by~\citet{papadimitriou2001algorithms}). Perhaps counter-intuitively, we find that the relative welfare gap remains bounded by a constant as the size of the system grows. Namely, the nominal size of the system's inefficiency grows at the same rate as the social welfare itself. These results are first derived theoretically, and also verified in simulations. Finally, we show that it is possible to alter banks' investment constraints to replicate the social planner's optimal allocation, \change{which requires increasing restrictions only on banks in the network's core.}

There are several interesting takeaways from this paper. \change{First, we are able to explicitly solve a continuous-time network optimal control problem and characterize the loss in efficiency due to decentralized behavior. Our model contains: i) multiple interacting agents, each solving their own utility maximization Merton problem, ii) uncertainty driven by jump processes rather than standard Brownian motion, and iii) agents' ability to control the intensity of their jump process. To the best of our knowledge, explicit solutions in a model with multiple controlled stochastic intensities is novel.}
Second, we find that the planner's optimal allocation leads to low-frequency and high-magnitude losses in the system. This may imply that the `robust-yet-fragile' feature of financial networks is, in some sense, socially optimal. However, we see that the planner compensates for the larger-magnitude losses by ensuring they are less likely. As a result, the centralized equilibrium is associated with larger interbank investments throughout the system. \change{Additionally, in both settings we see that the endogenous financial networks exhibit a `core-periphery' structure, where only a subset of banks' projects are invested in by other banks. Intuitively, we show that only banks in the core of the network must be subjected to stricter regulation to replicate the planner's optimal allocation. This suggests there is greater value in regulation of banks in the endogenously-determined core of the network, over those in the periphery.}

\change{Indeed, the 2023 failures of banks such as SVB spurred a regulatory response in the US to raise capital requirements on banks by $19\%$. This led to an intensive lobbying effort from the banking sector, which termed the new proposals ``Basel Endgame", and the increase has subsequently been halved to around $9\%$ at the time of writing. Analysis of the centralized setting can inform regulatory debates and about what constitutes an adequate supply of liquidity. We refer, for instance, to the speech \citet{barr-speech} by the Vice Chair for Supervision at the Federal Reserve, which states that: ``{\em To address the lessons about liquidity learned last spring, we are exploring targeted adjustments to our current liquidity framework. Over the last year, many firms have taken steps to improve their liquidity resilience, and the regulatory adjustments we are considering would ensure that all large banks maintain better liquidity risk management practices going forward. They would also complement the capital requirements by improving banks' ability to respond to funding shocks.}''}

This paper is organized as follows. Section~\ref{ssec:intro-lit} reviews several related branches of literature and background. Section~\ref{sec:mod} introduces the model of the financial system, the dynamics of each instrument, and the control problem posed to each bank. In the first part of our main results, Section~\ref{sec:opts} derives the endogenous equilibria for interbank networks in the decentralized (Section~\ref{ssec:opts-dec}) and centralized (Section~\ref{ssec:opts-cen}) settings. We compare these two optimal allocations in Section~\ref{sec:diff}, including an asymptotic analysis of the \change{relative welfare gap}. Finally, Section~\ref{sec:disc} concludes with a discussion of our results and directions for future work.

\subsection{Related Literature}
\label{ssec:intro-lit}

A strong motivation for this paper follows from the systemic risk literature, where much of the existing work assumes a given or exogenous network structure for the financial system. An early paper by~\cite{Allen2000} studies several stylized structures of interbank claims, and finds that the structure determines whether or not a local liquidity shock propagates throughout the system. Later papers seek to answer similar questions with distinct models. For instance~\citet{Gai2010} and~\citet{Gai2011} find that systemic liquidity crises can emerge in highly interconnected financial networks, albeit with low probability. \citet{Caccioli2014} present a model in which firms' overlapping portfolios can lead a single default to cause mark-to-market losses throughout the system -- perhaps leading to additional defaults. In~\citet{Elliott2014}, firms directly own claims on each others' assets and suffer sudden bankruptcy losses if their valuation falls below a threshold. \citet{Battiston2012a} studies a continuous-time process representing financial robustness, and allows its evolution to depend on a given financial network. Finally, several papers including~\citet{amini2016inhomogeneous,detering2019managing,detering2020financial} and \citet{detering2021integrated} seek to characterize the asymptotic behavior of contagion cascades in random inhomogeneous networks as the system's size grows. In their respective studies, these different mechanistic models are investigated both theoretically and in simulations. 
\change{In addition, the two-period model of~\citet{ramirez2022regulating} shows that under extreme levels of uncertainty, banks fail to internalize their effect on contagion cascades.}
However, the explicit or implicit networks in these papers share one common feature -- they are fixed or generated according to canonical random graph models. As previously highlighted, we believe this assumption may not be realistic; institutions in the financial system make optimal investment decisions, and the resulting network is endogenous -- not random or otherwise pre-specified. In contrast to this branch of the literature, our model enables us to investigate how the organization and fundamental parameters of the financial system can lead to the emergence and scale of its inefficiencies.

\change{We use several tools from continuous-time portfolio optimization in this work, which begins with the foundational papers of}~\citet{merton1969lifetime,merton1971optimum}. Merton studies the optimal portfolio allocation between risk-free and risky assets for a investor who maximizes their expected discounted utility of consumption. In these models, the returns of each risky asset are driven by correlated Brownian motions. Following from Merton's seminal papers, there is a wealth of literature on extensions of the original problem; see~\citet{rogers2013optimal} and references therein. The stochastic control techniques we use in this paper for deriving the optimal allocations will be similar to Merton's original work and its subsequent branch of literature, but here we will be studying a financial system in which \emph{all} participants are simultaneously determining their optimal allocations of wealth -- not only an individual. Moreover, to the best of our knowledge, the ability for multiple players to control the jump intensity of their the risk to their asset's returns has not been previously studied in the area of portfolio optimization. \change{A related mean field game with a continuum of cryptocurrency miners controlling the intensity of their discovery arrival process is analyzed in~\citet{MFGbitcoin}.}

There are, however, several papers that study an individual who incurs a cost to control the intensity of a jump process, such as~\citet{biais2010large}, \citet{pages2014mathematical}, \citet{capponi2015dynamic}, \citet{hernandez2020bank}, and~\citet{bensalem2020continuous}. These studies focus on Principal-Agent models and largely analyze the optimal contract and behavior. Moreover, they focus on the presence of moral hazard, where the Principal is unable to observe the Agent's efforts. Our mathematical approach for determining a bank's optimal cash reserves is similar to the models used in these papers. However, there are a few important differences. First, we study these optimizations performed simultaneously within a large system, and second, we focus on the inefficiencies that arise when banks optimize only in their self-interest. Additionally, our setting assumes perfect information, so there is no moral hazard.

We note that the high-level ideas in this paper are similar to the literature on optimal network formation. For early work in this area, see~\cite{jackson1996strategic} and~\cite{bala2000noncooperative}, where the authors present a process by which individuals choose to create edges with each other in a game-theoretic model. In these studies, individuals must balance a trade-off between the cost of forming an edge and the rewards associated with the edge. Our paper differs primarily from these studies through our emphasis on the financial features of the model, and the fact that edges are cost-less to form. A more realistic extension of our work would certainly incorporate these kinds of fixed costs.

Most closely related to this paper is the study of endogenous financial networks, including~\citet{erol2017network,zawadowski2013entangled, bluhm2014endogenous, acemoglu2015systemic, acemoglu2021systemic, babus2016formation} and~\citet{farboodi2021intermediation}. 
\change{\citet{erol2017network} studies the formation of linkages along which liquidity can flow between banks in a system, where each bank has access to a proprietary project. They find that excessive restrictions on banks' liquidity reserves increase the level of systemic risk. Our model primarily differs from this work in that linkages between banks do not provide liquidity to either counterparty, and there is no mechanism for contagion.}
The work of \citet{zawadowski2013entangled} shows that individual banks may fail to achieve the socially-optimal outcome by not buying insurance against their counterparties' default. While the author's model differs greatly from ours, we similarly find that individual banks' optimal behavior fails to take into account an externality on the system. A model by~\citet{babus2016formation} presents an extension of~\citet{Allen2000}. Her model allows banks to make optimal lending and borrowing decisions to redistribute liquidity throughout the system, and a highly-connected network is again found to be the most resilient to contagion. We share the idea of idiosyncratic liquidity shocks, but also study the planner's optimal allocation and compare it to the case where banks make individualistically optimal decisions.

The papers most similar to our own are~\citet{bluhm2014endogenous},~\citet{acemoglu2015systemic,acemoglu2021systemic} and~\citet{farboodi2021intermediation}. Our model is fundamentally different from those in these studies -- which are either static or consist of three distinct time periods. \change{In contrast, we analyze the endogenous financial network problem in a continuous-time environment, which allows us to leverage the powerful mathematical tools of stochastic optimal control.} First,~\citet{bluhm2014endogenous} construct a model of optimal interbank lending where banks face both liquidity and capital requirement constraints. In their model, both the interbank lending amounts and the market prices are determined endogenously. The authors show that contagion can occur (1) directly as a result of counterparty losses in the event of a default, or (2) indirectly through the mark-to-market losses incurred by a bank's portfolio in the event of a fire sale. The authors largely focus on numerical and simulation results, while we are able to provide explicit formulas for the equilibria we study. Moreover, our model endogenizes the initial sources of disruption to the system.

The contribution of~\citet{farboodi2021intermediation} characterizes how banks optimally lend to each other within a financial system where there is a strong incentive to serve as intermediaries within the chain of lending. In her model, as in ours, an interbank exposure will also allow the originating bank to access the surplus generated by a risky investment of an associated bank. She shows that the resulting network can have a core-periphery structure, and that due to the benefit of intermediation, banks' private incentives can fail to achieve the socially optimal outcome. Although there are similarities between this paper and ours, we do not focus on the incentive of intermediation, but instead on banks' optimal decisions to reduce the riskiness of their investments. Our results also replicate the core-periphery feature in her paper -- a small subset of banks with highly profitable investment opportunities form the financial network's core.

Finally,~\citet{acemoglu2015systemic, acemoglu2021systemic} endogenize both the decision of interbank lending and also the interbank interest rates. 
\change{\citet{acemoglu2021systemic} build a model whereby linkages between banks provide liquidity throughout the system, and the anticipation of contagion leads to scarce or non-existent credit. They study the relationship between the distribution of liquidity shocks and the occurence of systemic credit freezes, along with the role of different policy interventions in curbing these inefficiencies. In contrast to this work, our paper focuses on the investment decisions of participating banks in the financial system, where their liquidity shocks must be satisfied by their own cash reserves.}
\citet{acemoglu2015systemic} mirrors the work of~\citet{Rochet1996}, where banks exchange deposits to finance a project that yields high rate of return if run to conclusion, or low returns if liquidated prematurely. A bank faces external liabilities that may require them to liquidate these projects -- thereby passing losses onto its creditors. The authors find that the optimal contracts do indeed consider the first-order network effects, wherein a risk-taking bank must pay large interest rates to its creditors. However, these do not account for the `financial network externality', which can negatively affect banks that are not party to the contract. It follows that the resulting financial network may not be efficient (i.e. welfare-maximizing). While their models of interbank dynamics are similar to ours, the authors' analysis is largely focused on stylized networks in which equilibria are shown to exist. In this paper, we will instead allow the sparsity structure of the financial network to be endogenously determined by the interbank investment opportunities and other banks' decisions.

\section{Model}
\label{sec:mod}

Consider a financial system consisting of $n$ different banks. Let $(\Omega, \mathcal{E},\P)$ be a probability space, containing $n$ mutually independent Poisson processes $\tilde N^1_t,...,\tilde N^n_t$, $t\ge 0$, each of which has corresponding intensity $\theta_1,...,\theta_n > 0$. These counting processes will be used to indicate the arrival times of liquidity shocks to each respective bank. Define $\mathcal{F}$ to be the filtration generated by the full set of jump processes. Hence, we obtain the filtered probability space $(\Omega, \mathcal{E}, \F, \P)$.
The net capitalization (i.e. net value or wealth) of bank $i$ is given by the non-negative stochastic process $\{X_t^i\}_{t\ge 0}$, whose dynamics are described in the following.

\subsection{Investment Opportnities}
\label{sec:2.1}

First, the financial system contains a \change{long-term bond}, which accumulates a constant, fixed rate of return $\change{r \geq0}$ over all time. Therefore its \change{value at time $t$}, denoted $S^0_t$, evolves according to the ordinary differential equation $\frac{dS^0_t}{S^0_t} = r\,dt$. \change{The rate of return $r$ will be referred to in this paper as the `risk-free rate', since it is unaffected by any stochastic factors. However, we note that capital held in the bond is not free from \textit{liquidity} risk. Namely, we will assume that capital invested in the bond cannot be used to satisfy short-term liquidity needs. We think of the rate $r$ as capturing the rate of return for long-term and less liquid fixed income instruments, such as 5- or 10-year Treasury bills. Appendix~\ref{subapp:partially_liq_rf_asset} contains an extension of our model where this assumption is weakened and the bond can be liquidated for a fraction of its face value. Computationally, we see no significant differences in large enough banking networks.}

Each bank $i$ has access to a unique set of investments henceforth referred to as a `project', e.g. a collection of commercial loans. These projects are made available for investment to all other banks $j\neq i$ in the system. However, \change{these projects cannot be shorted, and any capital invested in them cannot be rapidly liquidated}. More precisely, \change{no bank} can use this capital to meet any \change{short-term} liquidity needs. While these projects accumulate large constant rates of return for investors, they will incur losses when the \change{associated bank suffers a liquidity shortage}. If such an event occurs, then the value of the project immediately drops \change{ to some non-zero fraction of its prior value}. For instance, it is plausible that a bank's revenue-generating operations intermittently require additional liquidity to cover a position or meet regulatory requirements. Failure to do so may lead to the bank's inability to realize the investment's gains, or may even directly cause losses.

Let $S^i_t$ denote the time-$t$ value of a single unit of capital invested in bank $i$'s project. Its dynamics are given by
\begin{equation}
	\label{eq:risky_asset_sde}
	\frac{dS_t^i}{S_t^i} = (\mu_i + r)\,dt - \phi_i\,dN_t^i, \quad i=1,\dots,n.
\end{equation}
\change{We assume that $\mu_i > 0$, so this project} has deterministic rate of return greater than $r$. The jump process $N^i_t$ is obtained by performing a thinning of the shock arrival process $\tilde N^i_t$, and is described in the next paragraphs. The increment $dN_t^i$ takes on values in $\{0,1\}$, and is \change{equal to one} if and only if bank $i$ \change{experiences a liquidity shortage at time $t$}. Finally, $0 < \phi_i < 1$ represents the \change{fraction of project value lost when such an event occurs.}


All \change{banks} in the system may experience liquidity shocks; if sufficiently large, these shocks \change{ adversely affect the value of their project}. A key feature of this paper is each bank's ability to control their project's susceptibility to such events -- by holding a greater supply of liquidity \change{ in the form of cash reserves}. In our model, this is represented through bank $i$'s ability to influence the intensity of the jump \change{process} $N_t^i$ that appears in~\eqref{eq:risky_asset_sde}.

A bank may hold a non-negative amount of their capital as \change{cash reserves, which do not accrue any interest}. \change{However, these cash reserves} are the \emph{only} source of \change{short-term} liquidity within the system, and are the sole tool by which a bank can hedge against the arrival of liquidity shocks. Namely, if a liquidity shock exceeds bank $i$'s cash reserves, their project experiences a \change{ substantial loss in value, which impacts both their own wealth and that of other banks investing in $i$'s project.} The jump increment $dN_t^i$ in~\eqref{eq:risky_asset_sde} equaling one represents the arrival of a shock that overwhelms bank $i$'s supply of cash, and its construction follows from a probabilistic model of liquidity shocks and a bank's \change{ cash reserves}.

Recall that our filtered probability space contains $n$ independent time-homogeneous Poisson processes $\tilde N_t^i$, with rates $\theta_i > 0$. At time $t$, if $\tilde N_t^i$ jumps, then bank $i$ experiences a liquidity shock of size $\zeta^i_t \cdot X^i_t$, where the random variable $\zeta^i_t$ is $\F_t$-measurable. We assume that these shocks are proportional to a bank's wealth, and each $\zeta^i_t$ is independently and identically distributed according to the cumulative distribution function (CDF) $F_i(\cdot)$.  For convenience of notation, the complementary CDF of $\zeta^i_t$ is denoted by $\bar F_i(\cdot) = 1-F_i(\cdot)$.

Let $c^i_t \ge 0$ denote the fraction of bank $i$'s capital held \change{ as cash reserves} at time $t$. When the shock to bank $i$ is larger than these reserves (i.e. $\zeta^i_t > c^i_t$), \change{ they undergo a liquidity shortage and} all investors in their project $i$ suffer an instantaneous return of $-\phi_i$ on their investment. In particular, if $c^i_t = 0$, then any liquidity shock to bank $i$ at time $t$, no matter how small, results in \change{losses}.

The jump process $N^i_t$ is constructed by independently flipping a coin at every \change{jump} arrival time of $\tilde N_t^i$, with failure probability given by $p^i_t=\bar F_i(c^i_t)$. Observe that $p^i_t = \P\left(\zeta^i_t > c^i_t  \left| d\tilde N^i_t = 1 \right.\right)$. \change{We let $dN_t^i = 1$ if and only if the flip is a failure}. It follows \change{from the thinning properties of Poisson processes} that the instantaneous rate at time $t$ of the process $N_t^i$ is equal to $\theta_i \bar F_i\left(c^i_t\right)$.\footnote{This result is a consequence of the thinning properties of Poisson processes. See, for instance, Theorem 1 in~\citet{lewis1979simulation}. If $\{c^i_t\}_{t\ge 0}$ is adapted (as we will require), then conditioned on time $t$, the previous jump process $\{N^i_s\}_{s\in [0,t]}$ has the desired rate function.} The second component of the rate, $\bar F_i(c^i_t) = \P(\zeta^i_t > c^i_t)$, is the probability that bank $i$ \change{ suffers a liquidity shortage}, conditioned on the time-$t$ arrival of a liquidity shock with CDF $F_i(\cdot)$. See Figure~\ref{fig:ex_liq_shock} for an illustration.

\change{
\begin{remark}
	In the main text, we assume that liquidity shocks smaller than the bank's cash supply (i.e. when $\zeta^i_t \le c^i_t$ and $d\tilde N^i_t = 1$) do not affect the bank's overall wealth. Appendix~\ref{subapp:liq_loss_shock} relaxes this assumption, and this extended model is still solvable up to first-order conditions, which must be solved numerically. As long as the expected shock size is small, there is no significant difference caused by this relaxation.
	\label{rmk2.1}
\end{remark}
}

Finally, we will require a few technical conditions on $F_i$:
\begin{assumption}
	\label{assn}
	We assume that each $F_i$ is absolutely continuous with respect to the Lebesgue measure. Its density is given by $f_i(\cdot) = F_i'(\cdot)$, which is assumed to be fully supported on $\R_+$, and monotonically decreasing (i.e. $f_i'(\cdot) < 0$).
\end{assumption}
If $f_i(\cdot)$ had compact support, then it would be possible for a bank's project to be riskless with large enough \change{ cash reserves}. Since the return of this project would be greater than the risk-free rate, this would lead to all other banks profiting infinitely by borrowing at the risk-free rate and investing in this riskless project. While the problem may remain analytically tractable, this outcome is not of practical interest. Our assumption that the density is monotonically decreasing will be used to establish uniqueness of the optimal controls.

\subsection{Dynamics of Wealth}
\label{ssec:model-wealth}

In this model, \change{ each bank may invest in any ther banks' projects.} Let $w^{ji}_t \ge 0$ denote the fraction of bank $j$'s capital invested in bank $i\neq j$'s project. The return \change{on} this \change{investment} is \change{described} by~\eqref{eq:risky_asset_sde}. Recall that $c^i_t$ equals the fraction of bank $i$'s wealth held as \change{ cash reserves}, which accumulates no return over time.

The final component influencing bank $i$'s wealth is their degree of investment in their own project, \change{which is equal to a given and fixed fraction of their current wealth}. Unlike \change{exposures to other banks' projects}, we assume that this quantity cannot be \change{optimized by the bank. For instance, it may be the case that bank $i$ is required by a regulator to be an investor in its own project.
In principle, we could imagine allowing bank $i$ to also control their exposure to their own project. However, doing so introduces the possibility for multiple equilibria, as shown in Appendix~\ref{subapp:endog_eta}.}

We will use $\frac{\eta_i}{\phi_i}$ to denote the fraction of bank $i$'s wealth that is invested in their own project. This implies that bank $i$ loses a constant fraction $\eta_i$ of its total wealth whenever \change{they suffer a liquidity shortage}. The parameter $\eta_i$ captures the severity of \change{losses on the distressed} bank -- in the extreme case of $\eta_i = 1$, a single liquidity shortage will wipe out bank $i$. Conversely, if $\eta_i = 0$, then bank $i$ has no stake in their project and is wholly unaffected by its dynamics. We will take $0 < \eta_i < 1$, away from the two extreme cases.

Due to self-financing constraints, the remaining $X_t^i(1 - c^i_t - \sum_{j\neq i} w^{ij}_t - \frac{\eta_i}{\phi_i})$ units of wealth are held as \change{long or short positions in the interest-bearing bond}. 
Putting together the dynamics for each component of bank $i$'s wealth, we see that $X_t^i$, follows 
\begin{equation}
	\frac{dX^i_t}{X^i_t} = \left(1 - c^i_t - \sum_{j\neq i}w^{ij}_t - \frac{\eta_i}{\phi_i} \right)\frac{dS^0_t}{S^0_t} + \sum_{j\neq i}w^{ij}_t \,\frac{dS^j_t}{S^j_t} + \frac{\eta_i}{\phi_i} \,\frac{dS^i_t}{S^i_t}\, , \quad i=1,\cdots,n.
\end{equation}
Using~\eqref{eq:risky_asset_sde} and the dynamics of $S^0_t$, we obtain the following simplified expression:
\begin{equation}
	\label{eq:wealth_sde}
	\frac{dX^i_t}{X^i_t} = \left( (1 - c^i_t) r  + \sum_{j\neq i}w_t^{ij} \mu_j + \frac{\eta_i }{\phi_i} \mu_i\right)dt - \sum_{j\neq i} w_t^{ij} \phi_j \,dN^j_t - \eta_i \,dN^i_t \, , \quad i=1,\cdots,n.
\end{equation}
A novel contribution of this paper is the control $c^i_t$; while there is no return accumulated by this capital held as cash, it serves to reduce the likelihood that bank $i$ \change{suffers a liquidity shortage}, which would cause them to lose a fraction $\eta_i$ of their wealth.

We say that $(c^i_\cdot, w^{i \cdot}_\cdot )$ is in $\A^i_{s,t}$, the set of admissible controls for bank $i$ between times $s$ and $t$, if it is adapted to the filtration $\F$ and satisfies both $c^i_u \in \R_+$ and $w^{ij}_u \in [0,\phi_j^{-1})$ for all $u\in[s,t]$ and $j\neq i$. The upper bound on $w^{ij}_u$ ensures that wealth will always remains positive.

In the first decentralized setting we study, all banks seek to maximize, over their cash reserves and investments in other banks' projects, their own expected utility of wealth at a common terminal time $T< \infty$: equal to $\E \left[ U_i(X_T^i)\right]$. As is relatively standard in the literature, a bank's utility function $U_i \in \mathcal{C}^\infty (\R_+)$ is assumed to have constant relative risk aversion of $\gamma_i>0$, which yields the following family of utility functions:
\begin{equation}
	\label{eq:utility}
	U_i(x) = \begin{cases} 
		\frac{x^{1-\gamma_i}}{1-\gamma_i} & \gamma_i > 0, \gamma_i \neq1\\
		\log x & \gamma_i = 1.
	\end{cases}
\end{equation}
For generality, we will allow banks to have heterogeneous risk-aversion coefficients $\gamma_i$ in the decentralized network formation of Section \ref{ssec:opts-dec}, while in Section \ref{ssec:opts-cen} a central planner will require all banks to have logarithmic utility.

\section{Decentralized and Centralized Financial Networks}
\label{sec:opts}

We consider two distinct organizations of the financial system. In the first, banks operate only in their self-interest -- seeking to maximize their own expected terminal utility. We call this the \emph{decentralized} setting, as there is no coordination between banks. Instead, each bank's optimal allocation reflects their best response to the others' decisions. On the other hand, the \emph{centralized} setting in Section~\ref{ssec:opts-cen} will consider the perspective of a single social planner who determines all banks' allocations to maximize welfare -- as measured by the sum of all banks' utilities.

Both allocations are important to consider. The decentralized optimum reflects a game-theoretic or market-mediated equilibrium of the financial system, where each bank chooses their controls optimally given all others' actions. Therefore, from the perspective of individual banks this is a stable allocation. In contrast, the centralized optimum reflects the maximum total welfare that could exist in the financial system if banks' capital allocations were coordinated by a central planner. We will study the differences between these two optimal allocations, which reflect the severity of our model's externality, in Section~\ref{sec:diff}. Finally, the optimal allocations yield a financial network of interest, whose edges represent direct exposures between banks.

\subsection{\change{Financial Frictions}}
\label{ssec:friction}

\change{
	It is essential to note that the fundamental financial friction in this model is that markets are incomplete. Bank $i$ is unable to fully hedge the risk associated with another bank $j$'s distress process $N^j_t$. A bank may only partially insure against this risk by holding capital in the risk-free bond. In contrast, however, bank $i$ can control the risk of their \textit{own} jump process $N^i_t$ by choosing to hold cash reserves. However, bank $i$'s motivation for doing so is entirely self-preservational. They are not punished for the negative externality inflicted on other firms in the system because: i) there is no contract between banks allowing them to transfer the risk of a jump, and ii) bank $i$ is itself never responsible for losses to external investors. Given that such a friction exists, it is not a novel result that the equilibrium obtained by individual agents operating selfishly is distinct from the `welfare-maximizing' equilibrium obtained by a social planner.
}

\subsection{Decentralized Network}
\label{ssec:opts-dec}

Let us define the value function of bank $i$ to be the supremum over all admissible controls of their expected utility at the terminal time: 
\begin{equation}
	\label{eq:dec_val}
	V_i(t,x) = \sup_{(c^i_\cdot, w_\cdot^{i\cdot}) \in \A^i_{t,T}} \E \left[ U_i(X_T^i) \big | X_t^i = x \right].
\end{equation}
Recall that $A^i_{t,T}$ denotes the set of admissible controls for bank $i$ -- defined in Section~\ref{ssec:model-wealth}. Note also that each bank simultaneously solves their own optimization problem, and therefore the value function of bank $i$ in~\eqref{eq:dec_val} may depend on the capital allocations chosen by other banks within the system. In this sense, \change{ the decentralized setting} reflects a stable game-theoretic equilibrium.

Our first result derives the non-local dynamic programming (often referred to as the Hamilton-Jacobi-Bellman or HJB) equation for the value function under regularity.

\begin{prop}
	\label{prop:dec_opt_hjb}
	If there exist optimal controls and the value function in~\eqref{eq:dec_val} is $\mathcal{C}^{1,1}([0,T),\mathbb{R}_+)$, then it solves the following non-local partial differential equation (PDE):
	\begin{equation}
		\label{eq:dec_hjb}
		\begin{alignedat}{23}
			 0&= \partial_t V_i  + \sup_{c_i,  w_{i\cdot}} \Bigg\{ 	&&\left[\left(1-c_i\right)r  + \sum_{j\neq i}w_{ij} \mu_j  + \frac{\eta_i \mu_i}{\phi_i}\right]x  \partial_x V_i + \theta_i \bar F_i(c_i)\Big[ V_i(t,x(1 - \eta_i)) - V_i \Big] \\
			 & && + \sum_{ j\neq i} \theta_j \bar F_j(c_j)\Big[ 	V_i(t,x(1-\phi_j w_{ij})) - V_i \Big] \Bigg\},
		\end{alignedat}
	\end{equation}
	with terminal condition $V_i(T,x) = U_i(x)$. Where unspecified, the value function and its derivatives are evaluated at $(t,x)$.
\end{prop}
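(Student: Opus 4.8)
The plan is the standard dynamic programming derivation of an HJB equation for a controlled pure-jump process; what makes it slightly lighter than usual is that~\eqref{eq:wealth_sde} has no diffusive part, so only the first $x$-derivative of $V_i$ ever enters the generator --- which is precisely why $\mathcal{C}^{1,1}$ regularity is the natural hypothesis. \textbf{Step 1 (dynamic programming principle).} First I would invoke the DPP: for $0\le t\le t+h\le T$,
\[
	V_i(t,x) \;=\; \sup_{(c^i_\cdot,w^{i\cdot}_\cdot)\in\A^i_{t,t+h}} \E\!\left[V_i\big(t+h,X^i_{t+h}\big)\,\big|\,X^i_t=x\right],
\]
citing a standard reference on optimal control of jump processes. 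Under any admissible control the wealth process stays strictly positive --- since $w^{ij}_t<\phi_j^{-1}$ and $\eta_i<1$, each jump only scales $X^i$ by a factor in $(0,1)$ --- and on the finite horizon $[0,T]$ it has moments of every order, because the drift coefficient in~\eqref{eq:wealth_sde} is bounded once the controls are and the jump part is of bounded variation; this makes all expectations below finite.

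\textbf{Step 2 (Dynkin's formula).} Fix an admissible control and let $X^i$ be the associated wealth, a finite-variation semimartingale whose only jumps come from the $N^j$. The change-of-variables formula applied to $V_i\in\mathcal{C}^{1,1}$ gives
\[
	V_i\big(t+h,X^i_{t+h}\big) = V_i(t,x) + \int_t^{t+h}\!\Big(\partial_s V_i + \mathcal{L}^{c^i_s,\,w^{i\cdot}_s}V_i\Big)\!\big(s,X^i_{s}\big)\,ds \;+\; M_{t+h}-M_t,
\]
where $\mathcal{L}^{c_i,w_{i\cdot}}$ is exactly the operator appearing under the supremum in~\eqref{eq:dec_hjb}, namely
\[
	\mathcal{L}^{c_i,w_{i\cdot}}V_i(t,x) = \Big[(1-c_i)r + \textstyle\sum_{j\neq i}w_{ij}\mu_j + \tfrac{\eta_i\mu_i}{\phi_i}\Big]x\,\partial_x V_i + \theta_i\bar F_i(c_i)\big[V_i(t,x(1-\eta_i))-V_i\big] + \textstyle\sum_{j\neq i}\theta_j\bar F_j(c_j)\big[V_i(t,x(1-\phi_j w_{ij}))-V_i\big],
\]
and $M$ is the local martingale obtained by compensating the jump terms with the controlled intensities $\theta_j\bar F_j(c^j_s)$ coming from the thinning construction of Section~\ref{sec:2.1}. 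Using the uniform $L^p$-bounds from Step 1 and continuity of $V_i$, $M$ is a true (not merely local) martingale on $[t,T]$.

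\textbf{Step 3 (passing to the limit).} Taking expectations removes $M$. For an arbitrary \emph{constant} admissible control $(c_i,w_{i\cdot})$, the DPP gives $\E[V_i(t+h,X^i_{t+h})]\le V_i(t,x)$, hence $\E\int_t^{t+h}(\partial_s V_i+\mathcal{L}^{c_i,w_{i\cdot}}V_i)(s,X^i_s)\,ds\le 0$; dividing by $h$, letting $h\downarrow0$, and using dominated convergence together with continuity of the integrand yields $\partial_t V_i(t,x)+\mathcal{L}^{c_i,w_{i\cdot}}V_i(t,x)\le 0$, so taking the supremum over $(c_i,w_{i\cdot})$ gives ``$\le$'' in~\eqref{eq:dec_hjb}. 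For the reverse inequality I would use the assumed optimal control, for which the DPP holds with equality, so $\E\int_t^{t+h}(\partial_s V_i+\mathcal{L}^{(c^i_s)^\star,(w^{i\cdot}_s)^\star}V_i)\,ds=0$ for every $h$; letting $h\downarrow0$ at a Lebesgue point of the optimal control and bounding the integrand above by $\partial_s V_i+\sup_{c_i,w_{i\cdot}}\mathcal{L}^{c_i,w_{i\cdot}}V_i$ shows the supremum side of~\eqref{eq:dec_hjb} is $\ge0$ a.e., hence everywhere by continuity. The two inequalities give the PDE, and the terminal condition is immediate since $V_i(T,x)=\E[U_i(X^i_T)\mid X^i_T=x]=U_i(x)$.

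\textbf{Main obstacle.} The only genuine technical points are the rigorous justification of the DPP and of the true-martingale property of $M$ when the intensities $\theta_j\bar F_j(c^j_\cdot)$ are themselves controlled --- so the $N^j$ are no longer Poisson under the prevailing control --- together with the interchange of $h\downarrow0$ with expectation. All of these are handled by the boundedness of the controls' effect on the drift in~\eqref{eq:wealth_sde} and the finiteness of the horizon, which deliver uniform $L^p$ control of $X^i$ on $[0,T]$. Everything else is bookkeeping, and, as noted, the absence of a Brownian term is what keeps $\mathcal{C}^{1,1}$ sufficient.
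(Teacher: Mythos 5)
Your overall route is the same as the paper's: dynamic programming principle, It\^o/Dynkin for the pure-jump wealth process with the controlled-intensity generator, compensation of the jumps, then division by the time increment and dominated convergence, with the terminal condition read off directly. The one place where your argument has a genuine gap is Step 2, where you assert that the compensated jump term $M$ is a true martingale ``using the uniform $L^p$-bounds from Step 1 and continuity of $V_i$.'' Two things go wrong with that justification. First, the premise of Step 1 is shaky: admissible controls are not bounded ($c^i_t$ ranges over all of $\R_+$, and $w^{ij}_t$ may approach $\phi_j^{-1}$), so the drift in~\eqref{eq:wealth_sde} is not bounded and the jump factors $1-\phi_j w^{ij}_t$ are not bounded away from zero; positive moments of $X^i$ are still controlled, but nothing prevents $X^i$ from getting arbitrarily close to $0$. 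Second, and more importantly, moment bounds on $X^i$ together with mere continuity of $V_i$ do not give integrability of the integrands $V_i\big(s,X^i_{s-}(1-\phi_j w^{ij}_s)\big)-V_i\big(s,X^i_{s-}\big)$: the proposition only assumes $V_i\in\mathcal{C}^{1,1}$ with no growth condition, and the natural candidates (log or negative-power CRRA) blow up as $x\downarrow 0$, precisely the region your moment bounds do not control. So the true-martingale claim, as justified, does not stand.

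The paper closes exactly this hole by localization: it works on $[t,\tau]$ with
$\tau=(t+\delta)\wedge\inf\{s: X^i_s\le\epsilon \text{ or } X^i_s\ge \epsilon^{-1}\}$,
so that $X^i$ is confined to a compact subset of $(0,\infty)$, the jumps of $V_i$ along the path are bounded, the stochastic integrals against the compensated processes have zero expectation with no growth assumptions at all, and only afterwards sends $\delta\to 0$ (for small $\delta,\epsilon$ the stopping time is $t+\delta$) before applying dominated convergence. If you replace your global-martingale assertion with this stopping-time argument (or impose and verify an explicit growth/negative-moment condition), the rest of your proof --- including your cleaner two-sided inequality argument using arbitrary constant controls for ``$\le$'' and the assumed optimizer for ``$\ge$,'' which the paper compresses into a single supremum step --- goes through as written.
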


The proof is contained in Appendix~\ref{ssec:pfs-dec}, and follows from applying It\^{o}'s formula to the value function between $t$ and an appropriately defined sequence of stopping times. The assumption that optimal controls exist is verified by a subsequent result in Corollary~\ref{cor:dec_verification}.

Fortunately, it is possible to find a separable solution to~\eqref{eq:dec_hjb}, and explicit solutions for the optimal allocations. It is convenient to introduce the following notation:
\begin{equation}
	\label{eq:gamma}
	\Gamma(\delta;\gamma) = 
	\begin{cases}
	\frac{1-(1-\delta)^{1-\gamma}}{1-\gamma} & \gamma>0, \gamma\neq1\\
	-\log(1-\delta) & \gamma=1,
	\end{cases}
\end{equation}
for any $\delta \in[0,1)$. Within this range, we note that $\Gamma \ge 0$. There is a natural interpretation of this object; for a utility function of the form in~\eqref{eq:utility}, $\Gamma(\delta; \gamma)$ is proportional to the loss in utility caused by losing a fraction $\delta$ of one's wealth. More precisely, $\Gamma(\delta; \gamma_i) = x^{\gamma_i-1} \left[U_i(x) - U_i(x(1-\delta))\right] $ for any $x > 0$.

We can now state our second main result, which presents a solution to~\eqref{eq:dec_hjb} and computes the optimal decentralized allocation of capital for bank $i$.

\begin{prop}
	\label{prop:dec_hjb_soln}
	The unique optimal \change{ cash reserves and project investment} amounts for the maximization problem in~\eqref{eq:dec_hjb} are given by
	\begin{equation}
		\label{sol:dec}
		\begin{alignedat}{3}
			\hat c_i  &= \begin{cases}
				f_i^{-1}\left(\frac{r}{\theta_i\Gamma(\eta_i;\gamma_i)}\right) 	& 	\mbox{ if } \frac{r}{\theta_i\Gamma(\eta_i;\gamma_i)} \le f_i(0)
				\\
				0 & \text{ otherwise}
			\end{cases}	  \quad &&  \forall i = 1,\cdots,n
			\\
			\hat w_{ij} &= \begin{cases}
				\frac{1}{\phi_j}\left(1-\left(\frac{\phi_j\theta_j 		\bar F_j(\hat c_j)}{\mu_j}\right)^{1/\gamma_i}\right) & \text{ if } \frac{\phi_j\theta_j \bar F_j(\hat c_j)}{\mu_j} < 1 
				\\
				0 & \text{ otherwise.}
			\end{cases}   \quad &&  \forall j\neq i.
		\end{alignedat}
	\end{equation} 
	Furthermore, with the notation
	\begin{equation*}
		J_i^* = \frac{\eta_i \mu_i}{\phi_i} + (1 - \hat c_i )r - \theta_i 	\bar F_i(\hat c_i) \Gamma(\eta_i;\gamma_i) + \sum_{j\neq i}\hat w_{ij}\mu_j   - \theta_j \bar F_j(\hat c_j)\Gamma(\phi_j \hat w_{ij};\gamma_i),
	\end{equation*}
	the following are explicit solutions to~\eqref{eq:dec_hjb}: 
	\begin{enumerate}[(i)]
		\item if $\gamma_i = 1$ and 
		$U_i(x) = \log x$, we have $V_i(t,x) = g_i(t) + \log x$, where 
		$g_i(t) = (T-t)J_i^*$
		
		\item for $\gamma_i\neq1$ and $U_i(x) = \frac{x^{1-\gamma_i}}{1-\gamma_i}$, then we have $V_i(t,x) = g_i(t)U_i(x)$, where $g_i = e^{(1-\gamma_i)(T-t)J_i^*}$.
	\end{enumerate}
\end{prop}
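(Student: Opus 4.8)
The plan is to verify the two proposed separable forms by direct substitution into the HJB equation~\eqref{eq:dec_hjb}, reading off the optimal controls as the argmax of the resulting pointwise (in $x$) maximization, and then solving a linear ODE in $t$. First I would record the elementary CRRA identities: for the utilities~\eqref{eq:utility} one has $x\,\partial_x U_i(x) = (1-\gamma_i)U_i(x)$ (read as $x\,\partial_x\log x = 1$ when $\gamma_i=1$) and, by the very interpretation of $\Gamma$ given after~\eqref{eq:gamma}, $U_i(x) - U_i\big(x(1-\delta)\big) = x^{1-\gamma_i}\,\Gamma(\delta;\gamma_i)$ for all $\delta\in[0,1)$. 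I would also use $\partial_\delta\Gamma(\delta;\gamma) = (1-\delta)^{-\gamma}$ and $\partial_\delta^2\Gamma(\delta;\gamma) = \gamma(1-\delta)^{-\gamma-1}$, both valid uniformly in $\gamma>0$.

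Substituting the ansatz $V_i(t,x) = g_i(t)U_i(x)$ (for $\gamma_i\neq1$) or $V_i(t,x) = g_i(t) + \log x$ (for $\gamma_i=1$) into~\eqref{eq:dec_hjb} and applying these identities, every $x$-dependent factor cancels: for $\gamma_i\neq1$, dividing through by $x^{1-\gamma_i}>0$ the equation reduces to
\begin{equation*}
	0 = \frac{g_i'(t)}{1-\gamma_i} + g_i(t)\sup_{c_i,\,w_{i\cdot}}\Big\{(1-c_i)r + \tfrac{\eta_i\mu_i}{\phi_i} + \sum_{j\neq i}w_{ij}\mu_j - \theta_i\bar F_i(c_i)\Gamma(\eta_i;\gamma_i) - \sum_{j\neq i}\theta_j\bar F_j(c_j)\Gamma(\phi_j w_{ij};\gamma_i)\Big\},
\end{equation*}
and for $\gamma_i=1$ to the same statement with $g_i'(t)$ in place of $\frac{g_i'(t)}{1-\gamma_i}$ and with the $g_i(t)$ prefactor absent. (We look for a solution with $g_i>0$, which the final formula satisfies a posteriori, so the supremum is genuinely a supremum and not an infimum.) The key structural point is that bank $i$ controls only $c_i$ and $(w_{ij})_{j\neq i}$ — the quantities $\theta_j\bar F_j(c_j)$, $j\neq i$, being fixed by the other banks — so the bracketed expression is $x$-free and decouples into one scalar minimization in $c_i$ and $n-1$ independent scalar maximizations in the $w_{ij}$.

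Next I would solve these scalar problems via first-order conditions. Minimizing $c_i\mapsto c_i r + \theta_i\bar F_i(c_i)\Gamma(\eta_i;\gamma_i)$ over $c_i\ge0$: the derivative is $r - \theta_i f_i(c_i)\Gamma(\eta_i;\gamma_i)$ and the second derivative is $-\theta_i f_i'(c_i)\Gamma(\eta_i;\gamma_i)>0$ by Assumption~\ref{assn} (this is exactly where the monotone-decreasing density enters), so the objective is strictly convex; hence either the unique stationary point $\hat c_i = f_i^{-1}\!\big(\tfrac{r}{\theta_i\Gamma(\eta_i;\gamma_i)}\big)$ is feasible, which (since $f_i$ is decreasing) happens precisely when $\tfrac{r}{\theta_i\Gamma(\eta_i;\gamma_i)}\le f_i(0)$, or else the derivative is nonnegative on all of $\mathbb{R}_+$ and the minimizer is the corner $\hat c_i=0$. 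Similarly, maximizing $w_{ij}\mapsto w_{ij}\mu_j - \theta_j\bar F_j(\hat c_j)\Gamma(\phi_j w_{ij};\gamma_i)$ over $w_{ij}\in[0,\phi_j^{-1})$: the second derivative is $-\theta_j\bar F_j(\hat c_j)\phi_j^2\gamma_i(1-\phi_j w_{ij})^{-\gamma_i-1}<0$, so the objective is strictly concave, and setting the derivative $\mu_j - \theta_j\bar F_j(\hat c_j)\phi_j(1-\phi_j w_{ij})^{-\gamma_i}$ to zero yields the stated $\hat w_{ij}$, which lies in $(0,\phi_j^{-1})$ exactly when $\tfrac{\phi_j\theta_j\bar F_j(\hat c_j)}{\mu_j}<1$ (note $\bar F_j(\hat c_j)>0$ always, since $f_j$ is fully supported), and is $0$ otherwise. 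Strict convexity/concavity gives uniqueness of the optimizers; since $\hat c_j$ depends on no other bank's control, the whole profile $(\hat c_i,\hat w_{i\cdot})_i$ is well-defined, with each $\hat w_{ij}$ built from the already-determined $\hat c_j$.

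Finally, evaluating the bracketed expression at $(\hat c_i,\hat w_{i\cdot})$ gives precisely the constant $J_i^*$, so the reduced equation becomes the linear ODE $g_i'(t) = -(1-\gamma_i)J_i^* g_i(t)$ (resp.\ $g_i'(t) = -J_i^*$ when $\gamma_i=1$), and the terminal condition $V_i(T,x)=U_i(x)$ forces $g_i(T)=1$ (resp.\ $g_i(T)=0$); integrating yields $g_i(t) = e^{(1-\gamma_i)(T-t)J_i^*}$ (resp.\ $g_i(t)=(T-t)J_i^*$), positive in the power-utility case as required. The only genuinely delicate step is the careful treatment of the corner cases — checking that whenever the interior stationary point is infeasible the relevant objective is monotone on the entire feasible interval, so that the boundary point $\hat c_i=0$ or $\hat w_{ij}=0$ is indeed optimal — but this follows immediately from the monotonicity of the derivatives established above using Assumption~\ref{assn} and $\gamma_i>0$. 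That these explicit solutions of~\eqref{eq:dec_hjb} actually coincide with the value function~\eqref{eq:dec_val} is a separate verification argument, deferred to Corollary~\ref{cor:dec_verification}.
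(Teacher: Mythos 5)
Your proposal is correct and follows essentially the same route as the paper's proof: substitute the separable ansatz to cancel the $x$-dependence, exploit the additive separability of the resulting maximization, use Assumption~\ref{assn} (monotone decreasing density, full support) to get strict concavity/convexity and hence unique optimizers via first-order conditions with the corner cases handled by monotonicity of the derivative, then solve the linear ODE for $g_i$ with the terminal condition. The only differences are cosmetic (framing the $c_i$ problem as a convex minimization rather than a concave maximization, and stating the CRRA/$\Gamma$ identities up front), and you correctly defer the identification with the value function to Corollary~\ref{cor:dec_verification}.
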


The proof, which is again given in Appendix~\ref{ssec:pfs-dec}, follows from plugging in the proposed solution, simplifying, and then analyzing the necessary and sufficient conditions for optimality of the resulting maximization problem. A key observation in this proof is that the maximization problem in~\eqref{eq:dec_hjb} is additively separable between each of the controls $c_i, w_{i\cdot}$.

\begin{remark}
	The optimal interbank investment $\hat w_{ij}$ depends explicitly on $\hat c_j$ through the function $\bar F_j(\hat c_j)$. Moreover, for any choice of bank $j$'s cash \change{reserves}, there exists a corresponding optimal value of $w_{ij}$. In a game-theoretic sense, this would be bank $i$'s best response to $j$'s decision. However, bank $j$'s optimal value of $\hat c_j$ depends only on fixed model parameters. This ensures that $\hat c_j$ is bank $j$'s best response to \emph{any} decisions made by the other banks, and is therefore a dominant strategy. Hence, the `game' is trivialized -- one can compute every other banks' optimal $\hat c_j$, after which the corresponding $\hat w_{ij}$'s can be easily found.
\end{remark}

The final result of this subsection verifies that the solution given in Proposition~\ref{prop:dec_hjb_soln} is indeed equal to the value function.

\begin{corollary}
	\label{cor:dec_verification}
	The value function in~\eqref{eq:dec_val} is given by
	\begin{equation*}
		V_i(t,x) = \begin{cases}  g_i(t) + \log x &\text{ if }\gamma_i = 1 \\ g_i(t)\frac{x^{1-\gamma_i}}{1-\gamma_i} &\text{ otherwise,}\end{cases}
	\end{equation*}
	where $g_i(t)$ and the optimal controls are given in Prop.~\ref{prop:dec_hjb_soln}.
\end{corollary}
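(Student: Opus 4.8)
The plan is to establish a standard verification argument. Let $\tilde V_i(t,x)$ denote the candidate function exhibited in Proposition~\ref{prop:dec_hjb_soln}, i.e. $\tilde V_i(t,x) = g_i(t) + \log x$ when $\gamma_i = 1$ and $\tilde V_i(t,x) = g_i(t)\tfrac{x^{1-\gamma_i}}{1-\gamma_i}$ otherwise, with $g_i$ as given. By construction $\tilde V_i$ is $\mathcal{C}^{1,1}([0,T),\mathbb{R}_+)$, solves the HJB equation~\eqref{eq:dec_hjb}, and satisfies the terminal condition $\tilde V_i(T,x) = U_i(x)$. I would first show $\tilde V_i(t,x) \ge \E[U_i(X_T^i) \mid X_t^i = x]$ for \emph{any} admissible control, and then exhibit the control from~\eqref{sol:dec} as one attaining equality; together these give $V_i = \tilde V_i$.

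For the inequality, fix an admissible $(c^i_\cdot, w^{i\cdot}_\cdot) \in \A^i_{t,T}$ with associated wealth process $X^i$ started at $x$ at time $t$. I would apply It\^o's formula for jump processes to $s \mapsto \tilde V_i(s, X^i_s)$ between $t$ and $T\wedge \tau_m$, where $\{\tau_m\}$ is a localizing sequence of stopping times (e.g. exit times of $X^i$ from $[1/m, m]$) introduced to control the local martingale part and to stay in the region where $\tilde V_i$ is smooth. The drift term produced by It\^o's formula is exactly the expression inside the braces of~\eqref{eq:dec_hjb} evaluated at the chosen controls, plus $\partial_t \tilde V_i$; since $\tilde V_i$ solves the HJB equation, this quantity is $\le 0$ pointwise (the supremum over controls is zero, so any particular control gives something nonpositive). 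Taking expectations kills the compensated-jump martingale terms and yields $\E[\tilde V_i(T\wedge\tau_m, X^i_{T\wedge\tau_m}) \mid X^i_t = x] \le \tilde V_i(t,x)$. Then I would let $m \to \infty$: using that $\tau_m \to T$ a.s. (wealth stays positive and finite under admissible controls, guaranteed by the constraint $w^{ij} \in [0,\phi_j^{-1})$) together with a uniform integrability / dominated convergence argument — here the multiplicative structure $X^i_s = x \cdot \mathcal{E}_s$ with bounded drift coefficients and jumps bounded by factors in $(0,1)$ gives $\E[\sup_{s\le T}(X^i_s)^{1-\gamma_i}] < \infty$ and the analogous bound for $\log$ — one recovers $\E[U_i(X^i_T) \mid X^i_t = x] \le \tilde V_i(t,x)$. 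Taking the supremum over admissible controls gives $V_i \le \tilde V_i$.

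For the reverse inequality, I would verify that the feedback controls $(\hat c_i, \hat w_{i\cdot})$ of~\eqref{sol:dec} are admissible (they are constant in time, $\hat c_i \ge 0$, and $\hat w_{ij} \in [0,\phi_j^{-1})$ by inspection of the formula), and that along the corresponding wealth process the It\^o expansion above holds with equality in the drift term, since these controls attain the supremum in~\eqref{eq:dec_hjb}. The same localization-and-limit argument, now with equalities throughout, yields $\E[U_i(X^i_T) \mid X^i_t = x] = \tilde V_i(t,x)$, hence $V_i(t,x) \ge \tilde V_i(t,x)$. Combining the two bounds gives $V_i = \tilde V_i$, which is the claim; and as a byproduct the controls in~\eqref{sol:dec} are genuinely optimal, which also discharges the ``if there exist optimal controls'' hypothesis invoked in Proposition~\ref{prop:dec_opt_hjb}.

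The main obstacle is the integrability/limiting step: justifying the interchange of limit and expectation as $m\to\infty$ and confirming that the localized martingale terms vanish. This requires a moment bound on $\sup_{s\in[t,T]} X^i_s$ (and its relevant power), which follows from writing $X^i$ as a stochastic exponential with uniformly bounded drift and with jump sizes lying in a compact subset of $(0,1)$ — so this is routine but must be done carefully, especially in treating the heterogeneous risk-aversion exponents $1-\gamma_i$ (both signs) and the logarithmic case uniformly.
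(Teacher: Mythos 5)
Your proposal is correct and follows essentially the same route as the paper: a standard verification argument applying It\^{o}'s formula to the candidate solution along a localizing sequence of stopping times, using that the HJB supremum gives a nonpositive drift for arbitrary admissible controls and zero drift at the controls of~\eqref{sol:dec}, then passing to the limit by dominated convergence to obtain the two inequalities $V_i \le \tilde V_i$ and $V_i \ge \tilde V_i$. Your explicit attention to the moment bounds justifying the limit is a slightly more careful treatment of a step the paper handles briefly, but the argument is the same.
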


The proof in Appendix~\ref{ssec:pfs-dec} uses a verification argument. We show that any solution to~\eqref{eq:dec_hjb} that is once continuously differentiable in both time and space is equal to the value function. Since the proposed solutions in Proposition~\ref{prop:dec_hjb_soln} satisfy this regularity condition, we conclude the desired claim. Finally, we note that this result verifies the assumption made in Proposition~\ref{prop:dec_opt_hjb} regarding the existence of optimal controls.

\subsubsection*{Analysis of Decentralized Equilibrium}

\change{In analyzing our results, we first} note that each unit of additional cash provides some quantifiable marginal benefit by lowering the risk of a \change{liquidity shortage.} 
\change{The quantity $\Gamma(\eta_i,\gamma_i)$, defined in ~\eqref{eq:gamma}, is closely related to the loss in utility when a bank with risk aversion $\gamma_i$ loses a fraction $\eta_i$ of their wealth. However, the instantaneous hazard rate of an event in which $i$ suffers this loss in utility is controlled by $i$'s cash reserves through the intensity $\theta_i \bar F_i(c_i)$ of the thinned Poisson process.}
From the proof of Proposition~\ref{prop:dec_hjb_soln}, the optimal choice of $\hat c_i$ will solve the following:
$$\max_{c_i \ge 0} \big\{ -rc_i - \theta_i \bar F_i(c_i) \Gamma(\eta_i;\gamma_i) \big\},$$
which indicates that the resulting $\hat c_i$ achieves the optimal tradeoff between the cost of \change{maintaining liquidity reserves versus the instantaneous expected losses in utility due to a shortage}.
\change{In particular, the optimal $\hat c_i$ from the first-order condition of $r=-\theta_i f_i(\hat c_i) \Gamma(\eta_i;\gamma_i)$ ensures that the marginal cost of holding a single additional unit of liquidity $(r)$ equals the marginal reduction in expected losses provided by that unit of liquidity $\left(\theta_i f_i(\hat c_i) \Gamma(\eta_i;\gamma_i)\right)$.}
In the extreme case where $r$ is large, it may be too costly (relative to the potential losses) for a bank to hold any amount of \change{cash reserves}, leading to $\hat c_i = 0$.

\change{With explicit solutions for the optimal allocations given in Proposition~\ref{prop:dec_hjb_soln}, it is possible to analyze their dependence on the exogenous parameters of the system. Due to the monotonicity assumption on $f_i(\cdot)$ in Assumption~\ref{assn}, we can see that the optimal cash reserves $\hat c_i$ are decreasing in the liquidity risk premium $r$: the greater the opportunity cost of holding liquidity, the lower cash reserves are held by banks. We also see that $\Gamma(\eta_i; \gamma_i)$ is increasing in $\eta_i$. It follows that bank $i$'s optimal cash reserves, $\hat c_i$ increase with the fraction of wealth they stand to lose if a liquidity shortage occurs. In Section~\ref{ssec:res-replication} we will discuss how a regulator can achieve a more socially-optimal allocation by stipulating new values for $\eta_i$, each bank's degree of self-investment.}

In addition, the optimal \change{project investment by bank $i$ in project $j$}, $\hat w_{ij}$, depends on $i$ only through their risk aversion parameter $\gamma_i$. Hence, if $\gamma_i = \gamma_k$ then $\hat w_{ij} = \hat w_{kj}$. Although the fractional amount of these investments are equal, the nominal amounts may differ \change{because banks $i$ and $k$ may have different levels of wealth.} However, the optimal investment amount is decreasing in the investing bank's risk aversion coefficient $\gamma_i$, as we might expect.

The quantity $\frac{\mu_j}{\barfix[0.2ex]{\phi_j\theta_j \bar F_j(\hat c_j)}}$, which appears in~\eqref{sol:dec} for $\hat w_{ij}$, is similar to the well-known Sharpe ratio. However, there is one main difference: the variance of bank $j$'s project returns can be controlled by $j$ itself. Nonetheless, notice that the optimal investment $\hat w_{ij}$ grows with this `Sharpe-like' ratio. 
In particular, note that if $\frac{\mu_j}{\barfix[0.2ex]{\phi_j\theta_j \bar F_j(\hat c_j)}} < 1$, bank $i$ would prefer to short project $j$. Since this is not permitted in our model, bank $i$ resorts to an investment of zero. 
As a direct result, notice that network's sparsity structure is dictated by this quantity -- a project $j$ has external investors if and only if $\frac{\mu_j}{\barfix[0.2ex]{\phi_j\theta_j \bar F_j(\hat c_j)}} > 1$. This implies a `core-periphery' network structure, such that \change{only a subset of banks' projects are invested in} -- an example of such a financial network can be seen in Figure~\ref{fig:dec_nwk}.
\begin{figure}[htbp]
	\centering
	\includegraphics[width = 0.5\linewidth]{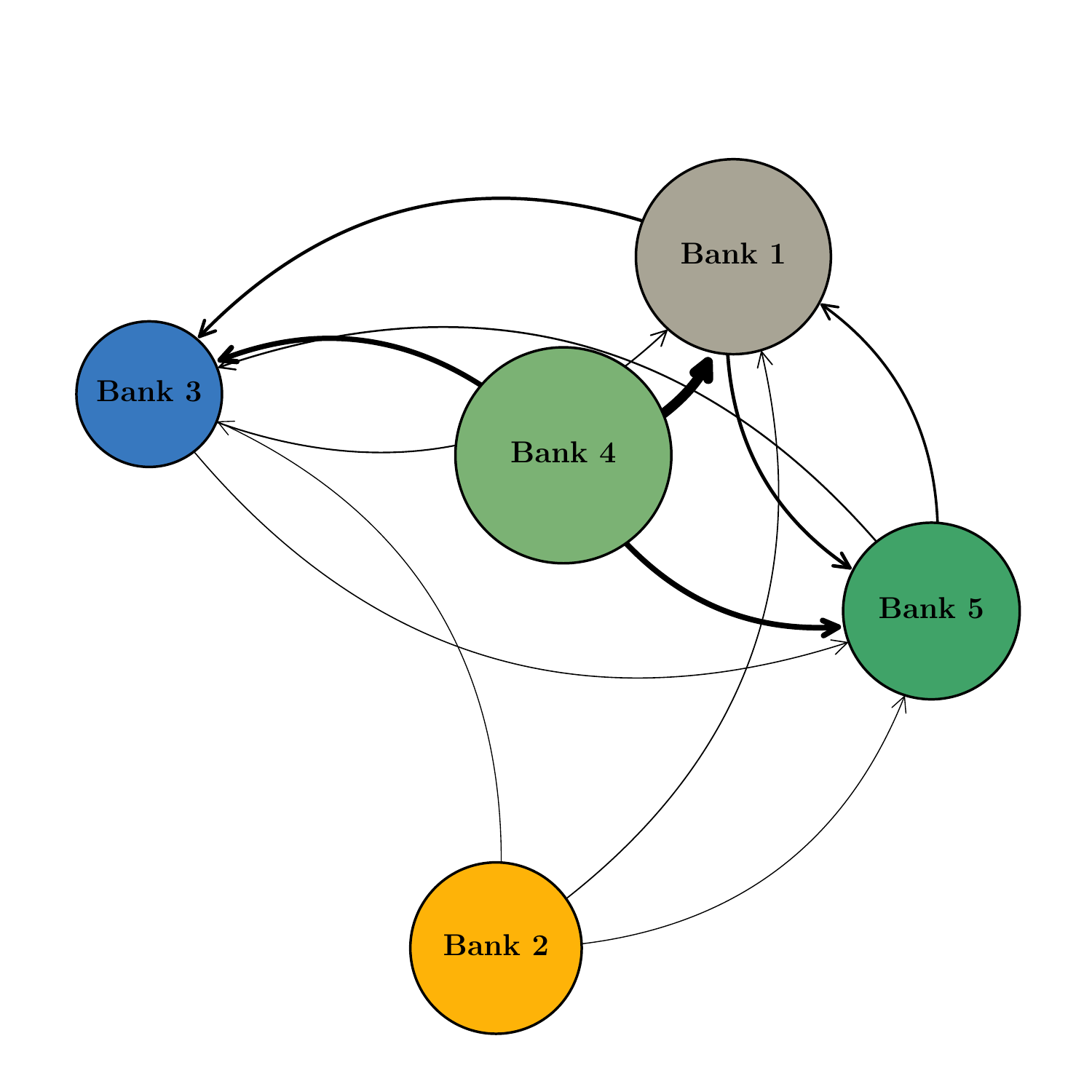}
	\captionsetup{width=.8\linewidth, font = small, justification=justified}
	\caption{Sample financial network generated by the decentralized optimum. Edges point from investing banks to projects, and their width indicates the nominal size of the exposures. Node size indicates to total capitalization.}
	\label{fig:dec_nwk}
\end{figure}
\begin{table}[h]
	\centering
	\captionsetup{width=.8\linewidth, font = small, justification=justified}
	\caption{Parameters for the financial network in Figure~\ref{fig:dec_nwk}. The risk-free rate is $r = 5\%$. Relevant code can be found \href{https://github.com/drigobon/rigobon-sircar-2022}{here}.}
	\label{tab:dec_nwk_params}
	\begin{tabular}{ c | c c c c c c }
		\toprule
		Bank & $\mu_i$ (\%) & $\phi_i$ & $\eta_i$ &  $\theta_i$ & $\gamma_i$ & $\bar F_i(x)$ \\
		\midrule
		1 & 0.9 & 0.2  & 0.5  & 0.04 & 0.5 & $e^{-0.5 x}$\\  
		2 & 1 &  0.3  &  0.6  &  0.08 & 1.7  & $e^{-0.6 x}$\\
		3 & 1.5 & 0.9  & 0.7  &  0.12 &  1 & $e^{-0.7 x}$\\
		4 & 1.3 & 0.6  & 0.4  & 0.05 & 0.3  & $e^{-2 x}$\\
		5 & 1.3 & 0.82 &  0.9  &  0.02 &  0.87 & $e^{-2.4 x}$\\
		\bottomrule
	\end{tabular}
\end{table}

\change{
	The resulting `core-periphery' structure reflects the fact that in equilibrium, only a subset of banks' projects have expected \textit{excess} return greater than zero. The expected return associated with the jump term in Eq.~\eqref{eq:risky_asset_sde} is equal to $-\phi_i \theta_i \bar F_i(\hat c_i)$, whereas the drift is $\mu_i + r$. Hence, the condition $\frac{\mu_i}{\barfix[0.2ex]{\phi_i\theta_i \bar F_i(\hat c_i)}} > 1$ is equivalent to $\mu_i + r -\phi_i \theta_i \bar F_i(\hat c_i) > r$, which implies that only banks whose projects have expected return greater than $r$ will attract external investors. This is an intuitive consequence of the model, as otherwise investors would not be rewarded for their exposure to the riskiness of these projects. Although the core-periphery structure is determined only by the expected return of a bank's project, we note that both the volatility and return of a project are impacted through a bank's cash reserves. Namely, the optimal $w_{ji}$ decreases in $\phi_i \theta_i \bar F_i(\hat c_i)$ due to the effects of \textit{both} lower expected return and increased volatility. 
	
	A key feature driving the core-periphery structure is the fact that investing banks are identical up to their risk aversion level, which affects their decision of how much capital is allocated to a project. However, whether or not bank $i$ chooses to invest in another project $j$ depends \textit{only} on the expected profitability of project $j$. This reflects a system in which banks operate as `price-takers' in other projects, without the ability to alter the qualities of their investment opportunities. 
}

\subsection{Centralized Network}
\label{ssec:opts-cen}

Consider now the perspective of a single social planner of the financial system. In contrast with Section~\ref{ssec:opts-dec}, we will see that the planner has two different incentives for maintaining bank $i$'s \change{cash reserves}. The first is identical -- bank $i$ stands to lose wealth \change{when they suffer a liquidity shortage}. The second incentive is concerned with the entire financial system -- other banks face losses to their \change{investments} on the very same event. Therefore, we expect the planner to have stronger incentives to hold \change{cash reserves}, and elect for a greater supply of liquidity throughout the system.

We assume that the planner seeks to maximize social welfare in the system -- defined as the sum of all banks' utilities. The planner's value function is therefore given by the following:
\begin{equation}
	\label{eq:cent_val}
	V(t,x_1,\dots,x_n) = \sup_{(c^\cdot_\cdot, w_\cdot^{\cdot \cdot}) \in \A_{t,T}} \mathbb{E} \left[ \sum_{i=1}^n U_i(X_T^i) \Bigg | (X^1_t,\dots X^n_t) = (x_1,\dots x_n)\right],
\end{equation}
where $\A_{t,T} = \prod_{i} \A^i_{t,T}$ is the Cartesian product of each bank's admissible controls.

\begin{remark}
	\label{rmk:swf}
	It is important to note that there are many possible social welfare functions for the planner to consider. In this section, we will see that using the sum of utilities allows for separable solutions to the value function when all banks have logarithmic utility, i.e. $\gamma_i = 1$ for all $i$. We note that if the planner maximized the \emph{product} of utilities, then we can also find an explicit solution and optimal controls in the case where $\gamma_i \in (0,1)$ for all $i$, but we omit these calculations for conciseness.
\end{remark}

We can relate the planner's value function to those of individual banks in~\eqref{eq:dec_val}. The optimal decentralized allocation from Section~\ref{ssec:opts-dec} is always feasible for the planner, and therefore the planner's value function is bounded from below by the sum of each bank's value function as follows:
\begin{equation}
	\label{eq:val_gap}
	V(t,x_1,\dots, x_n) \ge \sum_{i=1}^n V_i(t,x_i).
\end{equation}
This inequality reflects \change{a potential} inefficiency of the decentralized setting; the planner's allocation is the first-best (i.e. welfare-maximizing) outcome for the system, and decentralized banks may not achieve this outcome. \change{However, it is not yet clear that the inequality in Eq~\eqref{eq:val_gap} is strict. In Section~\ref{sec:diff}, we will show that in most cases the decentralized optimum is not welfare-maximizing, and establish technical conditions under which we can explicitly compute the gap in welfare.}

In what follows, we analyze the planner's optimal allocation by deriving the dynamic programming equation and analyzing the resulting optimization problem. As in the previous section, we first derive the non-local PDE solved by the planner's value function.

\begin{prop}
	\label{prop:cent_opt_hjb}
	If there exist optimal controls, and the value function in~\eqref{eq:cent_val} is \\
	$\mathcal{C}^{1,1,\dots, 1}([0,T),\mathbb{R}_+,\dots,\mathbb{R}_+)$, then it solves
	\begin{equation}
		\label{eq:cent_hjb}
		\begin{alignedat}{2}
			0 &= \partial_t V  + \sup_{c_\cdot,  w_{\cdot\cdot}} \Bigg\{ \sum_{i=1}^n \Bigg( \left[\left(1-c_i\right)r + \sum_{j\neq i}w_{ij}\mu_j + \frac{\eta_i \mu_i}{\phi_i}\right] x_i \partial_{x_i} V 
			\\
			&  \quad + \theta_i \bar F_i(c_i)\Big[ V(t,x_1(1-\phi_i w_{1i}),\dots, 	x_i(1-\eta_i),\dots, x_n(1-\phi_i w_{ni})) - V \Big] \Bigg)\Bigg\},
		\end{alignedat}	
	\end{equation}
	with terminal condition $V(T,x_1,\dots,x_n) = \sum_{i=1}^n U_i(x_i)$. Where unspecified, the value function and its derivatives are evaluated at $(t,x_1,\dots,x_n)$.
\end{prop}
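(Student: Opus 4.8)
The plan is to follow the same route as the proof of Proposition~\ref{prop:dec_opt_hjb}, now in $n$ dimensions: combine the dynamic programming principle with an application of the Dynkin/It\^o formula for pure-jump processes with controlled intensity. First I would record the dynamic programming principle for \eqref{eq:cent_val}: for any $\F$-stopping time $\tau$ valued in $[t,T]$,
\[
V(t,x_1,\dots,x_n) \;=\; \sup_{(c_\cdot,w_{\cdot\cdot})\in\A_{t,T}} \E\!\left[\, V\big(\tau, X^1_\tau,\dots,X^n_\tau\big) \;\Big|\; (X^1_t,\dots,X^n_t)=(x_1,\dots,x_n)\right],
\]
which holds because $(X^1,\dots,X^n)$ is a (controlled) Markov process under \eqref{eq:wealth_sde}, by the same measurable-selection argument used in the decentralized case. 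Next, fix a constant admissible control $(c,w)$ on $[t,t+h]$ and apply It\^o's formula to $s\mapsto V(s,X^1_s,\dots,X^n_s)$. The structural point is that the $\tilde N^i$ are mutually independent, so a.s. no two of the thinned processes $N^i$ jump simultaneously; at a shortage time $s$ of bank $i$ the whole vector $(X^1_{s^-},\dots,X^n_{s^-})$ jumps to $(X^1_{s^-}(1-\phi_i w_{1i}),\dots,X^i_{s^-}(1-\eta_i),\dots,X^n_{s^-}(1-\phi_i w_{ni}))$, matching the jump term in \eqref{eq:cent_hjb}. Since $N^i$ is obtained from $\tilde N^i$ by independent thinning with retention probability $\bar F_i(c^i_s)$, its $\F$-compensator under the control is $\theta_i\bar F_i(c^i_s)\,ds$; the It\^o expansion then has drift $\partial_t V + \sum_i [(1-c^i)r+\sum_{j\ne i}w_{ij}\mu_j+\eta_i\mu_i/\phi_i]X^i_s\partial_{x_i}V + \sum_i \theta_i\bar F_i(c^i_s)[V(s,\dots)-V]$ plus a local martingale coming from the compensated jump measure.

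The main obstacle is integrability: neither $V$ nor its derivatives need be globally integrable along the trajectories, so I would introduce a localizing sequence $\tau_k = \inf\{s\ge t : \min_i X^i_s \le 1/k \text{ or } \max_i X^i_s \ge k\}\wedge(t+h)$, take expectations at $s=\tau_k$ to annihilate the local martingale, and then let $k\to\infty$. Because the admissible constraint $w^{ij}_u\in[0,\phi_j^{-1})$ keeps each $X^i$ strictly positive and a.s. finite, one has $\tau_k\uparrow t+h$ a.s.; combined with the local boundedness of $V,\partial_t V,\partial_{x_i}V$ on compact subsets of $[0,T)\times\R_+^n$ granted by the $\mathcal{C}^{1,1,\dots,1}$ hypothesis, dominated convergence lets us pass to the limit and obtain $\E[V(t+h,X_{t+h})] - V(t,x)$ as the expectation of the time integral of the drift above.

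Finally, by the dynamic programming principle this quantity is $\le 0$ for every constant control $(c,w)$ and $=0$ for the optimal control, which exists by hypothesis. Dividing by $h$ and sending $h\downarrow 0$, using right-continuity of the integrands and dominated convergence once more, yields
\[
0 \;\ge\; \partial_t V + \sum_{i=1}^n\!\left(\Big[(1-c_i)r+\sum_{j\ne i}w_{ij}\mu_j+\tfrac{\eta_i\mu_i}{\phi_i}\Big]x_i\partial_{x_i}V + \theta_i\bar F_i(c_i)\big[V(t,\dots)-V\big]\right)
\]
for every constant $(c,w)$, with equality at the optimizer; taking the supremum over $(c,w)$ gives \eqref{eq:cent_hjb}, and the terminal condition $V(T,x_1,\dots,x_n)=\sum_i U_i(x_i)$ is immediate from \eqref{eq:cent_val}. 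Beyond the bookkeeping of the simultaneous multi-coordinate jumps, I expect the only delicate point to be this localization/limit-exchange argument, and it is entirely standard once the compensator of the thinned process $N^i$ is correctly identified as $\theta_i\bar F_i(c^i_s)\,ds$, exactly as in the footnote following the construction of $N^i$.
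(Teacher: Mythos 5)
Your proposal is correct and follows essentially the same route as the paper: the paper's proof of Proposition~\ref{prop:cent_opt_hjb} is exactly a minor adaptation of the decentralized argument (dynamic programming principle, It\^o's formula with the multi-coordinate jump generator, compensation of each $N^i$ at rate $\theta_i\bar F_i(c^i_s)$, and a stopping time of the form $\tau=(t+\delta)\wedge\min_i\inf\{s: X^i_s\le\epsilon \text{ or } X^i_s\ge 1/\epsilon\}$ to justify the martingale and dominated-convergence steps), which matches your localization and limit-exchange plan. The only cosmetic difference is that you argue via constant controls plus equality at the optimizer rather than keeping the supremum over admissible controls inside the expectation, and both presentations rely in the same way on the assumed existence of optimal controls.
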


The proof of this result can again be found in Appendix~\ref{ssec:pfs-cent}. 
Proposition~\ref{prop:cent_opt_hjb} yields an $n+1$ dimensional non-local PDE for the planner's value function. There is one key difference between Equations~\eqref{eq:cent_hjb} and~\eqref{eq:dec_hjb} -- when a \change{ bank suffers a liquidity shortage}, the planner's value function is affected by losses occurring throughout the \emph{entire} financial system. This is not true in the decentralized setting; an individual bank's value function only depends on their own losses caused by such an event.

With specific choices of utility functions, it is possible to find a separable solution to~\eqref{eq:cent_hjb}, and prove existence of an optimal allocation. However, in order to establish uniqueness, we will need the following technical assumption.

\begin{assumption}
	\label{assn:cent_uniq}
	Let each shock density $f_i(\cdot)$ satisfy
	\begin{equation}
		\label{eq:cent_f_cond}
		\frac{f_i(c)}{\bar F_i(c)} + 3\frac{f_i'(c)}{f_i(c)} - \frac{f_i''(c)}{f_i'(c)} < 0, \ \forall c\ge 0.
	\end{equation}
	and, with the notation $\tilde c_i = F_i^{-1} \left(\left[1 - \frac{\mu_i}{\phi_i \theta_i}\right]_+\right)$, assume that the following holds for all $i$
	\begin{equation}
		\Gamma(\eta_i;1) > 
		\begin{cases}
			\min\left\{ (n-1) \left[ \log\left(\frac{\phi_i \theta_i}{\mu_i}\right) - \frac{f_i(0)^2}{f_i'(0)}\right] , \ \frac{r}{\theta_i f_i(0)} + (n-1) \log\left(\frac{\phi_i \theta_i}{\mu_i}\right) \right\} & \text{ if } \tilde c_i = 0 \\[2ex]
			\min\left\{- (n-1) \frac{\phi_i \theta_i f_i(\tilde c_i)^2}{\mu_i f_i'(\tilde c_i)} , \ \frac{r}{\theta_i f_i(\tilde c_i)}  \right\} & \text{ otherwise,}
		\end{cases}
		\label{eq:cent_gamma_cond}
	\end{equation}
	\change{where recall from Eq.~\eqref{eq:gamma} that $\Gamma(\eta_i;1) = \log(1)-\log(1-\eta_i) = -\log(1-\eta_i)$ represents the loss in utility associated with a loss of $\eta_i$ fraction of wealth.}
\end{assumption}

Assumption~\ref{assn:cent_uniq} \change{provides sufficient -- but not necessary -- conditions} for uniqueness of the planner's optimal allocation. Numerically, we have observed that the optimal solution is often unique, but the optimization problem in~\eqref{eq:cent_hjb} is (generally) not convex, and therefore proving uniqueness is non-trivial. We do, however, note that the inequality~\eqref{eq:cent_f_cond} is always satisfied by exponential and power distributions. \change{At a glance, Eq.~\eqref{eq:cent_gamma_cond} requires that each bank's losses upon a liquidity shortage not be too small. If bank $i$ loses only a small amount of capital when a liquidity shortage occurs, then it is possible for there to be two equilibria for the planner: i) project $i$ is profitable and bank $i$'s cash reserves are high; and ii) project $i$ is not profitable and bank $i$'s cash reserves are low. Since the same rationale holds for all $n$ projects, we could have up to $2^n$ distinct equilibria. In this paper, we will focus on settings where unique optimal controls can be provably obtained, and hence require Assumption~\ref{assn:cent_uniq} to hold.}

Analogous to Section~\ref{ssec:opts-dec}, we show there exists a separable solution to the PDE~\eqref{eq:cent_hjb}. Additionally, we show that the optimal solution will solve a system of algebraic equations.

\begin{prop} 
	\label{prop:cent_hjb_soln}
	Let each bank have a logarithmic utility function (i.e. $\gamma_i = 1 \ \forall i$). Then, there exist optimal cash \change{reserves} and \change{project investment} amounts for the planner, which solve the following system of equations:
	\begin{equation}
		\label{sol:cent}
		\begin{alignedat}{2}
			c_i^* &= 
				\begin{cases}
				f_i^{-1}\left(\frac{r}{\theta_i \left[ \Gamma(\eta_i;1) + (n-1) \Gamma(\phi_i w_{\cdot i}^*;1)\right]}\right) &\text{ if } f_i(0) \le \frac{r}{\theta_i \left[ \Gamma(\eta_i;1) + (n-1) \Gamma(\phi_i w_{\cdot i}^*;1)\right]} \\
				0 & \text{ otherwise,}
				\end{cases}  
				\\
			w_{\cdot i}^* &= 
				\begin{cases}
				\frac{1}{\phi_i}\left(1-\frac{\phi_i\theta_i 			\bar F_i(c_i^*)}{\mu_i}\right) &\text{ if } \frac{\phi_i \theta_i\bar F_i(c_i^*)}{\mu_i} < 1\\
				0 & \text{ otherwise.}
				\end{cases}
		\end{alignedat}  
	\end{equation}
	We define 
	\begin{equation}
		\label{eq:cent_opt}
		J^*_C =\sum_{i=1}^n \Bigg( \left[\left(1-c_i^*\right)r + (n-1)w_{\cdot i}^*\mu_i + \frac{\eta_i \mu_i}{\phi_i} \right]   - \theta_i \bar F_i(c_i^*)\Big[ \Gamma(\eta_i;1) + (n-1) \Gamma(\phi_iw_{\cdot i}^*;1) \Big] \Bigg),
	\end{equation}
	and $g(t) = (T-t)J^*_C$. The solution to~\eqref{eq:cent_hjb} is given by
	\begin{equation}
		\label{eq:cent_sol}
		V(t,x_1,\dots,x_n) = g(t) + \sum_{i=1}^n \log x_i.
	\end{equation}	
	Furthermore, under Assumption~\ref{assn:cent_uniq}, the optimal controls $(c_i^*,w_{\cdot i}^*)$ are unique.	
\end{prop}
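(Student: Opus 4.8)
The plan is to verify the separable ansatz $V(t,x_1,\dots,x_n) = g(t) + \sum_k \log x_k$ directly. Substituting it into~\eqref{eq:cent_hjb}, each $x_i\partial_{x_i}V = 1$ and each jump bracket becomes $V(\dots) - V = -\Gamma(\eta_i;1) - \sum_{k\neq i}\Gamma(\phi_i w_{ki};1)$ (using $\Gamma(\delta;1)=-\log(1-\delta)$ from~\eqref{eq:gamma}), so the PDE reduces to $0 = g'(t) + \sup_{c_\cdot,w_{\cdot\cdot}} H(c_\cdot,w_{\cdot\cdot})$ for a static objective $H$ independent of $t$ and of $(x_1,\dots,x_n)$. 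Reindexing the double sums ($\sum_i\sum_{j\neq i}w_{ij}\mu_j = \sum_i\mu_i\sum_{k\neq i}w_{ki}$, and likewise for the jump-loss sum) shows $H$ is additively separable across $i$: block $i$ contains only $c_i$ and $\{w_{ki}\}_{k\neq i}$ and equals $(1-c_i)r + \tfrac{\eta_i\mu_i}{\phi_i} - \theta_i\bar F_i(c_i)\Gamma(\eta_i;1) + \sum_{k\neq i}\big(\mu_i w_{ki} - \theta_i\bar F_i(c_i)\Gamma(\phi_i w_{ki};1)\big)$. The $n$ blocks can therefore be maximised independently.

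Next I would solve block $i$. For fixed $c_i$, $w\mapsto \mu_i w - \theta_i\bar F_i(c_i)\Gamma(\phi_i w;1)$ is strictly concave on $[0,\phi_i^{-1})$ (since $\bar F_i(c_i)>0$ by Assumption~\ref{assn}), so every $w_{ki}$ shares the same unique optimum $w^*(c_i)$, given by its first-order condition as $\phi_i^{-1}\big(1-\phi_i\theta_i\bar F_i(c_i)/\mu_i\big)$ when $\phi_i\theta_i\bar F_i(c_i)<\mu_i$ and $0$ otherwise — i.e.\ $w^*=0$ precisely on $c_i<\tilde c_i$. Writing $\Phi_i(c)$ for block $i$ with all $w_{ki}$ set to $w^*(c)$, one checks that $\partial_w(\cdot)|_{w=0}$ vanishes at $c=\tilde c_i$, so $w^*(\cdot)$ is continuous and $\Phi_i\in C^1([0,\infty))$ with $\Phi_i'(c) = -r + \theta_i f_i(c) B_i(c)$ by the envelope theorem, where $B_i(c):=\Gamma(\eta_i;1)+(n-1)\Gamma(\phi_i w^*(c);1)$ equals $\Gamma(\eta_i;1)$ on $[0,\tilde c_i]$ and $\Gamma(\eta_i;1)+(n-1)\log\!\big(\mu_i/(\phi_i\theta_i\bar F_i(c))\big)$ on $[\tilde c_i,\infty)$. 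Since $\Gamma(\phi_i w;1)\to\infty$ as $w\uparrow\phi_i^{-1}$ and $-cr\to-\infty$ as $c\to\infty$ (using that $r>0$), $\sup H$ is attained; reading off $\Phi_i'(c_i^*)=0$ (or the boundary case $c_i^*=0$) together with $w^*_{\cdot i}:=w^*(c_i^*)$ produces exactly the coupled system~\eqref{sol:cent}, while $J^*_C=\sum_i\Phi_i(c_i^*)$ reproduces~\eqref{eq:cent_opt}. Finally $g(t)=(T-t)J^*_C$ gives $g'(t)+J^*_C=0$ and $g(T)=0$, so~\eqref{eq:cent_sol} solves~\eqref{eq:cent_hjb} with the stated terminal condition (that these controls are genuinely optimal for~\eqref{eq:cent_val} then follows from a verification argument mirroring the one behind Corollary~\ref{cor:dec_verification}).

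The substantive part is uniqueness under Assumption~\ref{assn:cent_uniq}. Since the $w_{ki}$ are pinned down once $c_i$ is fixed, it suffices to show each $\Phi_i$ has a unique maximiser, for which it is enough that $\Phi_i'$ change sign at most once, from $+$ to $-$. Write $\Phi_i'=-r+\theta_i\psi_i$ with $\psi_i:=f_i B_i>0$. On $[0,\tilde c_i]$, $\psi_i=f_i\,\Gamma(\eta_i;1)$ is strictly decreasing since $f_i'<0$. On $[\tilde c_i,\infty)$, a direct computation shows that at any critical point of $\psi_i$ one has $\psi_i'' = (n-1)\tfrac{f_i^2}{\bar F_i}\big(\tfrac{f_i}{\bar F_i}+3\tfrac{f_i'}{f_i}-\tfrac{f_i''}{f_i'}\big)$, which is $<0$ by~\eqref{eq:cent_f_cond}; hence $\psi_i$ has at most one critical point there and is quasiconcave on $[\tilde c_i,\infty)$. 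What remains is to rule out $\Phi_i'$ dipping below $0$ on $[0,\tilde c_i]$ and re-emerging above $0$ before its terminal descent: evaluating at $c=\tilde c_i$ (where $\bar F_i(\tilde c_i)=\mu_i/(\phi_i\theta_i)$, so $B_i(\tilde c_i)=\Gamma(\eta_i;1)$ and $B_i'(\tilde c_i^+)=(n-1)\phi_i\theta_i f_i(\tilde c_i)/\mu_i$), the first branch of~\eqref{eq:cent_gamma_cond}, $\Gamma(\eta_i;1)>r/(\theta_i f_i(\tilde c_i))$, forces $\Phi_i'(\tilde c_i)>0$ — so $\Phi_i'>0$ throughout $[0,\tilde c_i]$ and, being quasiconcave-shaped on $[\tilde c_i,\infty)$ with limit $-r<0$, it crosses $0$ exactly once on the way down; the second branch, $\Gamma(\eta_i;1)>-(n-1)\phi_i\theta_i f_i(\tilde c_i)^2/(\mu_i f_i'(\tilde c_i))$, forces $\psi_i'(\tilde c_i^+)<0$, whence $\psi_i$, and with it $\Phi_i'$, is decreasing on all of $[0,\infty)$. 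Either way $\Phi_i'$ crosses $0$ at most once and from $+$ to $-$. The $\tilde c_i=0$ case is handled identically with $c=0$ in place of $\tilde c_i$; the extra $f_i(0)^2/f_i'(0)$ term arises because $B_i$ is already in its interior form at $c=0$, and the two alternatives in~\eqref{eq:cent_gamma_cond} are exactly the two sufficient conditions just described.

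I expect the case analysis and bookkeeping in the uniqueness step to be the main obstacle: deriving the identity for $\psi_i''$ at critical points cleanly (so that~\eqref{eq:cent_f_cond} is revealed to be precisely the condition needed), verifying the $C^1$ matching of $\Phi_i$ at the kink $\tilde c_i$, and treating the boundary behaviour at $c=0$ together with the $\tilde c_i=0$ versus $\tilde c_i>0$ split. By comparison, the ansatz substitution, the separability, the one-dimensional concavity in $w$, and the check that $g(t)=(T-t)J^*_C$ closes the PDE are all routine.
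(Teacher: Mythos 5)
Your proposal is correct and follows essentially the same route as the paper's proof: the same separable ansatz and reindexing into per-bank blocks, the same explicit $w^*_{\cdot i}(c_i)$ and first-order system~\eqref{sol:cent}, coercivity in $c_i$ for existence, and a uniqueness argument driven by the identical third-derivative identity together with the two alternative inequalities in~\eqref{eq:cent_gamma_cond} -- the only cosmetic difference being that the paper runs a Gr\"{o}nwall argument on $\tfrac{d^2 h_i^*}{dc_i^2}$ where you argue that, by~\eqref{eq:cent_f_cond}, every critical point of $\psi_i = f_i B_i$ is a strict local maximum, which yields the same unimodality. One small caveat: the claim that $\Phi_i' \to -r$ (i.e.\ $f_i(c)\log\left(1/\bar F_i(c)\right)\to 0$) is asserted without proof and is not needed -- existence of a maximizer plus the at-most-one sign change of $\Phi_i'$, which you already establish, gives uniqueness directly.
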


The proof is again given in Appendix~\ref{ssec:pfs-cent}. We note that a separable solution using logarithmic utility functions is only possible because the planner aims to maximize the sum of banks' expected utilities. See Remark~\ref{rmk:swf} for a brief discussion of other settings where a separable solution can be obtained.

In contrast to the decentralized setting, the maximization in~\eqref{eq:cent_hjb} is not additively separable between each optimization variable. Nonetheless, each of the $i$ subsets $\{c_i, w_{1i}, \dots, w_{ni}\}, \ i = 1,\dots, n$ can be analyzed separately, which greatly simplifies our analysis. However, the coupling between $c_i$ and $w_{\cdot i}$ leads to the need for additional assumptions to establish uniqueness.

The system of equations in~\eqref{sol:cent} admits a block coordinate descent approach. Namely, for any fixed $c_i$, the maximization problem for $w_{\cdot i}$ is strictly concave and admits a unique solution (these can be seen in the proof of Proposition~\ref{prop:cent_hjb_soln}). Conversely, for given values of $w_{\cdot i}$, the maximizing of $c_{i}$ shares these features. As a result, we can iteratively update these variables to solve for the planner's optimum numerically. Upon convergence, we are guaranteed to have found the uniquely optimal allocation.

Since we have shown existence of an optimal allocation, and the proposed solution in~\eqref{eq:cent_sol} is continuously differentiable, then we are able to verify that it is indeed equal to the planner's value function.
\begin{corollary}
	\label{cor:cent_verification}
	The planner's value function in~\eqref{eq:cent_val} is given by~\eqref{eq:cent_sol}. Furthermore, the optimal \change{ project investment and cash reserves}  solve~\eqref{sol:dec}.
\end{corollary}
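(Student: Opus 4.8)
The plan is to run the standard verification argument, now in the $(n+1)$-dimensional state space, exactly parallelling the proof of Corollary~\ref{cor:dec_verification}. Write $\Phi(t,x_1,\dots,x_n) = g(t) + \sum_{i=1}^n \log x_i$ for the candidate produced in Proposition~\ref{prop:cent_hjb_soln}. By that proposition, $\Phi$ is $\mathcal{C}^{1,1,\dots,1}([0,T),\mathbb{R}_+,\dots,\mathbb{R}_+)$, satisfies the terminal condition $\Phi(T,x_1,\dots,x_n) = \sum_{i=1}^n U_i(x_i)$, and solves the non-local PDE~\eqref{eq:cent_hjb}. The first step is to show $\Phi \ge V$. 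Fix any admissible $(c_\cdot,w_{\cdot\cdot}) \in \A_{t,T}$ and the associated wealth vector $(X^1_s,\dots,X^n_s)$ with $(X^1_t,\dots,X^n_t) = (x_1,\dots,x_n)$, governed by~\eqref{eq:wealth_sde}. Apply the It\^o formula for jump processes to $s \mapsto \Phi(s,X^1_s,\dots,X^n_s)$ on $[t, T\wedge\tau_m]$, where $\tau_m = \inf\{s\ge t : \max_i X^i_s \ge m \text{ or } \min_i X^i_s \le 1/m\}\wedge T$ is a localizing sequence. The absolutely continuous (drift) part of the resulting expansion is, by~\eqref{eq:cent_hjb} and the fact that the supremum there is nonpositive for every choice of controls, bounded above by zero; the compensated jump terms have bounded integrands on $[t,T\wedge\tau_m]$ and are hence true martingales with zero expectation. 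Taking expectations gives $\Phi(t,x_1,\dots,x_n) \ge \E\!\left[\Phi\big(T\wedge\tau_m, X^1_{T\wedge\tau_m},\dots,X^n_{T\wedge\tau_m}\big)\right]$.

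The second step is to send $m\to\infty$. The key observation is that under any admissible control each $\log X^i_s$ has dynamics with a drift bounded by a constant depending only on the model parameters and the (bounded) control ranges, and with jump sizes deterministically bounded below by $\log(1-\eta_i)$ and $\log(1-\phi_j w_{ij}) > -\infty$; consequently $\sup_{s\le T}\E\big[|\log X^i_s|\big] < \infty$ and $\E\big[\sup_{s\le T}(\log X^i_s)^+\big] < \infty$. Since $\tau_m \uparrow T$ almost surely, dominated and monotone convergence yield $\E\!\left[\Phi(T\wedge\tau_m,\cdot)\right] \to \E\!\left[\sum_{i=1}^n U_i(X^i_T)\right]$. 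Taking the supremum over admissible controls gives $\Phi(t,x_1,\dots,x_n) \ge V(t,x_1,\dots,x_n)$.

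For the reverse inequality, one checks that the controls $(c_i^*, w_{\cdot i}^*)$ from Proposition~\ref{prop:cent_hjb_soln} are admissible: they are constant in time, hence $\F$-adapted, and by construction $c_i^* \ge 0$ and $w_{\cdot i}^* \in [0,\phi_i^{-1})$. With this choice the supremum in~\eqref{eq:cent_hjb} is attained, so the drift term in the It\^o expansion is identically zero, and repeating the localization argument with equalities throughout gives $\Phi(t,x_1,\dots,x_n) = \E\!\left[\sum_{i=1}^n U_i(X^{i,*}_T)\right] \le V(t,x_1,\dots,x_n)$, where $X^{i,*}$ is the wealth under the optimal controls. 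Combining the two bounds yields $\Phi = V$, i.e.~\eqref{eq:cent_sol} holds, and since $(c_i^*, w_{\cdot i}^*)$ attains this value it is an optimal allocation, which is the second assertion of the corollary (with uniqueness already established in Proposition~\ref{prop:cent_hjb_soln} under Assumption~\ref{assn:cent_uniq}).

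The only genuinely delicate point -- the main obstacle -- is the uniform-integrability/limit-exchange across the localization: one must confirm that $\Phi$ evaluated along admissible wealth paths, together with the compensators of the jump integrals, is controlled well enough to pass to the limit. This is exactly where the explicit logarithmic form of $\Phi$ and the a priori bounds $\eta_i,\ \phi_j w_{ij} < 1$ do the work, giving linear-in-time control of $\E[|\log X^i_s|]$; everything else is bookkeeping already carried out in the decentralized case (Corollary~\ref{cor:dec_verification}).
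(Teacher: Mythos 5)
Your proposal is correct and follows essentially the same route as the paper: a standard verification argument mirroring Corollary~\ref{cor:dec_verification}, applying It\^{o}'s formula with a localizing stopping time that keeps all $n$ wealth processes bounded, using the HJB inequality/equality for arbitrary versus optimal controls, and passing to the limit to identify the candidate with the value function. The extra care you take with the limit exchange (linear-in-time control of $\E[|\log X^i_s|]$) is a slightly more explicit justification of a step the paper handles by bounding the stopped processes and invoking dominated convergence, but it is the same argument in substance.
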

\change{As with previous results, the proof is presented in Appendix \ref{ssec:pfs-cent}.}

\subsubsection*{Analysis of Centralized Equilibrium}

There is one main difference between the system of equations in~\eqref{sol:cent} and the optimal solutions from the decentralized setting in~\eqref{sol:dec}. Here, the expression for the planner's optimal cash reserves $c_i^*$ contains an additional term of $(n-1)\Gamma(\phi_i w_{\cdot i}^*;1)$. This term directly captures the externality -- when \change{bank $i$ suffers a liquidity shortage and their project incurs losses}, the planner is congnizant of losses in utility experienced by all banks. As a result, with more banks the planner maintains larger \change{cash reserves} to compensate for greater system-wide losses. In contrast, bank $i$'s decentralized optimization problem considers only changes to their own wealth, and therefore their optimal $\hat c_i$ will be indifferent to the system's size.

Since we will have $w_{\cdot i}^* \ge 0$ in~\eqref{sol:cent}, the planner has no weaker an incentive to hold liquidity than an individual bank.\footnote{This observation may not be the case if, for example, short-selling were allowed. Qualitatively, the planner may choose to have a single bank $i$ hold zero cash, while others in the system maintain large, short positions in $i$'s project. In this case, the total utility of the system may actually increase when bank $i$'s project incurs losses. However, clearly this result may not align with the best outcome for bank $i$ itself.} Hence, the planner will hold \change{no smaller cash reserves than banks in the decentralized setting} -- this difference will be studied more closely in the following section. Finally, we also notice that given some fixed \change{cash reserves}, the optimal project investments $w^*_{\cdot\cdot}$ and $\hat w_{\cdot \cdot}$ are computed identically in both settings. It follows that any differences between the optimal \change{project investments} in~\eqref{sol:dec} and~\eqref{sol:cent} must be driven only by differences in optimal cash holdings, \change{which we proceed to analyze in Section~\ref{sec:diff}.}

\change{Finally, we note that \textit{only} in the case where $w^*_{\cdot i}$ equal zero for all $i$ will the planner's optimal allocation coincide with that of the decentralized system. In such a case, bank $i$'s cash reserves are dictated only by their own losses, which implies that $\hat c_i = c_i^*$. In other words, as long as at there is least one project within the planner's optimum that has positive expected excess return, the market-mediated equilibrium from Section~\ref{ssec:opts-dec} is inefficient, and the inequality in Eq.~\eqref{eq:val_gap} is strict. This suggests that there is value to be obtained from regulating the system.}

\section{\change{Differences in Equilibria}}
\label{sec:diff}

It is natural to compare the two optimal allocations from Sections~\ref{ssec:opts-dec} and~\ref{ssec:opts-cen}. In particular, we may be interested in computing the gap in welfare from the inequality~\eqref{eq:val_gap}. Figure~\ref{fig:dec_vs_cent} illustrates a sample path for the wealth of three banks, where in Fig.~\ref{fig:dec_wealth} the controls are given by~\eqref{sol:dec}, and in Fig.~\ref{fig:cent_wealth} by~\eqref{sol:cent}. Qualitatively, there are higher-frequency jumps in~\ref{fig:dec_wealth}, but the jumps are of larger size in~\ref{fig:cent_wealth}. With the remainder of this section, we study these differences more precisely. 
\begin{figure}[htbp]
	\centering
	\begin{subfigure}{0.4\linewidth}
		\includegraphics[width = \linewidth]{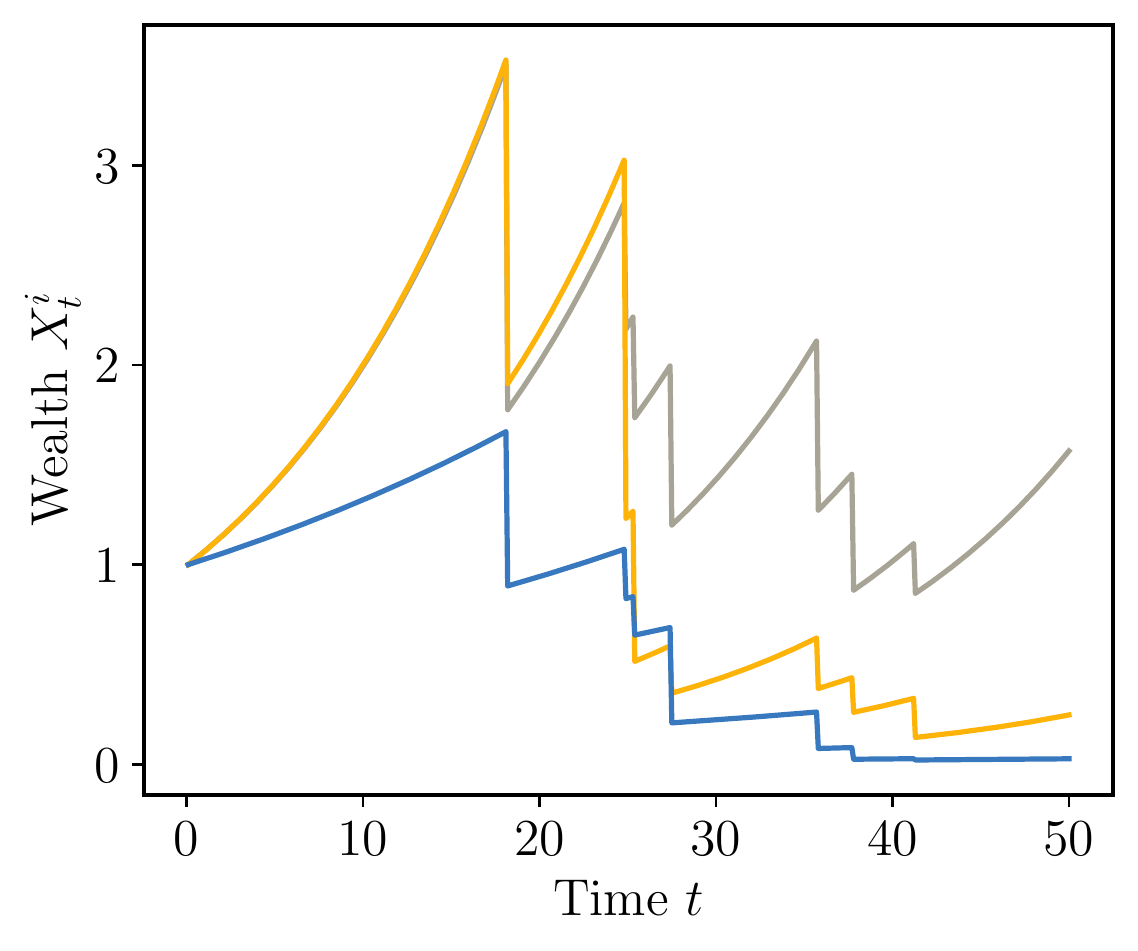}
		\captionsetup{font = small}
		\caption{Decentralized}
		\label{fig:dec_wealth}
	\end{subfigure}
	\hspace{2em}
	\begin{subfigure}{0.4\linewidth}
		\includegraphics[width = \linewidth]{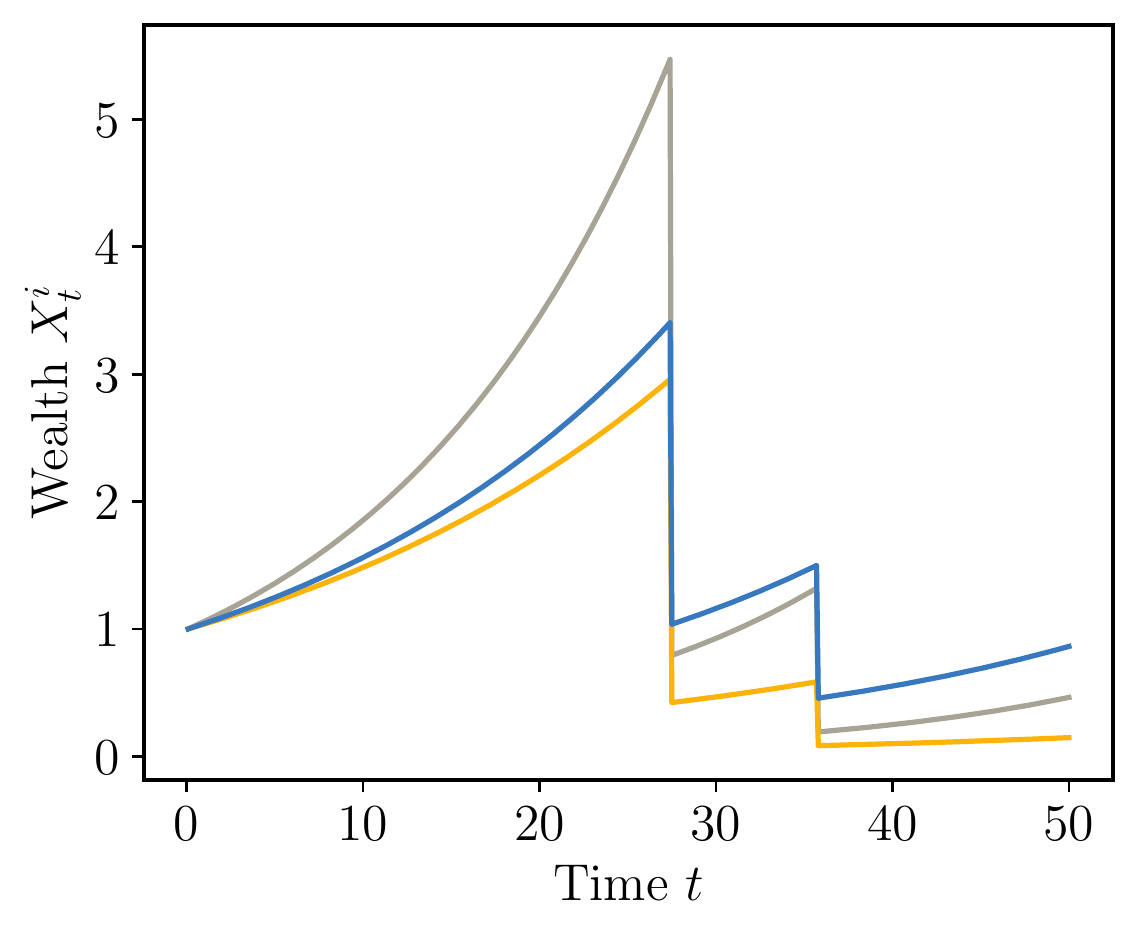}
		\captionsetup{font = small}
		\caption{Centralized}
		\label{fig:cent_wealth}
	\end{subfigure}
	\captionsetup{width=.8\linewidth, font = small, justification=justified}
	\caption{An example of wealth dynamics under the both optimal allocations for a system of $n=3$ banks. The same random seed is used in both simulations, so that the size and arrival times of liquidity shocks are identical. For conciseness, we do not include the parameters, but the code to reproduce these figures can be found \href{https://github.com/drigobon/rigobon-sircar-2022}{here}.}
	\label{fig:dec_vs_cent}
\end{figure}

In what follows, we will assume that all banks have logarithmic utility (i.e. $\gamma_i = 1$ for all $i$). Recall that $\hat c_i, \hat w_{ji}$ denote the optimal decentralized allocations given in~\eqref{sol:dec}. Note that for all $j,k\neq i$ we will have $\hat w_{ki} = \hat w_{ji}$, so we will denote these fractional amounts to be $\hat w_{\cdot i}$ (this follows from $\gamma_j = 1$ for all $j$). Additionally, recall that $c_i^*, w_{\cdot i}^*$ denotes the optimal solution from~\eqref{sol:cent}. Finally, we use the asymptotic notation $g(n) = \Theta(h(n))$ to denote that there exist positive constants $A_1,\, A_2$ such that $A_1 \le \lim_{n\to\infty} \frac{g(n)}{h(n)} \le A_2 $. If $A_1 = A_2$, then we will write $g(n) \asymp h(n)$.

Comparing the two optimal allocations, since $w_{\cdot i}^* \ge 0$, it will necessarily be the case that $c_i^* \ge \hat c_i$. Our first core result establishes the asymptotic rate at which the planner's optimal \change{cash reserves} grow as the size of the system increases, \change{when liquidity shocks follow an exponential distribution.} More precisely, we show that the planner's cash holdings must grow at logarithmically in the system size $n$. In contrast, if $w_{\cdot i}^* = 0$, then we would have $c_i^* = \hat c_i$, which is of constant order.

\begin{prop}
	\label{prop:PoA_exp_exact_c}
	If $F_i(x) = 1-e^{-\frac{x}{\lambda_i }}$, then 
	$$\lambda_i \log\left(\frac{\theta_i (n-1)}{\lambda_i r} \left[ \log(n-1) - \log\left(\frac{\Gamma(\eta_i;1)}{\Gamma(\phi_i \hat w_{\cdot i};1)}\right) \right] \right) \le c_i^* \le \lambda_i \log\left(\frac{\theta_i C_U (n-1)}{\lambda_i r}\log(n)\right),$$
	where $\Gamma(\cdot\, ;1)$ is defined in Eq.~\eqref{eq:gamma}.
	
	In particular, $c_i^* \asymp \lambda_i \log(n).$
\end{prop}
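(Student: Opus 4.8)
The plan is to analyze the fixed-point system in~\eqref{sol:cent} directly in the exponential case, where $\bar F_i(c) = e^{-c/\lambda_i}$, $f_i(c) = \lambda_i^{-1} e^{-c/\lambda_i}$, and $f_i^{-1}(y) = -\lambda_i \log(\lambda_i y)$. Substituting $\Gamma(\phi_i w_{\cdot i}^*;1) = -\log(1 - \phi_i w_{\cdot i}^*)$ and using the second equation of~\eqref{sol:cent}, namely $1 - \phi_i w_{\cdot i}^* = \phi_i\theta_i \bar F_i(c_i^*)/\mu_i = (\phi_i\theta_i/\mu_i)e^{-c_i^*/\lambda_i}$ (in the regime where the project is in the core), I would obtain a single scalar equation for $c_i^*$:
\begin{equation*}
	c_i^* = \lambda_i \log\!\left( \frac{\theta_i}{\lambda_i r}\left[ \Gamma(\eta_i;1) + (n-1)\left( \tfrac{c_i^*}{\lambda_i} - \log\tfrac{\phi_i\theta_i}{\mu_i}\right)\right]\right).
\end{equation*}
So the whole problem reduces to bounding the solution of $c = \lambda \log(a + b(n-1)(c/\lambda + d))$ for appropriate constants. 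First I would argue that $c_i^*$ is eventually in the core for $n$ large (since $c_i^* \ge \hat c_i$ is bounded below and the bracket grows, so $\bar F_i(c_i^*)$ stays bounded away from forcing $w_{\cdot i}^* = 0$... actually I need $\phi_i\theta_i\bar F_i(c_i^*)/\mu_i < 1$, i.e. $c_i^* > \tilde c_i$; I'd verify this holds for large $n$ using the lower bound once established, or note it is implied by Assumption~\ref{assn:cent_uniq}'s setup).

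For the \textbf{lower bound}: since $c_i^* \ge \hat c_i \ge 0$, I can drop the $c_i^*/\lambda_i$ term inside the bracket and also drop $\Gamma(\eta_i;1)$ partially, but more carefully, I would first get a crude bound $c_i^* \ge \lambda_i\log\frac{\theta_i(n-1)}{\lambda_i r}(\cdot)$ using only the $n-1$ term, then bootstrap: plug the crude bound back into the $c_i^*/\lambda_i$ term inside the logarithm to recover the $\log(n-1) - \log(\Gamma(\eta_i;1)/\Gamma(\phi_i\hat w_{\cdot i};1))$ refinement. The appearance of $\hat w_{\cdot i}$ (the \emph{decentralized} investment) in the stated bound is the tell: one uses $c_i^* \ge \hat c_i$ together with $\Gamma(\phi_i\hat w_{\cdot i};1) = -\log(\phi_i\theta_i\bar F_i(\hat c_i)/\mu_i) = \hat c_i/\lambda_i - \log(\phi_i\theta_i/\mu_i)$ to lower-bound the bracket by $\Gamma(\eta_i;1) + (n-1)\Gamma(\phi_i\hat w_{\cdot i};1)$, then the logarithm's argument factors as $\frac{\theta_i(n-1)}{\lambda_i r}\Gamma(\phi_i\hat w_{\cdot i};1)[\,\frac{\Gamma(\eta_i;1)}{(n-1)\Gamma(\phi_i\hat w_{\cdot i};1)} + 1\,]$, which after a bit of rearrangement... hmm, the stated form has $\log(n-1) - \log(\Gamma(\eta_i;1)/\Gamma(\phi_i\hat w_{\cdot i};1))$ inside, suggesting instead one writes the bracket's contribution as $(n-1)\Gamma(\phi_i\hat w_{\cdot i};1)\cdot\frac{\log(n-1) - \log(\Gamma(\eta_i;1)/\Gamma(\phi_i\hat w_{\cdot i};1))}{\text{something}}$ — I would reverse-engineer the exact algebra so the monotone-increasing property of $y\mapsto\log y$ delivers precisely the claimed inequality.

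For the \textbf{upper bound}: I need a constant $C_U$ (presumably $C_U$ depends on the fixed model parameters, e.g. $C_U \ge \Gamma(\eta_i;1) - \log(\phi_i\theta_i/\mu_i)$ plus a bit of slack) such that $c_i^* \le \lambda_i\log\frac{\theta_i C_U(n-1)}{\lambda_i r}\log n$. Here I would use a self-consistent/monotonicity argument: define $\Phi(c) = \lambda_i\log(a + b(n-1)(c/\lambda_i + d))$, observe $\Phi$ is increasing and concave, so it has a unique fixed point, and show the candidate upper bound $\bar c = \lambda_i\log(\frac{\theta_i C_U(n-1)}{\lambda_i r}\log n)$ satisfies $\Phi(\bar c) \le \bar c$; then by monotonicity $c_i^* \le \bar c$. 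Checking $\Phi(\bar c)\le\bar c$ amounts to verifying $a + b(n-1)(\bar c/\lambda_i + d) \le \frac{\theta_i C_U(n-1)}{\lambda_i r}\log n$, and since $\bar c/\lambda_i = \log(n-1) + \log\log n + O(1) = O(\log n)$, the left side is $O(n\log n)$ with an explicit constant, so choosing $C_U$ large enough (depending only on $\theta_i,\lambda_i,\eta_i,\phi_i,\mu_i,r$) closes it. The \textbf{main obstacle} I anticipate is not any single inequality but the bootstrapping bookkeeping: the equation is implicit with $c_i^*$ appearing both outside and inside a logarithm, so one must be careful that the crude-bound-then-refine step is valid (monotonicity of $\Phi$ and uniqueness of its fixed point, guaranteed since $\Phi' < 1$ for $c$ not too small), and that the transition into the core regime ($c_i^* > \tilde c_i$) genuinely holds for all large $n$ rather than being assumed. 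The final statement $c_i^* \asymp \lambda_i\log n$ then follows since both bounds are $\lambda_i(\log n + o(\log n))$, so the ratio $c_i^*/(\lambda_i\log n) \to 1$.
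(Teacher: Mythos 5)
Your overall scheme is the paper's: in the exponential case the first-order system \eqref{sol:cent} collapses to the scalar relation $c_i^* = \lambda_i\log\bigl(\tfrac{\theta_i}{\lambda_i r}[\Gamma(\eta_i;1)+(n-1)(c_i^*/\lambda_i-\log\tfrac{\phi_i\theta_i}{\mu_i})]\bigr)$, and the bounds come from bootstrapping crude estimates through it. Your lower bound is essentially the paper's argument (Lemma~\ref{lem:PoA_exp_c}): start from $c_i^*\ge\hat c_i$, get the crude bound $c_i^*\ge\lambda_i\log\bigl(\tfrac{\theta_i(n-1)\Gamma(\phi_i\hat w_{\cdot i};1)}{\lambda_i r}\bigr)$, then feed it back in. The algebra you leave as ``reverse-engineer'' is the heart of the step and does close: the crude bound gives $\bar F_i(c_i^*)\le \tfrac{\lambda_i r}{\theta_i(n-1)\Gamma(\phi_i\hat w_{\cdot i};1)} = \bar F_i(\hat c_i)\,\tfrac{\Gamma(\eta_i;1)}{(n-1)\Gamma(\phi_i\hat w_{\cdot i};1)}$ (using $\bar F_i(\hat c_i)=\lambda_i r/(\theta_i\Gamma(\eta_i;1))$), and inserting this into $\Gamma(\phi_i w^*_{\cdot i};1)=-\log\bigl(\tfrac{\phi_i\theta_i}{\mu_i}\bar F_i(c_i^*)\bigr)$ and dropping the nonnegative terms $\Gamma(\eta_i;1)$ and $\Gamma(\phi_i\hat w_{\cdot i};1)$ yields exactly the stated bound. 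Your worry about the core regime is also resolved with no asymptotics: $c_i^*\ge\hat c_i$ and $\hat w_{\cdot i}>0$ give $\phi_i\theta_i\bar F_i(c_i^*)/\mu_i\le\phi_i\theta_i\bar F_i(\hat c_i)/\mu_i<1$, so $w^*_{\cdot i}>0$ for every $n$.

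The genuine gap is in your upper bound. From $\Phi(\bar c)\le\bar c$ and monotonicity of $\Phi$ you cannot conclude $c_i^*\le\bar c$ unless you know the fixed point of $\Phi$ is unique (or otherwise locate $c_i^*$), and your justification ``$\Phi'<1$ for $c$ not too small'' does not establish this: the fixed-point equation is $e^{c/\lambda_i}=\tfrac{\theta_i}{\lambda_i r}[\Gamma(\eta_i;1)+(n-1)(c/\lambda_i-\log\tfrac{\phi_i\theta_i}{\mu_i})]$, an exponential against a line, which can genuinely have two solutions on the core branch; this multiplicity of first-order-condition solutions is precisely what Assumption~\ref{assn:cent_uniq} is designed to rule out, and Proposition~\ref{prop:PoA_exp_exact_c} should not secretly rely on it. If $c_i^*$ were the larger of two crossings lying above $\bar c$, $\Phi(\bar c)\le\bar c$ gives no contradiction, so as written the step fails. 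Two repairs: (i) do what the paper does --- derive an a priori polynomial bound $c_i^*\le Kn^2$ for the actual optimizer from optimality itself (nonnegativity of the welfare gap \eqref{eq:val_gap}), then push it through the scalar relation twice, each implication being valid for the true $c_i^*$ irrespective of how many FOC solutions exist; or (ii) supplement your super-solution check with a sub-solution check, i.e.\ verify $\Phi(c_0)\ge c_0$ at the already-established lower bound $c_0\le\bar c$, and use concavity of $\Phi-\mathrm{id}$ (the set where it is nonnegative is an interval containing every interior FOC point, in particular $c_i^*$) to conclude $c_i^*\le\bar c$. With either repair, and with $C_U$ chosen large enough relative to the fixed parameters as you indicate, your upper bound and the conclusion $c_i^*\asymp\lambda_i\log n$ go through.
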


The proof follows from Lemma~\ref{lem:PoA_exp_c} in Appendix~\ref{sec:proofs}, which establishes upper and lower bounds on super- and sub-exponentially distributed liquidity shocks.

\change{This result highlights a substantial discrepancy between the planner's optimum and that of individual banks. Since the losses to an individual bank are not affected by the number of external investors, their cash reserves are dictated only by their own losses and are of constant order. In contrast, the planner observes losses throughout the system, which grow at least linearly in the system size, and hence chooses to hold greater cash reserves to compensate for the greater losses upon a liquidity shortage to any particular bank.

Proposition~\ref{prop:PoA_exp_exact_c} is a useful tool for comparing the two optimal allocations, as all differences are driven by the distinct cash reserves. The following result compares the asymptotic rates for various quantities in both the centralized and decentralized equilibrium.

\begin{corollary}
	\label{cor:asymp_comparison}
	Under the conditions of Proposition~\ref{prop:PoA_exp_exact_c}, and when $\hat w_{\cdot i} > 0$, we have:
	\begin{enumerate}[(i)]
		\item \textbf{Likelihood of Liquidity Shortage:} 
		\begin{equation}
			\bar F_i(c_i^*) = \Theta\left(\frac{1}{n\log(n)}\right),
			\qquad
			\bar F_i(\hat c_i) = \Theta(1).
		\end{equation}
		
		\item \textbf{Optimal Project Investment:}
		\begin{equation}
			w_{\cdot i}^* = \frac{1}{\phi_i} - \Theta\left(\frac{1}{n\log(n)} \right),
			\qquad 
			\hat w_{\cdot i} = \Theta(1).
		\end{equation}
		
		\item \textbf{Expected System-wide Loss in Utility (due to bank $i$):}
		\begin{equation}
			\bar F_i(c_i^*) (n-1) \Gamma(w_{\cdot i}^*;1) = \Theta(1),
			\qquad
			\bar F_i(\hat c_i) (n-1) \Gamma(\hat w_{\cdot i};1) = \Theta(n).
		\end{equation}

	\end{enumerate}
\end{corollary}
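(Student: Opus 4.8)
The plan is to reduce every claim to the two-sided bound on $c_i^*$ from Proposition~\ref{prop:PoA_exp_exact_c}, combined with the exponential form $\bar F_i(x) = e^{-x/\lambda_i}$, $f_i(x) = \lambda_i^{-1}e^{-x/\lambda_i}$, and then to propagate the $\Theta$-orders through the explicit formulas~\eqref{sol:dec} and~\eqref{sol:cent}. The decentralized side of each part is immediate: $\hat c_i$ and $\hat w_{\cdot i}$ in~\eqref{sol:dec} do not depend on $n$, so $\bar F_i(\hat c_i)$, $\hat w_{\cdot i}$, and $\Gamma(\phi_i\hat w_{\cdot i};1)$ are fixed positive constants — positivity of $\bar F_i(\hat c_i)$ because the exponential density is fully supported, and positivity of $\Gamma(\phi_i\hat w_{\cdot i};1) = -\log(1-\phi_i\hat w_{\cdot i})$ because the hypothesis $\hat w_{\cdot i}>0$ and the bound $\phi_i\hat w_{\cdot i}<1$ from~\eqref{sol:dec} place $\phi_i\hat w_{\cdot i}\in(0,1)$. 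Hence each decentralized quantity is $\Theta(1)$, and multiplying by $n-1$ gives the $\Theta(n)$ statement in (iii).

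For part (i) I would exponentiate the bound of Proposition~\ref{prop:PoA_exp_exact_c}. The upper bound $c_i^*\le\lambda_i\log\!\big(\tfrac{\theta_i C_U(n-1)}{\lambda_i r}\log n\big)$ yields $\bar F_i(c_i^*)=e^{-c_i^*/\lambda_i}\ge\tfrac{\lambda_i r}{\theta_i C_U(n-1)\log n}$, and the lower bound yields $\bar F_i(c_i^*)\le\tfrac{\lambda_i r}{\theta_i(n-1)\left[\log(n-1)-\log\left(\Gamma(\eta_i;1)/\Gamma(\phi_i\hat w_{\cdot i};1)\right)\right]}$; since $\log(n-1)$ minus an $n$-independent constant is $\Theta(\log n)$, both sides are $\Theta(1/(n\log n))$, giving $\bar F_i(c_i^*)=\Theta(1/(n\log n))$.

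Part (ii) then follows by plugging $\bar F_i(c_i^*)\to 0$ into the centralized expression for $w_{\cdot i}^*$: for $n$ large the ratio $\phi_i\theta_i\bar F_i(c_i^*)/\mu_i$ lies below $1$, so the non-degenerate branch is active and $w_{\cdot i}^*=\tfrac{1}{\phi_i}-\tfrac{\theta_i}{\mu_i}\bar F_i(c_i^*)=\tfrac{1}{\phi_i}-\Theta(1/(n\log n))$. For part (iii) I would use $1-\phi_i w_{\cdot i}^*=\phi_i\theta_i\bar F_i(c_i^*)/\mu_i$, so that $\Gamma(\phi_i w_{\cdot i}^*;1)=-\log\!\big(\phi_i\theta_i\bar F_i(c_i^*)/\mu_i\big)=\log\!\big(\mu_i/(\phi_i\theta_i)\big)+c_i^*/\lambda_i$, which is $\Theta(\log n)$ by Proposition~\ref{prop:PoA_exp_exact_c} (I read the $\Gamma(w_{\cdot i}^*;1)$ in the statement as $\Gamma(\phi_i w_{\cdot i}^*;1)$, matching~\eqref{eq:cent_opt}; note $1-w_{\cdot i}^*$ is itself negative for large $n$ when $\phi_i<1$, so the $\phi_i$ is needed). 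Multiplying, $\bar F_i(c_i^*)(n-1)\Gamma(\phi_i w_{\cdot i}^*;1)=\Theta(1/(n\log n))\cdot(n-1)\cdot\Theta(\log n)=\Theta(1)$.

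There is no substantive obstacle here; the proof is pure bookkeeping on top of Proposition~\ref{prop:PoA_exp_exact_c}. The only points requiring a little care are: (a) confirming that the degenerate branches of the piecewise definitions in~\eqref{sol:cent} are inactive for large $n$, since $r/(\theta_i[\Gamma(\eta_i;1)+(n-1)\Gamma(\phi_i w_{\cdot i}^*;1)])\to 0\le f_i(0)$ and $\phi_i\theta_i\bar F_i(c_i^*)/\mu_i\to 0<1$; (b) keeping track of the harmless additive constant $-\log\!\left(\Gamma(\eta_i;1)/\Gamma(\phi_i\hat w_{\cdot i};1)\right)$ inside the logarithm of the lower bound, which is finite precisely because $\hat w_{\cdot i}>0$; and (c) recording that all the constants hidden in the $\Theta$'s are strictly positive, which again reduces to full support of the shock density and the standing hypothesis $\hat w_{\cdot i}>0$.
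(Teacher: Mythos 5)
Your proposal is correct and follows essentially the same route as the paper, whose proof is exactly the one-line instruction to plug the rate for $c_i^*$ from Proposition~\ref{prop:PoA_exp_exact_c} into $\bar F_i(\cdot)$, the formula for $w_{\cdot i}^*$, and $\Gamma(\cdot\,;1)$. If anything you are slightly more careful than the paper: you use the explicit two-sided bounds (rather than only $c_i^*\asymp\lambda_i\log n$, which by itself would only give $\bar F_i(c_i^*)=n^{-1+o(1)}$), you check the piecewise branches are non-degenerate for large $n$, and you correctly read $\Gamma(w_{\cdot i}^*;1)$ as $\Gamma(\phi_i w_{\cdot i}^*;1)$ consistent with~\eqref{eq:cent_opt}.
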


The proofs require only plugging in the asymptotic rate for $c^*_i$ from Proposition~\ref{prop:PoA_exp_exact_c} into $\bar F_i(\cdot)$, the expression for $w_{\cdot i}^*$, and $\Gamma(w_{\cdot i};1)$.}

Notice that we must have $w_{\cdot i} < \phi_i^{-1}$ to ensure wealth remains positive, yet we can still pin down the rate at which the \change{optimal project investments approach their} upper bound. \change{The term $\bar F_i(c_i^*) (n-1) \Gamma(w_{\cdot i}^*;1)$ appears in Eq.~\eqref{eq:cent_opt}, the value function for the planner. It represents the expected loss in utility to all of bank $i$'s external investors when $i$ suffers a liquidity shortage: $\Gamma(w_{\cdot i}^*;1)$ captures the loss in utility to a single investing bank, of which there are $n-1$ within the system, and $\bar F_i(c_i^*)$ is proportional to the likelihood of such an event. It is particularly interesting that this quantity is of constant order, as it shows that the planner perfectly compensates for larger possible losses in utility in the system through its reduction of the probability of such an event.}

The results in Corollary~\ref{cor:asymp_comparison} allow us to analyze differences in welfare between the two settings in the following section.

\subsection{Welfare Gap}
\label{ssec:welfare_gap}

We now turn to the gap between value functions from~\eqref{eq:val_gap}. It will be useful to have $\M_n$ denote the set of banks \change{whose projects are invested in by other firms} in the planner’s optimal allocation, i.e. $\M_n = \left\{i \in \{1,\dots,n\} : w_{\cdot i}^* > 0\right\}$. Banks in $\M_n$ form the `core' of the financial network. If for some $i$ we have $w_{\cdot i}^* = 0$, then it must be the case that $c_i^* = \hat c_i$ and $\frac{\phi_i\theta_i \bar F_i(c_i^*)}{\mu_i} > 1$. For such a bank $i$, the planner’s optimal $c_i^*$ would remain constant at $\hat c_i$, even as $n$ grows.

In this model, we define \change{the welfare ratio as:} 
$$\mathrm{WR} = \frac{V}{\sum_{i=1}^n V_i}.$$
\change{This quantity is occasionally referred to as} the `price of anarchy', and reflects how greedy decentralized behavior leads to lesser welfare in the system~\citep{papadimitriou2001algorithms}. In the following result, we characterize its asymptotic behavior.

\begin{prop}
	\label{prop:PoA_order}
	Assume that $\gamma_i = 1$ and $F_i(x) = 1 - e^{-\frac{x}{\lambda_i}}$ for all $i$. Then, as $n\to\infty$, we have
	
	$$\mathrm{\change{ WR}} = \Theta(1).$$
\end{prop}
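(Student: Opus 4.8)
Everything reduces to the two closed‑form value functions. Proposition~\ref{prop:cent_hjb_soln} gives $V(t,x_1,\dots,x_n)=(T-t)J^*_C+\sum_{i=1}^n\log x_i$, and Corollary~\ref{cor:dec_verification} together with Proposition~\ref{prop:dec_hjb_soln} gives $\sum_{i=1}^nV_i(t,x_i)=(T-t)\sum_{i=1}^nJ_i^*+\sum_{i=1}^n\log x_i$ (using $\gamma_i=1$), with $J^*_C,J_i^*$ the constants from~\eqref{eq:cent_opt} and Proposition~\ref{prop:dec_hjb_soln}. Since $T-t$ is a fixed positive constant and the common term $\sum_i\log x_i$ is $O(n)$, one has $\mathrm{WR}=\bigl[(T-t)J^*_C+O(n)\bigr]\big/\bigl[(T-t)\sum_iJ_i^*+O(n)\bigr]$, so it suffices to bound $J^*_C$ above by $C\cdot n^2$ and $\sum_iJ_i^*$ below by $c\cdot n^2$ for positive constants $c,C$. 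The lower bound $\mathrm{WR}\ge 1$ is then automatic from~\eqref{eq:val_gap} once the denominator is positive, so the entire content is the matching \emph{upper} bound on $\mathrm{WR}$.

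Next I would bound $J^*_C$ crudely. The maximand of~\eqref{eq:cent_hjb} decomposes into $n$ blocks (the variables $c_i$ together with the investments into project $i$); evaluated at any admissible controls, the $i$-th block (compare~\eqref{eq:cent_opt}) is at most $r+\tfrac{(n-1)\mu_i}{\phi_i}+\tfrac{\eta_i\mu_i}{\phi_i}$, because $c_i\ge 0$, each $w_{ki}<\phi_i^{-1}$, and the $\Gamma(\,\cdot\,;1)$-penalties are nonnegative. Summing, $J^*_C\le (n-1)\sum_{i=1}^n\tfrac{\mu_i}{\phi_i}+O(n)=O(n^2)$ under the standing boundedness of the parameters (which also gives $\sum_i\mu_i/\phi_i=\Theta(n)$).

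For the denominator I would use the explicit decentralized optimum. Setting $\rho_j:=\phi_j\theta_j\bar F_j(\hat c_j)/\mu_j$ and inserting $\hat w_{\cdot j}$ from~\eqref{sol:dec} into $J_i^*$, the contribution of each project $j\neq i$ to $J_i^*$ collapses to $\beta_j:=\tfrac{\mu_j}{\phi_j}\bigl(1-\rho_j+\rho_j\log\rho_j\bigr)$ when $\rho_j<1$ and to $0$ otherwise; the elementary inequality $1-\rho+\rho\log\rho\ge 0$ on $(0,1]$ shows $\beta_j\ge 0$, vanishing precisely off the decentralized core $\{j:\rho_j<1\}$. Hence $\sum_{i=1}^nJ_i^*=(n-1)\sum_{j=1}^n\beta_j+\sum_{i=1}^n\mathrm{const}_i$ with $\mathrm{const}_i=\tfrac{\eta_i\mu_i}{\phi_i}+(1-\hat c_i)r-\theta_i\bar F_i(\hat c_i)\Gamma(\eta_i;1)$. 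For the exponential law, $\hat c_j$ and hence $\rho_j$ are explicit (one gets $\bar F_j(\hat c_j)=\min\{1,\ \lambda_j r/(\theta_j\Gamma(\eta_j;1))\}$, so $\mathrm{const}_i=O(1)$), and under the paper's parameter restrictions a positive fraction of banks have $\rho_j$ bounded away from $1$, so $\sum_j\beta_j=\Theta(n)$ and $\sum_iJ_i^*=\Theta(n^2)$ with a positive leading coefficient. Combining with the previous paragraph, $1\le\mathrm{WR}\le\bigl[(T-t)(n-1)\sum_i\mu_i/\phi_i+O(n)\bigr]\big/\bigl[(T-t)(n-1)\sum_j\beta_j+O(n)\bigr]$, which is bounded above, so $\mathrm{WR}=\Theta(1)$.

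The main obstacle is the lower bound $\sum_j\beta_j=\Theta(n)$, i.e.\ that the \emph{decentralized} network already carries a macroscopic core along which the per‑edge growth contribution $\beta_j$ does not vanish. This cannot be dispensed with: if, say, every $\rho_j\ge 1$ (no project attracts outside investors in the decentralized regime), the planner can still force enough cash reserves to make every project profitable and build a dense network, so $J^*_C=\Theta(n^2)$ while $\sum_iJ_i^*=\Theta(n)$ and $\mathrm{WR}=\Theta(n)$. The exponential assumption makes each $\rho_j$ explicit and, combined with the restrictions on the parameter ranges, rules this degenerate case out. The remainder is routine bookkeeping: checking that the $O(n)$ corrections (the diagonal terms $\mathrm{const}_i$ and the shift $\sum_i\log x_i$) are of strictly lower order than the $\Theta(n^2)$ leading terms in numerator and denominator.
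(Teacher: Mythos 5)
Your skeleton (explicit separable value functions, $\mathrm{WR}\ge 1$ from~\eqref{eq:val_gap}, so only a matching upper bound is needed) is fine, and your algebra for the decentralized side is correct ($\beta_j=\tfrac{\mu_j}{\phi_j}(1-\rho_j+\rho_j\log\rho_j)$ and $\bar F_j(\hat c_j)=\lambda_j r/(\theta_j\Gamma(\eta_j;1))\wedge 1$). But the closing step has a genuine gap: you bound $J^*_C$ crudely by $O(n^2)$ and then need $\sum_i J_i^*=\Theta(n^2)$, which you justify by asserting that ``under the paper's parameter restrictions a positive fraction of banks have $\rho_j$ bounded away from $1$.'' Nothing in the hypotheses of the proposition (log utility, exponential shocks, Assumptions~\ref{assn} and~\ref{assn:cent_uniq}) implies this. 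The parameters are heterogeneous and the decentralized core $\{j:\rho_j<1\}$ may well have size $O(1)$ or grow sublinearly; in that regime $\sum_i J_i^*=\Theta\bigl(n\max(1,|\M_n|)\bigr)=o(n^2)$, your sandwich only yields $\mathrm{WR}=O\bigl(n/|\M_n|\bigr)$, and the statement is not proved even though it is still true. In other words, the crude per-block bound $r+(n-1)\mu_i/\phi_i+\eta_i\mu_i/\phi_i$ on the planner's objective throws away exactly the information that makes the result hold for cores of arbitrary size; it also never uses the exponential assumption, which is a sign the crux is being bypassed.

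The paper's proof closes this differently: it writes the welfare gap termwise as in~\eqref{eq:val_gap_exact}, notes that the $i$-th term vanishes when $w^*_{\cdot i}=0$ (in which case $c_i^*=\hat c_i$), and for $i\in\M_n$ invokes the exponential-tail asymptotics of Proposition~\ref{prop:PoA_exp_exact_c} and Corollary~\ref{cor:asymp_comparison}: $c_i^*\asymp\lambda_i\log n$ so $r c_i^*=o(n)$, $\bar F_i(c_i^*)=\Theta(1/(n\log n))$, and crucially $(n-1)\bar F_i(c_i^*)\Gamma(\phi_i w^*_{\cdot i};1)=\Theta(1)$. Hence each core term is $\Theta(n)$ (after dividing by $n$ it tends to $\theta_i\bar F_i(\hat c_i)$), the gap is $\Theta(n|\M_n|)$, and since $\sum_i V_i=(T-t)\Theta(n|\M_n|)$ as well (the cores coincide under the uniqueness assumption), the ratio is $1+\Theta(1)$ for \emph{any} growth rate of $|\M_n|$. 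To repair your argument you would have to replace the $O(n^2)$ bound on $J^*_C$ by the estimate $J^*_C-\sum_i J_i^*=O(n|\M_n|)$, i.e.\ re-derive precisely these asymptotics; also note that your ``degenerate case'' (empty decentralized core but dense planner core) is excluded by the uniqueness guaranteed by Assumption~\ref{assn:cent_uniq}, not by the exponential law or parameter ranges as you suggest.
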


The proof is found in Appendix~\ref{ssec:pfs-diff}, and uses the results from Corollary~\ref{cor:asymp_comparison}.

It is particularly interesting that the \change{welfare ratio} does not grow with the system size $n$, nor the remaining time horizon $(T-t)$. A more precise result can be obtained if banks are sufficiently homogeneous, where we can compute the limiting value.

\begin{corollary}
	\label{cor:PoA_limit}
	\change{In the setting of Proposition~\ref{prop:PoA_order},} assume further that all banks in $\M_n$ are identical (i.e. $\mu_j = \mu$, $\phi_j = \phi$, $\theta_j = \theta$, $\eta_j = \eta$, and $\lambda_j = \lambda$ for some given constants $\mu, \phi, \theta, \eta$ and $\lambda$). If $|\M_n| \underset{n\to \infty}{\to} \infty$, then 
	\begin{alignat}{3}
		\frac{V_i}{|\M_n|(T-t)} &\underset{n\to \infty}{\to} \frac{\mu}{\phi} + \theta \bar F(\hat c))\left[\log\left(\frac{\phi \theta \bar F(\hat c)}{\mu}\right) - 1\right], \quad && \forall i=1,\dots, n \\
		\frac{V}{n|\M_n|(T-t)} &\underset{n\to \infty}{\to} \frac{\mu}{\phi}. &&
	\end{alignat}
	where $\hat c$ is given in~\eqref{sol:dec} and $\bar F(\hat c) = e^{-\frac{\hat c}{\lambda}}$. As a result, we have:
	\begin{equation}
		\label{eq:PoA_limit}
		\mathrm{\change{WR}} \underset{n\to \infty}{\to} \frac{1}{1 + \barfix{\frac{\phi \theta \bar F(\hat c)}{\mu}} \left[\log\left(\barfix{\frac{\phi \theta \bar F(\hat c)}{\mu}}\right) - 1\right]}.
	\end{equation}
\end{corollary}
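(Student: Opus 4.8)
The plan is to compute the two limiting ratios separately and then combine. First I would use Corollary~\ref{cor:dec_verification} (with $\gamma_i=1$) to write $V_i = g_i(t) + \log x_i$ where $g_i(t) = (T-t)J_i^*$, and similarly use Proposition~\ref{prop:cent_hjb_soln} to write $V = g(t) + \sum_i \log x_i$ with $g(t) = (T-t)J^*_C$. Since the $\log x_i$ terms are independent of $n$ and $(T-t)$, dividing by $|\M_n|(T-t)$ or $n|\M_n|(T-t)$ kills them asymptotically; so the whole computation reduces to tracking the leading order of $J_i^*$ and $J^*_C$.

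Next I would specialize $J_i^*$ to the homogeneous core. Under the assumption that all banks in $\M_n$ are identical and that $|\M_n|\to\infty$, a bank $i$ in the core holds cash $\hat c_i = \hat c$ (a constant, by~\eqref{sol:dec}, depending only on fixed parameters) and invests $\hat w_{ji} = \hat w$ in every core project $j\neq i$. Counting the $\Theta(|\M_n|)$ nonzero terms in $\sum_{j\neq i}\hat w_{ij}\mu_j - \theta_j\bar F_j(\hat c_j)\Gamma(\phi_j\hat w_{ij};1)$, the dominant piece of $J_i^*$ is $|\M_n|$ times the per-edge contribution $\hat w \mu - \theta\bar F(\hat c)\Gamma(\phi\hat w;1)$; the finitely many $O(1)$ terms ($\eta_i\mu_i/\phi_i$, $(1-\hat c_i)r$, the self-jump term, and the periphery projects which each contribute zero to $\hat w_{ij}$) are negligible after dividing by $|\M_n|$. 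Now I substitute the closed form $\hat w = \frac{1}{\phi}(1 - \frac{\phi\theta\bar F(\hat c)}{\mu})$ and the identity $\Gamma(\delta;1) = -\log(1-\delta)$, with $1 - \phi\hat w = \frac{\phi\theta\bar F(\hat c)}{\mu}$; a short algebraic simplification collapses $\hat w\mu - \theta\bar F(\hat c)\Gamma(\phi\hat w;1)$ to $\frac{\mu}{\phi} - \theta\bar F(\hat c) + \theta\bar F(\hat c)\log\!\big(\frac{\phi\theta\bar F(\hat c)}{\mu}\big)$, which is exactly $\frac{\mu}{\phi} + \theta\bar F(\hat c)\big[\log\big(\frac{\phi\theta\bar F(\hat c)}{\mu}\big) - 1\big]$. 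This gives the first displayed limit after observing $|\M_n| = |\M_n|$ and $(n-1) \asymp |\M_n|$-edges — actually I should be careful: each core bank has $n-1$ external investors but only $|\M_n|-1$ of its investments land in the core, with the rest zero, so the edge count driving $J_i^*$ is $|\M_n|-1 \asymp |\M_n|$, matching the stated normalization.

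For the numerator I would specialize $J^*_C$ from~\eqref{eq:cent_opt}. Summing over $i$, the periphery banks ($w^*_{\cdot i}=0$) contribute only $O(1)$ each, so $J^*_C = \sum_{i\in\M_n}\big[(1-c_i^*)r + (n-1)w_{\cdot i}^*\mu + \frac{\eta\mu}{\phi}\big] - \theta\bar F(c_i^*)\big[\Gamma(\eta;1) + (n-1)\Gamma(\phi w_{\cdot i}^*;1)\big] + O(1)$. Here I invoke Corollary~\ref{cor:asymp_comparison}: $\bar F_i(c_i^*) = \Theta(1/(n\log n))$, $w_{\cdot i}^* = \frac{1}{\phi} - \Theta(1/(n\log n)) \to \frac{1}{\phi}$, and the externality term $\bar F(c_i^*)(n-1)\Gamma(\phi w_{\cdot i}^*;1) = \Theta(1)$. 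Also $c_i^* \asymp \lambda\log n$ by Proposition~\ref{prop:PoA_exp_exact_c}, so $(1-c_i^*)r = -\Theta(\log n)$, which is $o(n)$ and hence negligible after dividing by $n|\M_n|$. The surviving leading term in each summand is $(n-1)w_{\cdot i}^*\mu \sim n\cdot\frac{\mu}{\phi}$, so $J^*_C \sim |\M_n|\cdot n\cdot\frac{\mu}{\phi}$, yielding $\frac{V}{n|\M_n|(T-t)} \to \frac{\mu}{\phi}$.

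Finally I would combine: $\mathrm{WR} = V / (nV_i)$ (all $n$ banks have the same $V_i$ to leading order — wait, only core banks do; periphery banks have $V_i = O(1)$, so $\sum_i V_i = |\M_n|\cdot(T-t)\cdot[\text{core }J^*] + O(n)$, and after dividing, $\sum_i V_i \sim n V_i^{\text{core}}$ only if $|\M_n| \asymp n$; if instead $|\M_n| = o(n)$ the periphery $O(n)$ terms could matter). The cleanest route is to note $\sum_i V_i \asymp |\M_n|(T-t)\big(\frac{\mu}{\phi} + \theta\bar F(\hat c)[\log(\cdot)-1]\big) \cdot (\text{number of core edges per bank})$ — rather, to be safe I would directly form the ratio $V / \sum_i V_i$ using the two established limits, writing $\sum_i V_i \sim n\cdot|\M_n|(T-t)\cdot\frac{1}{n}(\cdots)$. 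The limit~\eqref{eq:PoA_limit} then drops out as $\frac{\mu/\phi}{(\mu/\phi) + \theta\bar F(\hat c)[\log(\phi\theta\bar F(\hat c)/\mu) - 1]}$, i.e. the reciprocal $\frac{1}{1 + \frac{\phi\theta\bar F(\hat c)}{\mu}[\log(\frac{\phi\theta\bar F(\hat c)}{\mu})-1]}$ after multiplying numerator and denominator by $\frac{\phi}{\mu}$. The main obstacle I anticipate is bookkeeping the edge counts and the contribution of periphery banks correctly — making sure that the normalization $n|\M_n|(T-t)$ versus $|\M_n|(T-t)$ is consistent with whether external investments count $n-1$ investors or $|\M_n|-1$ core projects, and confirming that all the genuinely $O(1)$ and $O(\log n)$ terms are dominated. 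Once the leading-order terms of $J_i^*$ and $J^*_C$ are correctly isolated, the algebra using $\Gamma(\delta;1) = -\log(1-\delta)$ and the closed form for $\hat w$ is routine.
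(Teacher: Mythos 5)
Your overall route is the paper's: strip the $\log x$ terms, reduce everything to the leading order of $J_i^*$ and $J^*_C$, use homogeneity of the core so that $J_i^*$ is $|\M_n|$ times the per-project contribution $\hat w\mu-\theta\bar F(\hat c)\Gamma(\phi\hat w;1)$, use $c_i^*\asymp\lambda\log n$, $w^*_{\cdot i}\to\phi^{-1}$ and the $\Theta(1)$ externality term so that only $(n-1)w^*_{\cdot i}\mu$ survives in $J^*_C$, and then take the ratio; the algebra with $\Gamma(\delta;1)=-\log(1-\delta)$ is carried out correctly in both places.

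The one genuine misstep is in your last paragraph, where you assert that periphery banks have $V_i=O(1)$ and consequently worry that the case $|\M_n|=o(n)$ might break the ratio. This is backwards: membership in $\M_n$ concerns whether bank $i$'s \emph{project} attracts investment, not whether bank $i$ invests. Every bank, core or periphery, holds the same fraction $\hat w$ in each core project (all $\gamma_i=1$), so every $J_i^*$ contains the identical $|\M_n|$-leading term $|\M_n|\big[\hat w\mu-\theta\bar F(\hat c)\Gamma(\phi\hat w;1)\big]$; the only core/periphery distinction is the edge count $|\M_n|-1$ versus $|\M_n|$, which is irrelevant in the limit (this is exactly why the first limit is stated for all $i=1,\dots,n$, and is the remark the paper makes at the end of its argument). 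With that correction, $\sum_{i=1}^n V_i\sim n|\M_n|(T-t)\big(\tfrac{\mu}{\phi}+\theta\bar F(\hat c)[\log(\tfrac{\phi\theta\bar F(\hat c)}{\mu})-1]\big)$ holds whether $|\M_n|\asymp n$ or $|\M_n|=o(n)$, the garbled normalization ``$n\cdot|\M_n|(T-t)\cdot\frac1n(\cdots)$'' is unnecessary, and the welfare-ratio limit~\eqref{eq:PoA_limit} follows directly from the two displayed limits. As written, your proof leaves the $|\M_n|=o(n)$ case unresolved, so this fix is needed to close the argument; everything else matches the paper's proof. (A minor bookkeeping note: the periphery contribution to $J^*_C$ is $O(n)$ in total, not $O(1)$, but it is still killed by dividing by $n|\M_n|$ since $|\M_n|\to\infty$.)
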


Corollary~\ref{cor:PoA_limit} verifies that the \change{welfare ratio} is of constant order in $n$, and the proof is found in Appendix~\ref{ssec:pfs-diff}. Of particular interest, the rate at which $|\M_n|$ grows in $n$ does not appear in our result. This implies that the limiting \change{welfare ratio} is independent from the fraction of the system that operates as its `core'. Notice also that $\phi \theta \bar F(\hat c) < \mu$, and hence the right-hand side in~\eqref{eq:PoA_limit} is greater than one. Moreover, the limiting \change{welfare ratio} is increasing in $\frac{\phi \theta \bar F(\hat c)}{\mu}$. Therefore, as the profitability of projects in the decentralized setting is reduced, the limiting \change{welfare ratio} grows to infinity.

Corollary~\ref{cor:PoA_limit} is verified numerically. Using the parameters in Table~\ref{tab:PoA_sim_params}, we compute the individual and collective value functions. The resulting \change{welfare ratio} is plotted in Figure~\ref{fig:PoA_sim}, along with the limiting value in~\eqref{eq:PoA_limit}. We see that this quantity quickly converges to its limit.
\begin{figure}[htbp]
	\centering
	\includegraphics[width=0.5\textwidth]{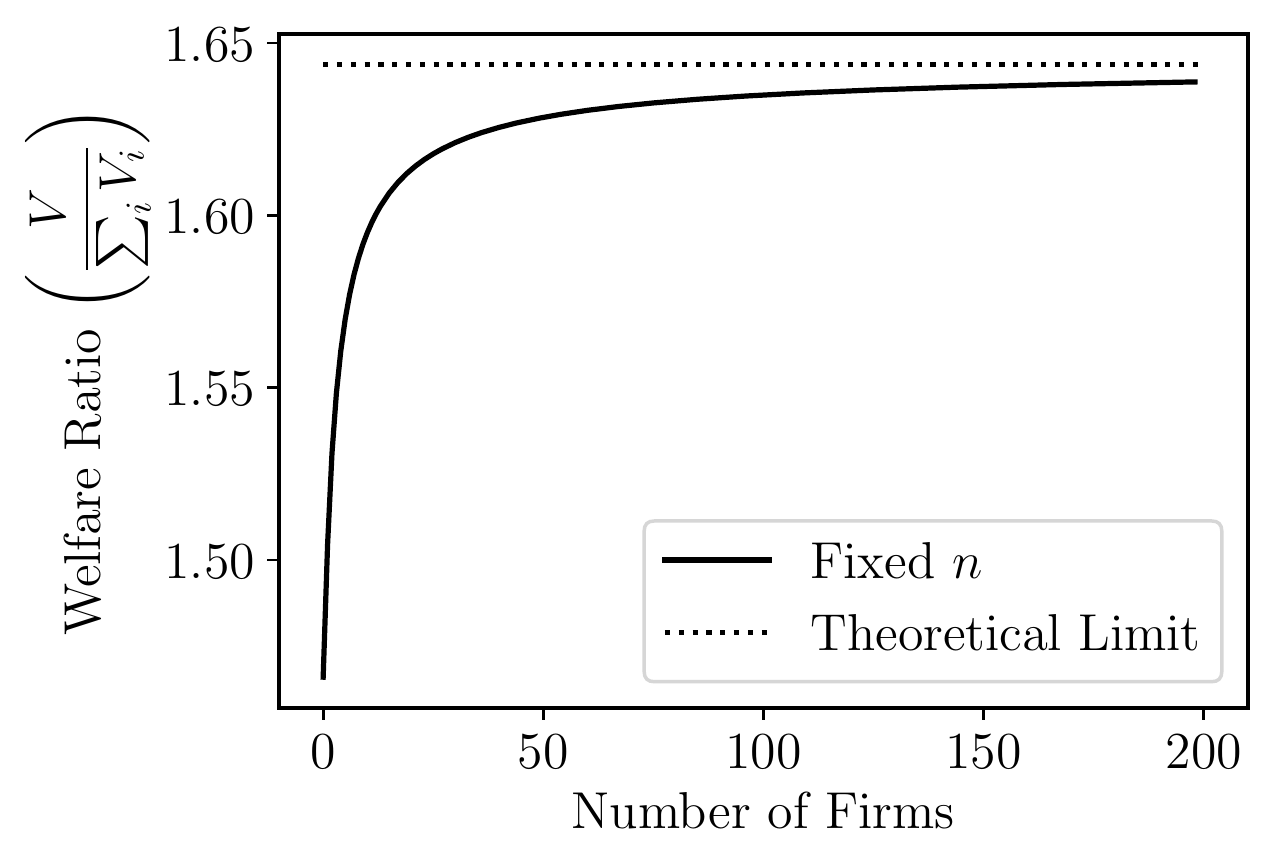}
	\captionsetup{width=.8\linewidth, font = small, justification=justified}
	\caption{Simulating the Welfare Ratio for a system of identical firms as $n$ grows. Uses parameters shown in Table~\ref{tab:PoA_sim_params}.}
	\label{fig:PoA_sim}
\end{figure}
\begin{table}[h]
	\centering
	\captionsetup{width=.8\linewidth, font = small, justification=justified}
	\caption{System parameters used in simulations for Figures~\ref{fig:PoA_sim} and~\ref{fig:Liq_Loss_Shock_Arrival}. Relevant code is available \href{https://github.com/drigobon/rigobon-sircar-2022}{here}.}
	\label{tab:PoA_sim_params}
	\begin{tabular}{ c  c  l }
		\toprule
		Notation & Value & Description \\
		\midrule
		$r$ & 0.01 & Risk-free rate \\
		$\mu$ & 0.045 & Excess drift\\
		$\phi$ & 0.4 & Losses to External Investors\\
		$\eta$ & 0.5 & Losses to Associated Bank\\
		$F(x)$ & $1 - e^{-\frac{x}{\lambda}}$ & CDF of shock size\\
		$\lambda$ & 1 & Parameter of $F(\cdot)$ \\
		$\theta$ & 0.1 & Shock arrival rate\\
		$\gamma$ & 1 & Relative risk aversion coefficient\\
		\bottomrule
	\end{tabular}
\end{table}

\subsection{Replicating the Centralized Allocation}
\label{ssec:res-replication}

Finally, we may be interested in studying how banks in the decentralized setting can be incentivized or forced to replicate the planner's optimal allocation. \change{We could approach this regulatory question in one of two ways -- either directly enforcing a liquidity requirement on individual banks (corresponding to the planner's optimal $c_i^*$), or allowing regulators to control the previously exogenous parameter $\eta_i$, which dictates how much of bank $i$'s wealth must be invested in their own project. In this section, we choose to pursue the latter approach.} The rationale for this is twofold: first, $\eta_i$ plays the fundamental role in decentralized banks' choice \change{of optimal cash reserves. From the analysis near the end of Section~\ref{ssec:opts-dec}, we observed that by increasing $\eta_i$, banks will increase their cash holdings and therefore reduce their projects' riskiness}, which can lead them to eventually match the planner's optimal liquidity reserves. Second, we can imagine that \change{externally investing} banks are permitted to write a contract stipulating the associated bank's degree of co-investment. This kind of contracting is not a focus of our paper, but is instead analyzed in more detail with Principal-Agent problems such as~\citet{hernandez2020bank}. Nonetheless, the co-investment contract can be designed to ensure that individual banks hold sufficient \change{ reserves of cash}.

\change{There are several other strategies used by regulators in practice to reduce inefficiencies in the financial system, such as bail-in or bail-out policies, or government intervention~\citep{bernard2022bail, kanik2019lombard}. While these strategies are fruitful directions for future work, we will focus on how co-investment requirements can be used to reduce the riskiness of core banks in the system, since these parameters are already integrated into the model and directly imply equilibirum levels of liquidity reserves.}

Let $\eta_i^C$ (respectively $\eta_i^D$) denote the fraction of bank $i$'s wealth lost upon liquidity shortage in the centralized (resp. decentralized) setting. We would like to choose $\eta_i^D$ so that decentralized banks replicate the planner's optimum with given values $\eta_i^C$. More precisely, we seek to find $\eta_i^D$ solving $c_i^*(\eta_i^C) = \hat c_i(\eta_i^D)$ for all $i$, where we write the optimal controls in a way that highlights their dependence on the underlying values of $\eta$. Even though the optimal allocations are identical, however, we note that the decentralized optimum is still inefficient (with respect to the optimal centralized allocation corresponding to $\eta_i^D$). Using equations~\eqref{sol:dec} and~\eqref{sol:cent}, we find that:

$$\eta_i^D = 1 - (1 -\eta_i^C)\left(1 - \phi_i w_{\cdot i}^*(\eta_i^C)\right)^{n-1}.$$

First, notice that whenever $w_{\cdot i}^*(\eta_i^C) > 0$, the resulting value of $\eta_i^D$ will grow with $n$ towards its upper bound of $1$. This is intuitive -- banks whose projects are highly invested in require the strongest incentive to reduce their project's riskiness. \change{It is precisely these banks that comprise the `core' of the financial network, and will be subjected to the strictest regulations. In contrast, banks in the periphery (with $w_{\cdot i}(\eta_i^C) = 0$) are no more strictly regulated ($\eta_i^D = \eta_i^C$), since their liquidity shortages cause only localized losses.}
Second, we see that for banks to replicate the planner's optimum, it is necessary for bank $i$'s degree of co-investment to depend on their liabilities throughout the system. It is therefore necessary to know the complete structure of the financial network to determine the value of $\eta_i^D$, which may not be known to individual \change{investing banks}. 
Finally, an interesting case occurs when we choose $\eta_i^C = 0$. In this case, the value of $\eta_i^D$ is only non-zero if $w_{\cdot i}^*(0) > 0$. Namely, banks \change{in the periphery would not be required to hold any} stake in their own projects.

\change{This discrepancy suggests that while stricter regulations on core banks can improve the efficiency of the financial system, there is little if any value to be gained by increasing restrictions on banks in the periphery. When increasing the self-investment requirement of some bank in the periphery, one of two will occur: i) this increase causes $i$ to have large enough cash reserves for other banks in the system to invest in project $i$, or ii) the increase in $\eta_i$ is not sufficiently large for $i$'s project to be profitable for other banks in the system. In case i), bank $i$ moves into the `core' of the financial system, as their project is now capitalized by other firms. In this case, it is not possible to conclusively state whether or not this increase in $\eta_i$ yielded higher or lower social welfare. However, in case ii), the small increase to $\eta_i$ serves only to push bank $i$'s capital towards less efficient investments -- they are obligated to hold additional capital in a project whose expected return is less than $r$, and they choose to hold greater cash reserves to compensate for this heightened exposure -- which also provides lower returns than $r$. In this case, both effects serve only to reduce $i$'s wealth and hence the total welfare in the financial system.}

\section{Conclusion}
\label{sec:disc}

In this paper, we present a model in which banks in a financial system control both their own levels of cash reserves, and their investment in each others' risky projects. 
We compute the unique optimal allocations of capital for two distinct organizations of the system, and study their differences qualitatively and quantitatively. First, we analyze the setting where each bank acts with pure self-interest. We compute explicitly the optimal allocation, and find that the size of project investments are closely related to a Sharpe-like ratio -- which is controlled by \change{the bank associated with a particular project}. In particular, the equilibrium financial network exhibits a `core-periphery' structure, in which only a subset of banks serve as \change{valid investment targets}. Second, we formulate the optimization problem of a social planner, who seeks to maximize the total welfare in the system. Under a few technical assumptions, we are able to prove the existence of a unique optimal allocation. \change{A substantial technical contribution of this paper is to provide explicit solution of a multi-player game of intensity control in a CRRA-utility maximization setting. We also also able to solve explicitly the associated social planner's problem, and characterize the asymptotics of the system's inefficiencies. Finally, we can also extend the model in various directions described in the Appendices, which can be solved up to numerical solution of systems of algebraic equations.}

In particular, we find that the planner's optimum exhibits low-frequency and high-severity events of distress, which aligns with the `robust-yet-fragile' feature observed by~\citet{Gai2010}. The difference in these two optimal allocations is driven by a negative externality, where individual banks are excessively risky given the potential losses that they may induce.

In the case where shocks are exponentially distributed, we can precisely compute how the externality's severity depends on the system's size. We see that the planner compensates for an increased number of \change{investors} by reducing the risk of a bank's project. The planner perfectly balances the two effects, so that the expected losses in utility remain of constant order -- regardless of the system size. We are also able to see that the absolute loss in welfare due to decentralized behavior grows with the size of both the financial system and its core. However, and perhaps counterintuitively, the \change{welfare \textit{ratio}, also known} as the price of anarchy, is of constant order. Finally, we show that it is possible, through regulation or contracting between banks, to replicate the planner's optimal interbank allocation.

\change{Banks in the core of the system must be subjected to the strictest requirements, and will therefore have the strongest incentive to reduce their project's riskiness. This highlights both i) the danger of government bailouts, which can cause perverse incentives for individual banks, and ii) the heterogeneous impact of regulation on social welfare.}

\change{In the wake of the regional banking crisis in 2023, the widespread financial distress was addressed with forms of bailouts and takeovers of, for instance, Silicon Valley Bank and First Republic Bank. Subsequently, regulators are discussing revisions to capitalization and liquidity requirements of banks contained in Basel III \citep{basel3}, which led to a debate and substantial lobbying from large, central banks such as JP Morgan. This resulted in US regulators lowering their initial proposed increases to capital and liquidity requirements. Our work strongly supports the idea of different degrees of regulation imposed on banks in the core and the periphery.}

We believe there are several interesting continuations of this work. First, a notable limitation of this model is that it does not contain a mechanism of contagion. For instance,~\citet{ait2015portfolio} consider a portfolio optimization problem where assets' jump components are self- and mutually exciting. An immediate extension of our work may be to incorporate jump processes with these features directly into the model. It may also be possible to show that self- and mutually exciting jumps can endogenously emerge, e.g. if an investing bank suffers losses to liquidity when \change{the bank associated with this project suffers}. 
\change{Second, we assume that the liquidity risk premium $r$ is fixed and exogenous, so that banks may borrow as much as they like at this rate. In practice, this rate would be determined endogenously by the relative supply and demand of liquidity, and substantially affected by market conditions.}
Additionally, financial crises are heavily destabilizing, and it is natural to assume that it is challenging (or impossible) to quickly rebalance a portfolio in the wake of such an event. Therefore, it is practical to prevent banks from instantaneously re-weighting their portfolios. This feature may lead to further inefficiencies caused by banks' inability to establish an optimal allocation of wealth shortly after a shock occurs. Furthermore, our model differs from the literature on strategic network formation in that creating an `investment linkage' to another bank is costless. It is natural to incorporate these costs into banks' optimization problems, for example, as there is a cost to performing due diligence \change{on a potential investment opportunity to assess its profitability}.



\section*{Acknowledgments}
We are grateful to Emma Hubert and Roberto Rigobon for helpful comments and discussion.


\bibliographystyle{plainnat}
\bibliography{library}

\newpage
\appendix

\section{Proofs}
\label{sec:proofs}

\subsection{Decentralized Network}
\label{ssec:pfs-dec}

\begin{proof}[Proof of Proposition~\ref{prop:dec_opt_hjb}]	
	First, we use the dynamic programming principle to consider only the optimal control over the time interval $[t,\tau]$, for a stopping time $\tau < T$ to be defined later. We can write the value function recursively as

	\begin{equation}
		\label{eq:dec_val_DP}
		V_i(t,x) = \sup_{(c^i_\cdot,w^{i\cdot}_\cdot)\in \A^i_{t,T}} \ \E 	\Big[V_i\left(\tau, X^i_{\tau}\right) \Big| X_t^i = x\Big],
	\end{equation}
	which holds for all $t<T$ and $\tau \le T$.

	Next, we for each bank $k$ we fix some admissible control $(c^k_\cdot,w^{k\cdot}_\cdot)\in \A^k_{t,T}$. By assumption, $V_i$ is once differentiable in both time and space, and using It\^{o}'s formula (see for instance~\cite{tankov2003financial}) we can write:	
	\begin{equation}
		\label{eq:dec_val_ito_int}
		\begin{split}
			V_i\left(\tau, X^i_{\tau}\right) -   V_i\left(t,X_t^i\right) = \ & 	\int_{t}^{\tau} \left[\partial_t V_i(s,X_s^i)  + \partial_x V_i\left(s,X_s^i\right) b_i({c}^i_s, {w}^{i\cdot}_s) X_s^i \right]ds \\
			&+ \sum_{j=1}^n \int_{t}^{\tau} \left[ V_i\left(s,X_s^i\right) - 	V_i\left(s,X_{s-}^i\right) \right] dN_s^j.
		\end{split}
	\end{equation}
	where $b_i({c}^i_t, {w}^{i\cdot}_t)$ is the coefficient on the $dt$ term in~\eqref{eq:wealth_sde}. 
	
	Recall that the jump process $N_t^j$ has instantaneous intensity $\theta_j \bar F_j(c^j_t) $. Therefore, the compensated process $M_t^j = N_t^j - \int_{0}^{t} \theta_j \bar F_j(c^j_s) ds$ is a martingale. Rewriting the integrals in~\eqref{eq:dec_val_ito_int} in terms of $dM_t^j$ and taking expectation conditioned on $X^i_t = x$ (denoted $\mathbb{E}_{t,x}$) of both sides yields:
	\begin{equation}
		\label{eq:dec_val_exp}
		\begin{split}
			\mathbb{E}_{t,x} \left[ V_i\left(\tau, X^i_{\tau}\right)\right]  -   V_i\left(t,X_t^i\right) = \ & \mathbb{E}_{t,x} \Bigg[ \int_{t}^{\tau} \mathcal{L}^{c^i_s,w^{i\cdot}_s}V_i(s,X^i_{s-}) ds \Bigg] \\
			&+ \mathbb{E}_{t,x} \left[ \int_{t}^{\tau} \left[ 	V_i\left(s,X_{s-}^i - \eta_i X_{s-}^i\right) - V_i\left(s,X_{s-}^i\right) \right] dM_s^i \right]\\
			&+ \sum_{j\neq i} \mathbb{E}_{t,x} \left[ \int_{t}^{\tau} \left[ 	V_i\left(s,X_{s-}^i - \phi_j {w}^{ij}_s X_{s-}^i\right) - V_i\left(s,X_{s-}^i\right) \right] dM_s^i \right],
		\end{split}
	\end{equation}
	where the generator $\mathcal{L}^{c_i,w_{i\cdot}}$ is defined to be
	\begin{equation}
		\label{eq:dec_generator}
		\begin{split}
			\mathcal{L}^{c_i,w_{i\cdot}}\psi(t,x) = \ &\partial_t \psi(t,x)  +  \left( (1 - c_i) r  + \sum_{j\neq i}w_{ij}  \mu_j + \frac{\eta_i \mu_i}{\phi_i}\right) x \partial_x \psi \\
			&+ \theta_i \left(1 - F_i(c_i)\right) \Big[ \psi(t, x (1 - \eta_i)) - \psi(t,x) \Big]  \\
			&+ \sum_{j\neq i} \theta_j \left(1 - F_j(c_j)\right) \Big[ \psi(t,x(1 - \phi_j {w}_{ij})) - \psi(t,x) \Big],
		\end{split}
	\end{equation}
	for any $\psi \in  \mathcal{C}^{1,1}([0,T), \mathbb{R}_+)$.

	Next, we need to show that the expectation of the stochastic integrals with respect to $dM_s^k$ are equal to zero. To do so, it is sufficient to have the integrand bounded for $s\in [t,\tau]$. Define the stopping time $\tau$ to be:
	\begin{equation}
		\label{eq:tau_def}
		\tau = (t+\delta) \wedge \inf \left\{s\in [t,T], X^i_s \le \epsilon  \text{ or } X^i_s \ge \frac{1}{\epsilon}	\right\},
	\end{equation}
	for some small $\delta > 0 $ and $\epsilon > 0$. Then, since $X^i_s$ is bounded away from zero in $[t,\tau]$, the size in the jump of the value function is bounded.	Therefore the stochastic integrals in~\eqref{eq:dec_val_exp} have zero expectation. We obtain:
	\begin{equation}
		\label{eq:dec_val_pf_pre_dpp}
		\begin{split}
			\mathbb{E}_{t,x} \left[ V_i\left(\tau, X^i_{\tau}\right)\right] -  	V_i\left(t,X_t^i\right) = \ & \mathbb{E}_{t,x} \Bigg[ \int_{t}^{\tau} \mathcal{L}^{c^i_s,w^{i\cdot}_s}V_i(s,X^i_{s-}) ds \Bigg].
		\end{split}
	\end{equation}

	Take the supremum on both sides over the admissible controls $(c^i_\cdot,w^{i\cdot}_\cdot)\in \A^i_{t,T}$. Recall that the dynamic programming principle in~\eqref{eq:dec_val_DP} implies that for any stopping time $\tau$, we have 
	$$\sup_{(c^i_\cdot,w^{i\cdot}_\cdot)\in \A^i_{t,\tau}} \mathbb{E}_{t,x} \left[ V_i\left(\tau, X^i_{\tau}\right)\right] = V_i\left(t,X_t^i\right).$$ Therefore, we arrive at:
	\begin{equation}
		\label{eq:dec_val_pf_pre_lim}
		\begin{split}
			0 = \sup_{(c^i_\cdot,w^{i\cdot}_\cdot)\in \A^i_{t,\tau}} \mathbb{E}_{t,x} \Bigg[&  	\int_{t}^{\tau} \mathcal{L}^{c^i_s,w^{i\cdot}_s}V_i(s,X^i_{s-}) ds \Bigg].
		\end{split}
	\end{equation}

	We note that this step required existence of an optimal control. For small enough $\delta$ and $\epsilon$ in~\eqref{eq:tau_def}, we will have $\tau = t+\delta$. Therefore,~\eqref{eq:dec_val_pf_pre_lim} yields
	\begin{equation}
		\label{eq:dec_val_pf_pre_dct}
		\begin{split}
			0 =  \sup_{(c^i_\cdot,w^{i\cdot}_\cdot)} \lim_{\delta\to 0} \frac{1}{\delta}  \mathbb{E}_{t,x} \Bigg[&  \int_{t}^{t+\delta} \mathcal{L}^{c^i_s,w^{i\cdot}_s}V_i(s,X^i_{s-}) ds \Bigg].
		\end{split}
	\end{equation}

	Finally, applying the Dominated Convergence Theorem gives
	\begin{equation}
		\label{eq:dec_val_pf_end}
		\begin{split}
			0 =  \sup_{(c^i_\cdot,w^{i\cdot}_\cdot)} \mathcal{L}^{c_i,w_{i\cdot}}V_i(t,x),
		\end{split}
	\end{equation}
	which equals~\eqref{eq:dec_hjb} after plugging in the definition of $\mathcal{L}^{c_i,w_{i\cdot}}$ from~\eqref{eq:dec_generator}.
\end{proof}

\begin{proof}[Proof of Proposition~\ref{prop:dec_hjb_soln}] 
	Both parts of this Proposition are proved nearly identically. For conciseness, full detail is only provided for case $(i)$ where $\gamma_i = 1$.

	$(i)$: We first show that~\eqref{eq:dec_hjb} has a separable solution. Next, the internal optimization problem is shown to be convex, and its objective function strictly concave. Finally, we show that the proposed solution is optimal.

	\paragraph{Separability of the PDE:}
	First we show the value function is separable. Plugging the ansatz $V_i(t,x) = g_i(t) + \log x$ into~\eqref{eq:dec_hjb} and performing some simplification, we have:
	\small
	\begin{equation}
		\label{eq:dec_hjb_separability}
		0 = g_i'(t)+  \sup_{c_i,  w_{i\cdot}} \left\{ \left(1-c_i\right)r + \sum_{j\neq i}w_{ij}\mu_j  + \frac{\eta_i \mu_i}{\phi_i} + \theta_i \bar F_i(c_i)  \log\left(\frac{x-\eta_i x}{x}\right) \sum_{ j\neq i} \theta_j \bar F_j(c_j) 	\log\left(\frac{x-\phi_j w_{ij}x}{x}\right)\right\}.
	\end{equation}
	\normalsize
	Observe we can cancel out all remaining $x$'s, and obtain the following ODE for $g_i$:
	\small
	\begin{equation}
		\label{eq:dec_g_ode_log}
		0 = g_i'(t) + \frac{\eta_i \mu_i}{\phi_i} + \sup_{c_i, w_{i\cdot}} \left\{ (1 - c_{i})r + \sum_{j\neq i}w_{ij}\mu_j + \theta_i \bar F_i(c_i) \log(1-\eta_i) + \sum_{j\neq i} \theta_j \bar F_j(c_j) \log(1-\phi_j w_{ij}) \right\} \\
	\end{equation}
	\normalsize
	with terminal condition $g_i(T) = 0$. If $\hat c_i$ and $\hat w_{ij}$ are indeed the optimal solutions to the maximization in~\eqref{eq:dec_g_ode_log}, $g_i$ solves $g_i'(t) = -J_i^* $ with $g_i(T) = 0$, to which the solution is $g_i(t) = (T-t) J_i^*$ as desired.

	\paragraph{Strict Concavity:}
	Now we analyze the resulting optimization problem for $c_i, w_{i\cdot}$. 
	Let $\mathcal{A}_i = \mathbb{R}_+ \times \prod_{j\neq i}[0,\phi_j^{-1})$ be the feasible set for this optimization problem. Clearly, $\mathcal{A}_i$ is a convex set. We aim to solve
	\begin{equation}
		\label{opt:dec_log}
		\begin{split}
			\sup_{(c_i, w_{i\cdot}) \in \mathcal{A}_i} \quad & (1 - c_{i})r + \sum_{j\neq i}w_{ij}\mu_j + \theta_i \bar F_i(c_i) \log(1-\eta_i)  + \sum_{j\neq i} \theta_j \bar F_j(c_j) \log(1-\phi_j w_{ij}).
		\end{split}
	\end{equation}

	Let $h(c_i, w_{i\cdot})$ denote the function to be maximized in~\eqref{opt:dec_log}. It is critical to observe that $h$ is additively separable in each of its optimization variables. Therefore, we can solve for each optimal control independently. Namely, all cross-derivatives of $h$ equal zero, which greatly simplifies the proof of strict concavity. We begin by computing partial derivatives of $h$ with respect to each variable, which gives
	\begin{equation}
		\label{eq:dec_derivs_log}	
		\begin{alignedat}{3}
			\frac{\partial h}{\partial c_i} &=  -r - \theta_i f_i(c_i) 	\log(1-\eta_i) \qquad \qquad && \frac{\partial^2 h}{\partial c_i^2} =   - \theta_i f_i'(c_i)  	\log(1-\eta_i) \\
			\frac{\partial h}{\partial w_{ij}} &=  \mu_j -\phi_j \theta_j  	\frac{\bar F_j(c_j)}{1-\phi_j w_{ij}} && \frac{\partial^2 h}{\partial w_{ij}^2} =  -\phi_j^2 \theta_j  	\frac{\bar F_j(c_j)}{(1-\phi_j w_{ij})^2} \qquad \forall j\neq i.
		\end{alignedat}
	\end{equation}

	Observe that within $\mathcal{A}_i$, we have $(1 - \phi_j w_{ij})^2 > 0$. Recall that by Assumption~\ref{assn}, the density function $f_j(\cdot)$ is fully supported on $\R_+$, and $f_i'(\cdot) < 0$. Therefore, it must be the case that $\bar F_j(c_j) > 0$ for any admissible $c_j$ and $\partial^2_{w_{ij}, w_{ij}} h < 0$. Additionally, $\partial^2_{c_{i}, c_{i}} h < 0$ because $\eta_j > 0$.

	As a result, the Hessian matrix of the objective function is negative definite in the feasible region, i.e. $\nabla^2 h \prec 0$ everywhere in $\mathcal{A}_i$. Hence $h$ is a strictly concave function; if an optimal solution to problem~\eqref{opt:dec_log} exists, it is unique~\citep{boyd2004convex}.

	\paragraph{Optimality of Given Solution:}
	To conclude, we must prove that~\eqref{sol:dec} is optimal for bank $i$.

	Note that $\frac{-r}{\theta_i\log(1-\eta_i)} > 0$. Since $f_i$ is monotonically decreasing and positive valued on $\R_+$, its inverse $f_i^{-1}\left(\frac{-r}{\theta_i\log(1-\eta_i)}\right)$ is well-defined if and only if $\frac{-r}{\theta_i\log(1-\eta_i)} \le f_i(0)$.

	Since optimization problem~\eqref{opt:dec_log} is convex, the first-order condition for constrained optimization is sufficient. We need only check that $y^* = (\hat c_i, w_{i\cdot}^*) \in \mathcal{A}_i$ satisfies
	\begin{equation}
		\label{eq:convex_foc_dec}
		\nabla{h}(y^*)^T(y - y^*) \le 0, \ \forall y \in \mathcal{A}_i.
	\end{equation} 
	The optimization problem for $h$ is additively separable, so this condition is equivalent to the following:
	\begin{equation}	
		\label{eq:dec_foc_sep}
		\begin{split}
			\partial_{c_i}h(\hat c_i)(c_i - \hat c_i) &\le 0, \ \forall c_i \in \R_+, \\
			\partial_{w_{ij}} h(\hat w_{ij})(w_{ij} - \hat w_{ij}) &\le 0, \ \forall w_{ij} \in \left[\left.0, \phi_j^{-1} \right)\right., \quad \forall j\neq i.
		\end{split}
	\end{equation}
	Note that the partial derivative $\partial_{c_i} h$ in~\eqref{eq:dec_derivs_log} is a function of only $c_i$. The same holds for the partials with respect to each $w_{ij}$. Note that these derivatives will depend on $c_j$, but this value is not controlled by bank $i$. Therefore, we will omit the dependence of these derivatives on the other optimization variables.

	We begin with optimality of the proposed $\hat c_i$. Consider the case where $\frac{-r}{\theta_i\log(1-\eta_i)} \le f_i(0)$, and observe that $\partial_{c_i}h(\hat c_i) = 0$ using~\eqref{eq:dec_derivs_log}. As a result, this choice of $\hat c_i$ satisfies the first-order condition for $\hat c_i$ in~\eqref{eq:dec_foc_sep}. Conversely, let us have $\frac{-r}{\theta_i\log(1-\eta_i)} > f_i(0)$. Since $f_i$ is assumed to be monotone decreasing, it must be the case that $\frac{-r}{\theta_i\log(1-\eta_i)} > \max_{c \in \R_+} f_i(c)$. Using again~\eqref{eq:dec_derivs_log}, we obtain that $\partial_{c_i}h(c) < 0$ for every $c\in \R_+$. In particular, we will have $\partial_{c_i}h(0) < 0$, and the first-order condition~\eqref{eq:dec_foc_sep} is satisfied by $\hat c_i = 0$. The proof of optimality for $\hat w_{ij}$ in~\eqref{sol:dec} follows exactly the same steps. If it is non-zero, then the proposed value solves $\partial_{w_{ij}}h(\hat w_{ij}) = 0$. If not, then we know that this partial derivative is negative everywhere in the feasible region for $w_{ij}$. Choosing $\hat w_{ij} = 0$ satisfies the corresponding equation in~\eqref{eq:dec_foc_sep}.

	Concluding, we have shown that the solution given in~\eqref{sol:dec} satisfies~\eqref{eq:dec_foc_sep}. Since it lies within $\mathcal{A}_i$, it is optimal for problem~\eqref{opt:dec_log}. Recall that strict concavity provides uniqueness of this solution. Finally, since all banks optimize concurrently,~\eqref{sol:dec} is obtained by plugging the optimal value $c_j^*$ into $\hat w_{ij}$.

	\vspace{1em}
	$(ii)$: The proof of this result will largely mirror that of part $(i)$. We first check separability of the PDE. If $V_i(t,x) = g_i(t)\frac{x^{1-\gamma_i}}{1-\gamma_i}$, then we have:
	\begin{equation}
		\label{eq:dec_pwr_simplification}
\partial_t V_i(t,x) = g_i'(t) \frac{x^{1-\gamma_i}}{1-\gamma_i}, \quad 
			\partial_x V_i(t,x) = g_i(t)  \frac{x^{1-\gamma_i}}{x}, \quad 
V_i(t, (1-c)x) = g_i(t) 	\frac{x^{1-\gamma_i}}{1-\gamma_i}(1-c)^{1-\gamma_i},
	\end{equation}
for all $c<1$.	Plugging these expressions into~\eqref{eq:dec_hjb} and dividing by $x^{1-\gamma_i}$ removes any spatial variables, and we are left with the following ordinary differential equation for $g_i$:
	\begin{equation}
		\label{eq:dec_g_ode_pwr}
		\begin{alignedat}{3}
			0 &=  \frac{g_i'(t)}{1-\gamma_i} + g_i(t)\sup_{c_i, w_{i\cdot}} \Bigg \{ && (1 - c_{i})r + \sum_{j\neq i}w_{ij}\mu_j  + \frac{\eta_i \mu_i}{\phi_i} + \theta_i \bar F_i(c_i) \frac{(1-\eta_i)^{1-\gamma_i} - 1}{1-\gamma_i}  \\
			& && + \sum_{j\neq i} \theta_j \bar F_j(c_j) \frac{(1-\phi_j w_{ij})^{1-\gamma_i} - 1 }{1-\gamma_i} \Bigg\}, 
		\end{alignedat}
	\end{equation}
with $g_i(T)= 1$.
	
	Let $\hat c_i$ and $\hat w_{ij}$ be the optimal solutions to the maximization. Then we see that $g_i$ will solve $g_i'(t) = -(1-\gamma_i)J_i^* g_i(t)$ with $g_i(T) = 1$, whose solution is $g_i(t) = \exp((1-\gamma_i)(T-t)J_i^*)$.

	The optimality and uniqueness of the solution in~\eqref{sol:dec} will be proved analogously to part $(i)$, but by analyzing a different objective function. We are now interested in:	
	\begin{equation}
		\label{opt:dec_pwr}
		\sup_{(c_i, w_{i\cdot}) \in \mathcal{A}_i} \quad (1 - c_{i})r + \sum_{j\neq i}w_{ij}\mu_j + \theta_i \bar F_i(c_i) \frac{(1-\eta_i)^{1-\gamma_i} - 1}{1-\gamma_i}  + \sum_{j\neq i} \theta_j \bar F_j(c_j) \frac{(1-\phi_j w_{ij})^{1-\gamma_i} - 1 }{1-\gamma_i}.
	\end{equation}
	Again, this optimization problem is additively separable, which will simplify the proof of strict concavity. As before, let $h(c_i,w_{i\cdot})$ denote the function to be maximized. We compute its partial  derivatives to be:
	\begin{equation}
		\label{eq:dec_derivs_pwr}
		\begin{alignedat}{3}
			\frac{\partial h}{\partial c_i} &=  -r - \theta_i f_i(c_i) 	\frac{(1-\eta_i)^{1-\gamma_i} - 1}{1-\gamma_i} && \frac{\partial^2 h}{\partial c_i^2} =  - \theta_i f_i'(c_i) 	\frac{(1-\eta_i)^{1-\gamma_i} - 1}{1-\gamma_i}\\
			\frac{\partial h}{\partial w_{ij}} &=  \mu_j - \phi_j \theta_j 	\bar F_j(c_j) (1-\phi_j w_{ij})^{-\gamma_i} \qquad && \frac{\partial^2 h}{\partial w_{ij}^2} =  \phi_j^2 \theta_j 	\bar F_j(c_j) (-\gamma_i)(1-\phi_j w_{ij})^{-\gamma_i-1} 
		\end{alignedat}
	\end{equation}

	Under Assumption~\ref{assn}, we will have both $\partial^2_{c_i, c_i}h < 0$ and $\partial^2_{w_{ij}, w_{ij}}h < 0$, since $w_{ij} < \phi_j^{-1}$ everywhere in $\mathcal{A}_i$. Therefore, $h$ is strictly concave on $\mathcal{A}_i$ and the optimization problem is convex. As a result, any optimal solution must be unique.

	The remaining part of the proof mirrors that of part $(i)$. Computing the gradient of $h$ at the candidate solution in~\eqref{sol:dec} and using the same argument will show that the first-order conditions in~\eqref{eq:dec_foc_sep} are satisfied. Since this point is feasible, it must be optimal.
\end{proof}

\begin{proof}[Proof of Corollary~\ref{cor:dec_verification}]
	We proceed with a standard verification argument. We need to show that if $\psi$ is a solution to the PDE~\eqref{eq:dec_hjb} and it is $\mathcal{C}^{1,1}\left( [0,T), \mathbb{R}_+\right)$, then it is equal to the value function. Since the proposed solutions solve the PDE and they are indeed $\mathcal{C}^{1,1}$, this will conclude.

	Fix $t< T$, and choose $\{c^i_s, w^{i\cdot}_s\}_{s\in [t,T]}$ be some admissible controls. We apply It\^{o}'s formula to $\psi(s,X^i_s)$ between $t$ and some stopping time $\tau^n$ -- to be chosen optimally later. This yields, using the notation introduced in the proof of Proposition~\ref{prop:dec_opt_hjb}, the following:	
	\begin{equation}
		\label{eq:dec_ver_ito}
		\begin{split}
			\psi(\tau^n,X^i_{\tau^n}) = \ &\psi(t,X^i_t) + \int_{t}^{\tau^n} 	\mathcal{L}^{c^i_s,w^{i\cdot}_s}\psi(s,X^i_s)ds +  \int_{t}^{\tau^n} \left[ \psi\left(s,X_{s-}^i - \eta_i 	X_{s-}^i\right) - \psi\left(s,X_{s-}^i\right) \right] dM_s^i \\
			&+ \sum_{j\neq i}  \int_{t}^{\tau^n} \left[ \psi\left(s,X_{s-}^i - 	\phi_j w^{ij}_s X_{s-}^i\right) - \psi\left(s,X_{s-}^i\right) \right] dM_s^j.
		\end{split}
	\end{equation}	
	Recall that the compensated jump process $\left\{M_t^k\right\}_{t\ge 0}$ is a martingale. Taking the expectation conditioned on $X^i_t = x$, we obtain:
	\begin{equation}
		\label{eq:dec_ver_exp}
		\begin{split}
			\mathbb{E}_{t,x} \left[ \psi(\tau^n,X^i_{\tau^n})\right] = \ 	&\psi(t,x) + \E_{t,x} \left[ \int_{t}^{\tau^n} \mathcal{L}^{c^i_s,w^{i\cdot}_s}\psi(s,X^i_s)ds \right]\\
			& +  \mathbb{E}_{t,x} \left[ \int_{t}^{\tau^n} \left[ 	\psi\left(s,X_{s-}^i - \eta_i X_{s-}^i\right) - \psi\left(s,X_{s-}^i\right) \right] dM_s^i \right]\\
			&+ \sum_{j\neq i}  \mathbb{E}_{t,x} \left[ \int_{t}^{\tau^n} \left[ 	\psi\left(s,X_{s-}^i - \phi_j w^{ij}_s X_{s-}^i\right) - \psi\left(s,X_{s-}^i\right) \right] dM_s^j \right].
		\end{split}
	\end{equation}

	If we choose $\tau^n = \left(T - \frac{1}{n}\right) \wedge \inf \left\{ s \in [t,T], X^i_s \le \frac{1}{n} \text{ or } X^i_s \ge n \right\}$, then for every $n$ the expectation of each stochastic integral is zero and we have:
	\begin{equation*}
		\mathbb{E}_{t,x} \left[ \psi(\tau^n,X^i_{\tau^n})\right] = \psi(t,x) + 	\E_{t,x} \left[ \int_{t}^{\tau^n} \mathcal{L}^{c^i_s,w^{i\cdot}_s}\psi(s,X^i_s)ds \right].
	\end{equation*}

	Taking the limit as $n\to \infty$, we will have $\tau^n \to T$. Furthermore, since $\psi$ satisfies the terminal condition (by assumption) and everything is bounded, an application of dominated convergence yields:	
	\begin{equation}
		\label{eq:dec_ver_general}
		\mathbb{E}_{t,x} \left[ U_i(X^i_T)\right] = \psi(t,x) + \E_{t,x} \left[ 	\int_{t}^{T} \mathcal{L}^{c^i_s,w^{i\cdot}_s}\psi(s,X^i_s)ds \right].
	\end{equation}

	First, we choose the controls in~\eqref{eq:dec_ver_general} to be given by the optimal solution of Proposition~\ref{prop:dec_hjb_soln}. Then, we will have $\mathcal{L}^{\hat c^{i}_s,\hat w^{i\cdot}_s}\psi(s,X^i_s) = 0$ for all $s \in [t,\tau^n]$, and consequentially:
	\begin{equation*}
		\psi(t,x) = \E_{t,x} \left[ U_i(X^i_T)\right].
	\end{equation*}
	Note that only the terminal wealth $X^i_T$ in the right-hand side depends on the controls $(\hat c^{i}_s,\hat w^{i\cdot}_s)$. After taking the supremum we obtain
	\begin{equation}
		\label{eq:dec_ver_ub}
		\psi(t,x) \le \sup_{\left\{c^i_s, w_s^{i\cdot}\right\}_{s\in [t,T]}} 	\E_{t,x} \left[ U_i(X_T^i) \right] = V_i(t,x).
	\end{equation}

	Next, we fix any control $(c^{i}_s,w^{i\cdot}_s)$. Then, in~\eqref{eq:dec_ver_general} we will have $\mathcal{L}^{c^{i}_s,w^{i\cdot}_s}\psi(s,X^i_s) \le 0$, and the result is: 
	\begin{equation*}
		\psi(t,x) \ge \E_{t,x} \left[ U_i(X^i_T)\right].
	\end{equation*}
	Note again that only $X^i_T$ depends on the controls. However, since this inequality holds for any admissible control we can take the supremum over both sides to give
	\begin{equation}
		\label{eq:dec_ver_lb}
		\psi(t,x) \ge \sup_{\left\{c^i_s, w_s^{i\cdot}\right\}_{s\in [t,T]}} 	\E_{t,x} \left[ U_i(X_T^i) \right] = V_i(t,x).
	\end{equation}

	Combining~\eqref{eq:dec_ver_ub} and~\eqref{eq:dec_ver_lb} shows that $\psi = V_i$. This implies that the optimal values to the maximization problem in the PDE for $\psi$ are indeed the optimal controls.
	Since the explicit solutions given by Proposition~\ref{prop:dec_hjb_soln} are once continuously differentiable in both time and space, then they are equal to the value function.
\end{proof}

\subsection{Centralized Network}
\label{ssec:pfs-cent}

\begin{proof}[Proof of Proposition~\ref{prop:cent_opt_hjb}]
	This proof is only a minor adaptation of the proof of Proposition~\ref{prop:dec_opt_hjb}. 
	First, the application of It\^{o}'s formula to the value function $V(t,X_t^1,\dots,X_t^n)$ yields more terms, but remains simple as the jump processes are mutually independent. Namely, the generator is given by
	\begin{equation*}
		\label{eq:cent_generator}
		\begin{split}
			\mathcal{L}^{c_\cdot, w_{\cdot \cdot}}\psi =& \  
			\partial_t \psi + \sum_{i=1}^n \Bigg( \left[ (1-c_i)r + \sum_{j\neq i}w_{ij}\mu_j + \frac{\eta_i \mu_i}{\phi_i}\right] x_i \partial_{x_i}\psi \\
			& + \theta_i \bar F_i(c_i)\Big[ \psi(t,x_1(1-\phi_i w_{1i}),\dots,x_i(1-\eta_i),\dots,x_n(1-\phi_i w_{ni})) - \psi \Big] \Bigg),
		\end{split}
	\end{equation*}
	where $\psi$ is evaluated at $(t, x_1,\dots,x_n)$ where unspecified.

	Next, to apply dominated convergence, the choice of the stopping time $\tau$ must ensure that all stopped processes $X_\tau^1,\dots X_\tau^n$ are bounded away from zero. We can therefore choose:
	$$\tau = (t+\delta) \wedge \min_i \left\{ \inf \left \{s \in [t,T], X^i_s \le \epsilon \text{ or } X^i_s \ge \frac{1}{\epsilon} \right\} \right\},$$
	and conclude as in the previous result.
\end{proof}

\begin{proof}[Proof of Proposition~\ref{prop:cent_hjb_soln}]
	The outline of this proof is similar to that of Prop.~\ref{prop:dec_hjb_soln}, but with greater complexity, and hence requiring additional assumptions to establish our results. We begin by discussing each of these.
First, logarithmic utility functions are needed so that~\eqref{eq:cent_hjb} admits a separable solution. We note that if the planner sought to maximize the product of banks' utilities, it would be necessary to assume that $\gamma_i \neq 1$ for all $i$. This assumption is used for existence of a separable solution to~\eqref{eq:cent_hjb}.

	The first condition in Assumption~\ref{assn:cent_uniq} concerns the shock densities $f_i$. In particular,~\eqref{eq:cent_f_cond} is satisfied by the family of exponential distributions ($f_i(x) = \lambda_i^{-1} e^{-\frac{x}{\lambda_i}}$, for some parameter $\lambda_i > 0$) and power distributions $\Big(f_i(x) = \frac{\left(\alpha_i^{-1} - 1\right) x_0^{\alpha_i^{-1}-1}}{(x+x_0)^{\alpha_i^{-1}}}$, for any $x_0 > 0$ and $\alpha_i < 1\Big)$. We note that this condition is not necessary for uniqueness, but is used for establishing monotonicity of a first-order condition for optimality by bounding the second derivative with an exponentially decaying function.

	Finally, the inequalities on $\Gamma(\eta_i;1)$ will ensure that either $(i)$: strict concavity of the objective function holds, or $(ii)$ there exists only a single solution to the necessary first-order conditions. However, these inequalities do not rule out the possibility of a corner solution of $c_i^* = 0$ or $w_{\cdot i}^* = 0$ -- as shown in~\eqref{sol:cent}. Of particular interest, the optimal decentralized and centralized allocations for $c_i$ and $w_{\cdot i}$ will coincide whenever either $c_i^* = 0$ or $w_{\cdot i}^* = 0$ in the planner's optimum.

	\paragraph{Separability of PDE and Maximization:}
	Recall that the PDE for the value function derived in Proposition~\ref{prop:cent_opt_hjb} is:
	\begin{equation}
		\label{eq:cent_hjb_pf}
		\begin{alignedat}{2}
			0  &= \partial_t V+ \sup_{c_\cdot,  w_{\cdot\cdot}} \Bigg\{ \sum_{i=1}^n \Bigg( \left[\left(1-c_i\right)r + \sum_{j\neq i}w_{ij}\mu_j + \frac{\eta_i \mu_i}{\phi_i} \right] x_i \partial_{x_i} V 
			\\
			& \quad  + \theta_i \bar F_i(c_i)\Big[ V(t,x_1(1-\phi_i w_{1i}),.., 	x_i(1-\eta_i),.., x_n(1-\phi_i w_{ni})) - V \Big] \Bigg)\Bigg\} 
			\\
			V(T,x_1,\dots,x_n) &= \sum_{i=1}^n U_i(x_i).  
		\end{alignedat}	
	\end{equation}
	By assumption, each bank's utility function is given by $U_i(x_i) = \log x_i$, i.e. $\gamma_i = 1$ for all $i$. Consider the following ansatz: $V(t,x_1,..,x_n) = g(t) + \sum_{i} \log x_i $. Substituting into~\eqref{eq:cent_hjb_pf}, we obtain:
	\begin{equation}
		\label{eq:cent_ode_g}
		\begin{alignedat}{3}
			0  &= g'(t) + \sup_{c_\cdot,  w_{\cdot\cdot}}  &&\sum_{i=1}^n \left(1-c_i\right) r + \sum_{j\neq i}w_{ij}\mu_j + \frac{\eta_i \mu_i}{\phi_i}  - \theta_i \bar F_i(c_i)\left[ \Gamma(\eta_i;1) + \sum_{j\neq i} \Gamma(\phi_i w_{ji};1)  \right] \\
		\end{alignedat}
	\end{equation}
	with $g(T) = 0$. The spatial variables will cancel and we are left with an ordinary differential equation for $g$.
	We now rewrite the following sum:
	$$\sum_{i=1}^n \sum_{j\neq i} w_{ij}\mu_j = \sum_{i=1}^n \sum_{j\neq i} w_{ji} \mu_i.$$
	Observe that for $k,j\neq i$, we will have $w_{ji} = w_{ki}$. That is, all $j\neq i$ banks will invest the same fraction of their wealth in bank $i$'s project. This can be seen in two ways. First, in the decentralized setting, the amount $w_{ji}$ depended on bank $j$ only through their risk aversion coefficient $\gamma_j$. Since in this Proposition we have assumed that $\gamma_i = 1$ for all $i$, the result follows. This can also be seen by computing the first-order conditions in~\eqref{eq:cent_ode_g} for $w_{ji}$ and $w_{ki}$, and noticing that they are identical. Let this fraction be denoted by $w_{\cdot i}$. This allows us to further simplify~\eqref{eq:cent_ode_g} and obtain
	\begin{equation}
		\begin{split}
			0 = g'(t) + \sum_{i=1}^n \frac{\eta_i \mu_i}{\phi_i} + \sup_{c_\cdot,  w_{\cdot\cdot}} \sum_{i=1}^n \left(1-c_i\right) r + (n-1)w_{\cdot i}\mu_i    - \theta_i \bar F_i(c_i)\Big[  \Gamma(\eta_i;1) + (n-1) \Gamma(\phi_i w_{\cdot i};1)  \Big]  .\\
		\end{split}
	\end{equation}
	This maximization is additively separable between each pair $(c_i, w_{\cdot i})$, indexed by $i$. Let $\A_i = \R_+ \times [0,\phi_i^{-1})$ denote the admissible values for $(c_i, w_{\cdot i})$. Then, the optimal allocation is found by solving:
	\begin{equation}
		\label{opt:cent_sep}
		\sum_{i=1}^n \sup_{(c_i, w_{\cdot i}) \in \A_i} h_i(c_i, w_{\cdot i}),
	\end{equation}
	where $h_i(c_i, w_{\cdot i}) = -rc_i + (n-1)\mu_i w_{\cdot i} - \theta_i \bar F_i(c_i)\Big[  \Gamma(\eta_i;1) + (n-1) \Gamma(\phi_i w_{\cdot i};1)  \Big]$ for each $i$.

	\paragraph{Reduction to Univariate Optimization:}
	We first maximize over $w_{\cdot i}$ and then $c_i$ given the optimal $w_{\cdot i}$. Given a value of $c_i$, we seek to find the optimal value of $w_{\cdot i}$. We can compute
	\begin{equation}
		\label{eq:partials_w}
		\begin{split}
			\frac{\partial h_i}{\partial w_{\cdot i}}(c_i, w_{\cdot i}) &= (n-1)\mu_i - (n-1) \frac{\phi_i \theta_i\bar F_i(c_i)}{1-\phi_i w_{\cdot i}}
			\\
			\frac{\partial^2 h_i}{\partial w_{\cdot i}^2}(c_i, w_{\cdot i}) &= -(n-1) \frac{\phi_i^2 \theta_i \bar F_i(c_i)}{(1-\phi_i w_{\cdot i})^2}.
		\end{split}
	\end{equation}
	Notice that the second derivative in this expression is always strictly negative. Hence, given $c_i$, the optimization problem over $w_{\cdot i}$ is strictly concave. This implies that the first-order conditions are sufficient, and that any optimal solution is unique. Let $w^*_{\cdot i}(c_i)$ denote the optimal solution given $c_i$. It must satisfy the following necessary first-order condition:
	$$\frac{\partial h_i}{\partial w_{\cdot i}}(c_i, w^*_{\cdot i}(c_i)) (w_{\cdot i} - w^*_{\cdot i}(c_i)) \le 0, \ \forall w_{\cdot i} \in [0,\phi_i^{-1}).$$
	Using~\eqref{eq:partials_w}, it is easy to check that this condition is satisfied by the following:
	\begin{equation}
		\label{sol:cent_w_given_c}
		w_{\cdot i}^*(c_i) = \begin{cases}
			\frac{1}{\phi_i}\left(1-\frac{\phi_i\theta_i 		\bar F_i(c_i)}{\mu_i}\right) & \text{ if } \frac{\phi_i\theta_i \bar F_i(c_i)}{\mu_i} \le 1 
			\\
			0 & \text{ otherwise.}
		\end{cases}
	\end{equation}
	This value is uniquely defined, and exists for any choice of $c_i$. We then rewrite each maximization in~\eqref{opt:cent_sep} as
	\begin{equation}
		\label{opt:cent_univar}
		\sup_{(c_i, w_{\cdot i}) \in \A_i} h_i(c_i, w_{\cdot i}) = \ \sup_{c_i \ge 0} \ h_i^*(c_i),
	\end{equation}
	where $h_i^*(c_i) = h_i(c_i, w_{\cdot i}^*(c_i))$.

	\paragraph{Existence of an Optimal Solution:}
	We now prove existence of an optimal solution to~\eqref{opt:cent_univar}. Observe that for large enough $c_i$, we will have $w_{\cdot i}^*(c_i) = \frac{1}{\phi_i}\left(1-\frac{\phi_i\theta_i 		\bar F_i(c_i)}{\mu_i}\right)$. For such $c_i$ we obtain
	\begin{equation}
		\label{eq:h_large_c}
		\begin{split}
			h^*_i(c_i) =& -rc_i + (n-1)\mu_i \left[\frac{1}{\phi_i}\left(1-\frac{\phi_i\theta_i 		\bar F_i(c_i)}{\mu_i}\right)\right] \\
			&  - \theta_i \bar F_i(c_i) \left[\Gamma(\eta_i;1) - (n-1) \log\left(\frac{\phi_i\theta_i 		\bar F_i(c_i)}{\mu_i}\right)\right].
		\end{split}
	\end{equation}
	As $c_i \to \infty$, we will have $\bar F_i(c_i) \to 0$. Since we can write
	
	$$\bar F_i(c_i)\log\left(\frac{\phi_i\theta_i \bar F_i(c_i)}{\mu_i}\right) = \bar F_i(c_i)\left[ \log\left(\frac{\phi_i\theta_i}{\mu_i}\right) + \log\bar F_i(c_i)\right],$$
	and $x \log x \underset{x\to 0}{\to} 0$, we will have $\lim_{c_i \to \infty} h^*_i(c_i) = -\infty$.

	This limit is sufficient for existence of an optimal solution to~\eqref{opt:cent_univar}. Fix some $K< 0$. Since we have shown $h^*_i(c_i) \underset{c_i \to \infty}{\to} -\infty$, we know that $\exists C \in \R_+: h^*_i(c_i) < K, \ \forall c_i > C$. By continuity of $h_i^*$, the set $\mathcal{B} = \left\{c_i \in \R_+ : h_i^*(c_i) \ge K\right\}$ is compact. We can conclude by the Extreme Value Theorem that there exists a globally optimal value of $h^*_i$ within $\mathcal{B}$. Moreover, as long as $\mathcal{B}$ is non-empty, any point in $\mathcal{B}$ achieves higher objective value than any point in its compliment. By taking $K$ to be a large enough negative number, we can ensure that $\mathcal{B} \neq \emptyset$.

	\paragraph{System of Equations for Optimum:}
	The expression~\eqref{sol:cent_w_given_c} gives us the second equation in the system~\eqref{sol:cent}. For the other equation, we must analyze the first-order condition for $c_i$ in~\eqref{opt:cent_sep}. Taking derivatives with respect to $c_i$, we obtain
	\begin{equation}
		\label{eq:partials_c}
		\begin{split}
			\frac{\partial h_i}{\partial c_i}(c_i, w_{\cdot i}) &= -r + \theta_i f_i(c_i)\Big[  \Gamma(\eta_i;1) + (n-1) \Gamma(\phi_i w_{\cdot i};1)  \Big]
			\\
			\frac{\partial^2 h_i}{\partial c_i^2}(c_i, w_{\cdot i}) &= - \theta_i f_i'(c_i)\Big[  \Gamma(\eta_i;1) + (n-1) \Gamma(\phi_i w_{\cdot i};1)  \Big].
		\end{split}
	\end{equation}
	Notice that the second derivative is also negative everywhere -- although this does not imply that the objective function $h_i$ is concave. We proceed similarly as before, seeking to define an optimal value of $c_i$ for any given $w_{\cdot i}$. Let this be denoted $c_i^*(w_{\cdot i})$. It must satisfy:
	$$\frac{\partial h_i}{\partial c_i}(c_i^*(w_{\cdot i}), w_{\cdot i}) (c_i - c_i^*(w_{\cdot i})) \le 0, \ \forall c_i \in \R_+.$$
	Using~\eqref{eq:partials_c}, we can see that this will be satisfied whenever
	\begin{equation}
		\label{sol:cent_c_given_w}
		c_i^*(w_{\cdot i}) = 
		\begin{cases}
			f_i^{-1}\left(\frac{r}{\theta_i \left[ \Gamma(\eta_i;1) + (n-1) \Gamma(\phi_i w_{\cdot i};1)\right]}\right) &\text{ if } f_i(0) \le \frac{r}{\theta_i \left[ \Gamma(\eta_i;1) + (n-1) \Gamma(\phi_i w_{\cdot i};1)\right]} \\
			0 & \text{ otherwise.}
		\end{cases}  
	\end{equation}
	With~\eqref{sol:cent_w_given_c}, we obtain the system~\eqref{sol:cent}.

	\paragraph{Uniqueness:}
	It remains only to show that the optimal solution to~\eqref{opt:cent_univar} is unique. We return to our analysis of the univariate optimization problem in~\eqref{opt:cent_univar}. The necessary first-order condition for optimality of $c_i^*$ is
	\begin{equation}
		\label{eq:cent_foc_c}
		\frac{d h_i^*}{dc_i}(c_i^*)(c_i - c_i^*) \le 0, \ \forall c_i \in \R_+.
	\end{equation}
	We proceed by showing that there exists only a single $c_i^*$ satisfying this expression, and since existence has been proved, it must be the optimal solution. Recall that $\tilde c_i = F_i^{-1} \left(\left[1 - \frac{\mu_i}{\phi_i \theta_i}\right]_+\right)$, and we have $w_{\cdot i}^*(c_i) = 0$ if and only if $c_i \le \tilde c_i$.

	The reduced objective function $h_i^*(c_i)$, after substituting in~\eqref{sol:cent_w_given_c}, can be written as:
	\begin{equation}
		\label{eq:h*i}
		\begin{split}
			h_i^*(c_i) & = -rc_i - \theta_i\bar F_i(c_i) \Gamma(\eta_i;1) \\
			& \quad + 
			\begin{cases}
				(n-1) \left[\frac{\mu_i}{\phi_i}  + \theta_i \bar F_i(c_i)\left(\log\left(\frac{\phi_i \theta_i \bar F_i(c_i)}{\mu_i}\right) -1 \right)\right]& \text{ if } c_i \ge \tilde c_i \\
				0 & \text { otherwise.}	
			\end{cases}
		\end{split}
	\end{equation}
	Taking the derivative with respect to $c_i$, we obtain
	\begin{equation}
		\label{eq:dh*i_c}
		\begin{split}
			\frac{d h_i^*}{ d c_i}(c_i) &= -r + \theta_i f_i(c_i)\Gamma(\eta_i;1) - \begin{cases}
				\theta_i f_i(c_i) (n-1) \log\left(\frac{\phi_i \theta_i \bar F_i(c_i)}{\mu_i}\right)& \text{ if } c_i \ge \tilde c_i \\
				0 & \text { otherwise,}	
			\end{cases}
		\end{split}
	\end{equation}
	and the second derivative equals
	\begin{equation}
		\label{eq:d2h*i_c}
		\begin{split}
			\frac{d^2 h_i^*}{ d c_i^2}(c_i) &= \theta_i f_i'(c_i)\Gamma(\eta_i;1) + \theta_i (n-1)
			\begin{cases}
				\frac{f_i(c_i)^2}{1 - F_i(c_i)}  - f_i'(c_i)  	\log\left(\frac{\phi_i \theta_i \bar F_i(c_i)}{\mu_i}\right) & \text{ if } c_i \ge \tilde c_i \\
				0 & \text { otherwise.}	
			\end{cases}
		\end{split}
	\end{equation}
	Note that we are evaluating the right derivatives at $c_i = \tilde c_i$, where this function is not differentiable.

	In the regime $c_i < \tilde c_i$, we will always have $\frac{d^2 h_i^*}{ d c_i2}(c_i) < 0$. If this were also true for $c_i \ge \tilde c_i$, then the objective function would be strictly concave, and uniqueness would follow. We now prove that if $\frac{d^2 h_i^*}{ d c_i^2}(x) < 0$, then $h_i^*(\cdot)$ is strictly concave on $[x,\infty)$. In particular, by plugging in $x = \tilde c_i$ we conclude uniqueness of the optimum.

	Let us compute an additional derivative of $h_i^*(\cdot)$:
	\begin{equation}
		\begin{split}
			\frac{d^3 h_i^*}{dc^3_i}(c_i) &= \theta_i f_i''(c_i) \Gamma(\eta_i;1) \\
			& \quad + \theta_i (n-1)
			\begin{cases}
				\frac{f_i(c_i)^2}{\bar F_i(c_i)}\left[\frac{f_i(c_i)}{\bar F_i(c_i)} + 3 \frac{f_i'(c_i)}{f_i(c_i)}\right] -  f_i''(c_i)  \log\left(\frac{\phi_i\theta_i \bar F_i(c_i)}{\mu_i}\right) &\text{ if } c_i \ge \tilde c_i \\
				0 & \text{ otherwise.}
			\end{cases}
		\end{split}
	\end{equation}
	Observe that when we have $c_i \ge \tilde c_i$, a bit of algebra yields
	\begin{equation}
		\label{eq:h'''_from_h''}
		\frac{d^3 h_i^*}{dc^3_i}(c_i) = \frac{f_i''(c_i)}{f_i'(c_i)} \frac{d^2 h_i^*}{dc^2_i}(c_i)+ \frac{(n-1) \theta_i f_i(c_i)^2}{\bar F_i(c_i)} \left[\frac{f_i(c_i)}{\bar F_i(c_i)} + 3 \frac{f_i'(c_i)}{f_i(c_i)} - \frac{f_i''(c_i)}{f_i'(c_i)}\right].
	\end{equation}
	If, as assumed in this Proposition, we have $\frac{f_i(c_i)}{\bar F_i(c_i)} + 3 \frac{f_i'(c_i)}{f_i(c_i)} - \frac{f_i''(c_i)}{f_i'(c_i)} < 0$ for all $c_i \ge 0$, then it will follow that
	$$\frac{d^3 h_i^*}{dc^3_i}(c_i) < \frac{f_i''(c_i)}{f_i'(c_i)} \frac{d^2 h_i^*}{dc^2_i}(c_i).$$
	Applying Gr\"{o}nwall's inequality, we see that 
	$$\frac{d^2 h_i^*}{dc^2_i}(b) < \frac{d^2 h_i^*}{dc^2_i}(a) \exp\left(\int_{a}^b \frac{f_i''(s)}{f_i'(s)} ds\right),$$
	for any $\tilde c_i \le a < b$. As a consequence, if $\frac{d^2 h_i^*}{dc^2_i}(a) \le 0$, then $\frac{d^2 h_i^*}{dc^2_i}(b) < 0$ for all $b > a$.

	Rewriting~\eqref{eq:d2h*i_c}, we obtain:
	\begin{equation}
		\label{eq:d2h*i_c_tilde}
		\frac{d^2 h_i^*}{ d c_i^2}(\tilde c_i) = 
		\begin{cases}
			\theta_i f_i'(0)\left[\Gamma(\eta_i;1) - (n-1) \log\left(\frac{\phi_i \theta_i}{\mu_i}\right)\right] + \theta_i (n-1) f_i(0)^2 & \text{ if } \tilde c_i = 0 \\
			\theta_i f_i'(\tilde c_i)\Gamma(\eta_i;1) + \theta_i (n-1) \frac{\phi_i \theta_i f_i(\tilde c_i)^2}{\mu_i} & \text{ otherwise.}
		\end{cases}
	\end{equation}
	For $i$ satisfying
	\begin{equation}
		\label{eq:cent_gamma_cond_d2h}
		\Gamma(\eta_i;1) > 
		\begin{cases}
			(n-1) \left[ \log\left(\frac{\phi_i \theta_i}{\mu_i}\right) - \frac{f_i(0)^2}{f_i'(0)}\right] & \text{ if } \tilde c_i = 0 \\[2ex]
			- (n-1) \frac{\phi_i \theta_i f_i(\tilde c_i)^2}{\mu_i f_i'(\tilde c_i)} & \text{ otherwise.}
		\end{cases}
	\end{equation}
	in the assumption~\eqref{eq:cent_gamma_cond}, we see that $\frac{d^2 h_i^*}{ d c_i^2}(\tilde c_i) < 0$. By our application of Gr\"{o}nwall's inequality, we can conclude that $ h_i^*$ must be strictly concave, and hence the optimum is unique.

	Now, we turn to the banks $i$ satisfying
	\begin{equation}
		\label{eq:cent_gamma_cond_dh}
		\Gamma(\eta_i;1) > 
		\begin{cases}
			\frac{r}{\theta_i f_i(0)} + (n-1) \log\left(\frac{\phi_i \theta_i}{\mu_i}\right) & \text{ if } \tilde c_i = 0 \\[1ex]
			\frac{r}{\theta_i f_i(\tilde c_i)} & \text{ otherwise.}
		\end{cases}
	\end{equation}
	We can compute:
	\begin{equation}
		\label{eq:dh*i_c_tilde}
		\frac{d h_i^*}{ d c_i}(\tilde c_i) = -r + 
		\begin{cases}
			\theta_i f_i(0)\left[\Gamma(\eta_i;1)  - 
			(n-1) \log\left(\frac{\phi_i \theta_i}{\mu_i}\right) \right]& \text{ if } \tilde c_i = 0 \\
			\theta_i f_i(\tilde c_i)\Gamma(\eta_i;1) & \text { otherwise.}	
		\end{cases}
	\end{equation}
	By~\eqref{eq:cent_gamma_cond_dh}, we have $\frac{d h_i^*}{ d c_i}(\tilde c_i) > 0$. Since $\frac{d^2 h_i^*}{ d c_i^2}(c_i) < 0$ for all $c_i < \tilde c_i$, we cannot have any points satisfying the first-order condition~\eqref{eq:cent_foc_c} in $[0,\tilde c_i]$. However, we do know that there must exist an optimal solution, so therefore it must lie within $(\tilde c_i, \infty)$. At such a point $c_i^*$, we must have $\frac{d h_i^*}{ d c_i}(c_i^*) = 0$, and also $\frac{d^2 h_i^*}{ d c_i^2}(c_i^*) \le 0$, which are the two necessary conditions for optimality of $c_i^*$ when it lies in the interior of the feasible region. By the same conclusion using Gr\"{o}nwall's inequality, we must have $\frac{d^2 h_i^*}{ d c_i^2}(c_i) < 0$, and hence $\frac{d h_i^*}{ d c_i}(c_i) < 0$ for any $c_i > c_i^*$. Hence, only this choice of $c_i^*$ will satisfy the necessary first-order conditions, and as a result it must be unique.

	Since we require all $i$ to satisfy at least one of~\eqref{eq:cent_gamma_cond_dh} or~\eqref{eq:cent_gamma_cond_d2h}, the optimal solutions to each of the $n$ optimization problems in~\eqref{opt:cent_sep} must be unique.
\end{proof}

\begin{proof}[Proof of Corollary~\ref{cor:cent_verification}]
	The proof of this result mirrors the proof of Corollary~\ref{cor:dec_verification}, and therefore we omit many details.

	Fix some time $t< T$, at which we have $X^i_t = x_i$. We again choose some admissible controls $\left\{c^{\cdot}_s, w^{\cdot \cdot}_s\right\}_{s\in [t,T]}$. We then apply It\^{o}'s formula, which only differs in yielding a few more terms. Namely, we will need to use the generator defined in Section~\eqref{eq:cent_generator}, and the stochastic integrands will be slightly more complex. Next, to apply dominated convergence, our choice of the stopping time $\tau^n$ must ensure that each of the wealth processes $\left\{X^1_s\right\}_{s\ge 0}\dots \left\{X^n_s\right\}_{s\ge 0}, $ is bounded at time $\tau^n$. Therefore, we choose
	$$\tau^n = \left(T - \frac{1}{n}\right) \wedge \min_i \left\{ \inf \left\{ s \ge t, |X^i_s - X^i_t| \ge n  \right\}\right\}$$
	and conclude identically.
\end{proof}

\subsection{Differences in Optima}
\label{ssec:pfs-diff}

\begin{lemma}
	\label{lem:PoA_exp_c}
	Assume that $w_{\cdot i}^* > 0$. If the shock density satisfies:
	$f_i(x)  \ge \kappa_{i,L} e^{-\frac{x}{\lambda_{i,L}}},$
	for all $x$ and fixed constants $\lambda_{i,L} > 0$ and $\kappa_{i,L} > 0$, then 
	
	$$c_i^* \ge \lambda_{i,L}\log \left(\frac{\theta_i \kappa_{i,L} \Gamma(\phi_i \hat w_{\cdot i};1)}{r}\right) + \lambda_{i,L} \log(n-1).$$
	In particular, the planner's optimal \change{ cash reserves} asymptotically grow at least logarithmically in $n$.
	
	
Furthermore, if for all $x$ we also have:
	$$f_i(x) \le \kappa_{i,U} e^{-\frac{x}{\lambda_{i,U}}}$$
	for $\lambda_{i,L} \le \lambda_{i,U}$ and $\kappa_{i,L} \le \kappa_{i,U}$, then 
	\begin{enumerate}[(i)]
		\item \textbf{Upper Bound:}
		\begin{equation}
			c_i^* \le \lambda_{i,U} \log\left(\frac{\theta_i \kappa_{i,U} C_U}{r}\right) + \lambda_{i,U} \log\left((n-1) \log(n) \right),
		\end{equation}
		where $C_U > 3$ depends on all model parameters (including $\lambda_{i,L}$ and $\lambda_{i,U}$), but does not explicitly grow with $n$. 
		As a result, $\lim_{n\to \infty}\frac{c_i^*}{\log(n)} \le \lambda_{i,U}.$
		
		\item \textbf{Lower Bound:}
		\begin{equation}
			c_i^* \ge \lambda_{i,L} \log\left(\frac{\theta_i \kappa_{i,L} \lambda_{i,L}}{r\lambda_{i,U}}\right) + \lambda_{i,L} \log\left((n-1) \Big[\log(n-1) - \frac{\lambda_{i,U}}{\lambda_{i,L}} \log\left(C_L \right)\Big]\right),
		\end{equation}
		for $C_L > 0$ depending only on $i$'s parameters. 
		Hence, $\lim_{n\to \infty}\frac{c_i^*}{\log(n)} \ge \lambda_{i,L}$.
		
	\end{enumerate}
	
	Combining the two limiting bounds, we have $c_i^* = \Theta\big(\log(n)\big).$
\end{lemma}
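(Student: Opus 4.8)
The plan is to reduce everything to the first-order characterization of $c_i^*$ supplied by Proposition~\ref{prop:cent_hjb_soln}. Since $w_{\cdot i}^*>0$ by hypothesis, \eqref{sol:cent} gives $w_{\cdot i}^* = \phi_i^{-1}\bigl(1-\tfrac{\phi_i\theta_i\bar F_i(c_i^*)}{\mu_i}\bigr)$, hence $\Gamma(\phi_i w_{\cdot i}^*;1)=\log\!\bigl(\tfrac{\mu_i}{\phi_i\theta_i\bar F_i(c_i^*)}\bigr)$. Writing
\[
G_i(n):=\Gamma(\eta_i;1)+(n-1)\log\!\Bigl(\tfrac{\mu_i}{\phi_i\theta_i\bar F_i(c_i^*)}\Bigr),
\]
the equation for $c_i^*$ in \eqref{sol:cent} becomes $f_i(c_i^*)=r/\bigl(\theta_i G_i(n)\bigr)$, valid once $r/(\theta_i G_i(n))\le f_i(0)$; here $f_i^{-1}$ is well defined because $f_i$ is strictly decreasing by Assumption~\ref{assn}. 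Integrating the hypothesised density bounds produces the complementary–CDF estimates $\kappa_{i,L}\lambda_{i,L}e^{-c/\lambda_{i,L}}\le\bar F_i(c)\le\kappa_{i,U}\lambda_{i,U}e^{-c/\lambda_{i,U}}$, which I will substitute into $G_i(n)$.

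\emph{Crude lower bound.} First I would prove the first displayed inequality of the lemma. Because $G_i(n)\ge\Gamma(\eta_i;1)$ and $f_i^{-1}$ is decreasing, $c_i^*\ge f_i^{-1}\!\bigl(r/(\theta_i\Gamma(\eta_i;1))\bigr)=\hat c_i$; hence $\bar F_i(c_i^*)\le\bar F_i(\hat c_i)$ and $w_{\cdot i}^*\ge\hat w_{\cdot i}$, so $G_i(n)\ge(n-1)\Gamma(\phi_i\hat w_{\cdot i};1)$. Inserting $f_i(c_i^*)\ge\kappa_{i,L}e^{-c_i^*/\lambda_{i,L}}$ into $f_i(c_i^*)=r/(\theta_i G_i(n))$ and solving for $c_i^*$ gives exactly the bound $c_i^*\ge\lambda_{i,L}\log\!\bigl(\tfrac{\theta_i\kappa_{i,L}\Gamma(\phi_i\hat w_{\cdot i};1)}{r}\bigr)+\lambda_{i,L}\log(n-1)$. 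In particular $c_i^*\to\infty$, so for all large $n$ the defining equation is the interior one and $\Gamma(\phi_i w_{\cdot i}^*;1)\to\infty$.

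\emph{Sharp two-sided bounds.} Feeding both exponential density bounds into $f_i(c_i^*)=r/(\theta_i G_i(n))$ yields $\tfrac{r}{\theta_i\kappa_{i,U}}e^{c_i^*/\lambda_{i,U}}\le G_i(n)\le\tfrac{r}{\theta_i\kappa_{i,L}}e^{c_i^*/\lambda_{i,L}}$, while substituting the complementary–CDF estimates into the definition of $G_i(n)$ sandwiches it between two affine functions of $c_i^*$ times $(n-1)$:
\[
(n-1)\Bigl[\log\!\tfrac{\mu_i}{\phi_i\theta_i\kappa_{i,U}\lambda_{i,U}}+\tfrac{c_i^*}{\lambda_{i,U}}\Bigr]\ \le\ G_i(n)\ \le\ \Gamma(\eta_i;1)+(n-1)\Bigl[\log\!\tfrac{\mu_i}{\phi_i\theta_i\kappa_{i,L}\lambda_{i,L}}+\tfrac{c_i^*}{\lambda_{i,L}}\Bigr].
\]
Combining the matching halves gives transcendental inequalities of the shape $e^{c_i^*/\lambda_{i,U}}\le A_1(n-1)(B_1+c_i^*)$ and $e^{c_i^*/\lambda_{i,L}}\ge A_2(n-1)(B_2+c_i^*)$, with $A_j,B_j$ depending only on the fixed parameters of bank $i$. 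I would close these with a single bootstrap step: the crude bound already forces $c_i^*=\Theta(\log n)$ up to constants, so the factor linear in $c_i^*$ is $\Theta(\log n)$, turning the two inequalities into $e^{c_i^*/\lambda_{i,U}}\le\tfrac{\theta_i\kappa_{i,U}C_U}{r}(n-1)\log n$ and $e^{c_i^*/\lambda_{i,L}}\ge\tfrac{\theta_i\kappa_{i,L}\lambda_{i,L}}{r\lambda_{i,U}}(n-1)\bigl[\log(n-1)-\tfrac{\lambda_{i,U}}{\lambda_{i,L}}\log C_L\bigr]$ for suitable constants $C_U>3$, $C_L>0$ that, by construction, are built only from fixed model parameters and the ratio $\lambda_{i,U}/\lambda_{i,L}$. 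Taking logarithms gives parts (i) and (ii); dividing by $\log n$ and letting $n\to\infty$ gives $\lambda_{i,L}\le\liminf c_i^*/\log n$ and $\limsup c_i^*/\log n\le\lambda_{i,U}$, hence $c_i^*=\Theta(\log n)$. Proposition~\ref{prop:PoA_exp_exact_c} is the special case $\kappa_{i,L}=\kappa_{i,U}=\lambda_i^{-1}$, $\lambda_{i,L}=\lambda_{i,U}=\lambda_i$, for which $\liminf$ and $\limsup$ collapse to $\lambda_i$.

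\emph{Main obstacle.} The difficulty is that the optimality equation is implicit: $c_i^*$ reappears on the right-hand side through $\bar F_i(c_i^*)$ inside $G_i(n)$, so the two-sided exponential estimates are self-referential. The resolution is precisely the ordering above — extract the crude polynomial-order bound using only the trivial inequality $G_i(n)\ge(n-1)\Gamma(\phi_i\hat w_{\cdot i};1)$, which sidesteps the self-reference entirely, and only then use it to linearise the offending term so that an exponential-versus-polynomial comparison pins $c_i^*$ to logarithmic order. The remaining work is bookkeeping: checking that $C_U,C_L$ do not secretly depend on $n$ (immediate), and that $\Gamma(\phi_i\hat w_{\cdot i};1)>0$ so the crude bound is non-vacuous; the latter follows since $w_{\cdot i}^*>0$ together with $c_i^*\ge\hat c_i$ and monotonicity of $f_i$ force $\bar F_i(\hat c_i)<\mu_i/(\phi_i\theta_i)$, i.e. $\hat w_{\cdot i}>0$.
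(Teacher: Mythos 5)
Your route is in essence the paper's: reduce to the coupled first-order system \eqref{sol:cent}, substitute the super-/sub-exponential bounds on $f_i$ and $\bar F_i$ into the fixed-point relation $f_i(c_i^*)=r/(\theta_i G_i(n))$, and bootstrap; your crude lower bound and part (ii) reproduce the paper's argument almost verbatim. The genuine gap is in part (i). The inequality you reach, $e^{c_i^*/\lambda_{i,U}}\le A_1(n-1)(B_1+c_i^*)$, still has $c_i^*$ on the right-hand side, and to extract $c_i^*\le\lambda_{i,U}\log\bigl((n-1)\log n\bigr)+O(1)$ you must bound that occurrence of $c_i^*$ from \emph{above}. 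But the only ``crude bound'' you have established is the \emph{lower} bound of order $\lambda_{i,L}\log(n-1)$; asserting that ``the crude bound already forces $c_i^*=\Theta(\log n)$'' presupposes exactly the upper bound you are trying to prove, so as written the linearisation step is circular. The paper closes this hole by first deriving a crude upper bound $c_i^*\le Kn^2$ from positivity of the welfare gap in \eqref{eq:val_gap} (i.e. $J_C^*\ge\sum_i J_i^*$, then discarding nonpositive terms), and only then iterating the fixed-point relation twice to sharpen $Kn^2$ into the $(n-1)\log n$ bound. Your argument is repairable either that way, or more elementarily by observing that your transcendental inequality is self-improving (e.g. $e^x\ge x^2/2$ first gives a polynomial bound on $c_i^*$, and two back-substitutions then give the stated logarithmic bound), but some crude upper bound must be produced before the bootstrap can start.

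A secondary flaw: your closing claim that $w_{\cdot i}^*>0$, $c_i^*\ge\hat c_i$ and monotonicity force $\bar F_i(\hat c_i)<\mu_i/(\phi_i\theta_i)$ has the monotonicity backwards. Since $\bar F_i$ is decreasing and $c_i^*\ge\hat c_i$, one gets $\bar F_i(\hat c_i)\ge\bar F_i(c_i^*)$, so $w_{\cdot i}^*>0$ does \emph{not} imply $\hat w_{\cdot i}>0$; the implication runs the other way (this is precisely the two-equilibria phenomenon discussed after Assumption~\ref{assn:cent_uniq}). Hence positivity of $\Gamma(\phi_i\hat w_{\cdot i};1)$, which you need for the crude lower bound (and the lemma's first display) to be non-vacuous, should be taken as a hypothesis, as in Corollary~\ref{cor:asymp_comparison}, rather than deduced from $w_{\cdot i}^*>0$. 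Finally, a minor bookkeeping point: constants such as your $B_1$, $B_2$ can be negative, so the affine factors must be truncated (the paper does this with $\wedge 1$ and $\vee 1$ inside the logarithms) before the asymptotic statements are read off.
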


The proof follows from iterating through upper (and lower) bounds for $c_i^*$ using the system of equations in~\eqref{sol:cent}, and beginning from crude estimates. It is possible to use the same techniques in this proof to obtain bounds when the density has power-law tails. While the results are not qualitatively different, we are unable to achieve the tight bound that appears in Proposition~\ref{prop:PoA_exp_exact_c} when the shock distribution is itself a power-law. The main result can be seen in Appendix~\ref{app:PoA_pwr}.

\begin{proof}[Proof of Lemma~\ref{lem:PoA_exp_c}]
	The main idea in this proof is to first establish crude bounds of:
	$$\hat c_i \le c_i^* \le K n^2,$$
	for a suitable choice of $K$. This then allows us to improve the bounds on $c_i^*$ itself through the relationship
	$$c_i^* = f_i^{-1}\left(\frac{r}{\theta_i \left[\Gamma(\eta_i;1) - (n-1) \log\left(\frac{\phi_i \theta_i \bar F_i(c_i^*)}{\mu_i} \right)\right]}\right),$$
	using the assumptions of a super- and sub-exponential density.

	Through a direct computation with the explicit solutions in Propositions~\ref{prop:dec_hjb_soln} and~\ref{prop:cent_hjb_soln}, we can write
	\begin{equation}
		\begin{alignedat}{3}
			V(t, x_1,\dots,x_n) - \sum_{i=1}^n V_i(t, x_i) &= (T-t) && \left[J^*_C - \sum_{i=1}^n J^*_i\right] \\
			&= (T-t) &&\sum_{i=1}^n\Big[-r(c_i^* - \hat c_i) + (n-1) \mu_i (w_{\cdot i}^* - \hat w_{\cdot i}) \\
			& &&  - \theta_i  \bar F_i(c_i^*) \big[\Gamma(\eta_i;1) + (n-1) \Gamma(\phi_i w_{\cdot i}^*) \big] \\
			& &&  + \theta_i  \bar F_i(\hat c_i) \big[\Gamma(\eta_i;1) + (n-1) \Gamma(\phi_i \hat w_{\cdot i}) \big] \\
		\end{alignedat}
	\end{equation}
	Observe that using the definitions, we have $w_{\cdot i}^* - \hat w_{\cdot i} = \frac{\theta_i}{\mu_i}\left(\bar F_i(\hat c_i) - \bar F_i( c_i^*)\right)$. Plugging this expression in and rearranging terms, we obtain:
	\begin{equation}
		\label{eq:val_gap_reduced}
		\begin{split}
			\frac{g(t) - \sum_{i=1}^n g_i(t)}{T-t} =&  \sum_{i=1}^n\Bigg[-r(c_i^* - \hat c_i) + \theta_i \left(\bar F_i(\hat c_i) - \bar F_i( c_i^*)\right) \big[(n-1) + \Gamma(\eta_i;1)\big] \\
			& + \theta_i (n-1) \Big[\bar F_i(\hat c_i)\Gamma(\phi_i \hat w_{\cdot i};1) - \bar F_i(c_i^*)\Gamma(\phi_i w_{\cdot i}^*;1)\Big] \Bigg].
		\end{split}
	\end{equation}
	Since we know the gap in~\eqref{eq:val_gap_reduced} must be positive, we can write:
	\begin{equation}
		\begin{split}
			\sum_{i=1}^n r c_i^* \le &  \sum_{i=1}^n\Bigg[r \hat c_i + \theta_i \left(\bar F_i(\hat c_i) - \bar F_i( c_i^*)\right) \big[(n-1) + \Gamma(\eta_i;1)\big] \\
			& + \theta_i (n-1) \Big[\bar F_i(\hat c_i)\Gamma(\phi_i \hat w_{\cdot i};1) - \bar F_i(c_i^*)\Gamma(\phi_i w_{\cdot i}^*;1)\Big] \Bigg] \\
			\le &  \sum_{i=1}^n\Bigg[r \hat c_i + \theta_i \bar F_i(\hat c_i) \big[(n-1) + \Gamma(\eta_i;1)\big] \\
			& + \theta_i (n-1) \Big[\bar F_i(\hat c_i)\Gamma(\phi_i \hat w_{\cdot i};1) \Big] \Bigg], \\
		\end{split}
	\end{equation}
	which follows by dropping the final term and since $\bar F_i(c_i^*)\ge 0$. A crude bound implies that 
	\begin{equation}
		\begin{split}
			rc_i^* & \le \sum_{i = 1}^n (n-1) \Big[r \hat c_i + \theta_i \bar F_i(\hat c_i) \big[1 + \Gamma(\eta_i;1) + \Gamma(\phi_i \hat w_{\cdot i};1)\big]\Big] \\
			c_i^* &\le K n^2,
		\end{split}
	\end{equation}
	where $K = \max_i \Big\{ \hat c_i + \frac{\theta_i}{r} \bar F_i(\hat c_i) \big[1 + \Gamma(\eta_i;1) + \Gamma(\phi_i \hat w_{\cdot i};1)\big] \Big\}$ does not depend explicitly on $n$. Since $w_{\cdot i}^* \ge 0$, it is also easy to see that $c_i^* \ge \hat c_i$. Both these bounds will be useful starting points for the proof.

	\begin{enumerate}[(i)]
		\item 
		\textbf{Upper Bound:} 
		We first prove the upper bound for $c_i^*$. First, since $f_i(x) \le \kappa_{i,U} e^{-\frac{x}{\lambda_{i,U}}}$ and both functions are decreasing, we will have $f_i^{-1} (y) \le \lambda_{i,U} \log\left(\frac{\kappa_{i,U}}{y}\right)$, and it follows from the system of equations~\eqref{sol:cent} that 
		\begin{equation}
			c_i^* \le \lambda_{i,U} \log\left(\frac{\theta_i \kappa_{i,U} \left[\Gamma(\eta_i;1) - (n-1) 	\log\left(\frac{\phi_i \theta_i}{\mu_i}\bar F_i(c_i^*)\right)\right]}{r}\right).
		\end{equation}
		Now, using $f_i(x) \ge \kappa_{i,L} e^{-\frac{x}{\lambda_{i,L}}}$, we know that $\bar F_i(c_i^*) = \int_{c_i^*}^\infty f_i(u) du \ge \kappa_{i,L} \lambda_{i,L} e^{-\frac{c_i^*}{\lambda_{i,L}}}$, and write:
		\begin{equation}
			\label{eq:c_exp_ub}
			\begin{split}
				c_i^* & \le  \lambda_{i,U} \log\left(\frac{\theta_i \kappa_{i,U} \left[\Gamma(\eta_i;1) - (n-1) \log\left(\frac{\phi_i \theta_i \kappa_{i,L} \lambda_{i,L}}{\mu_i}\right) +(n-1) \frac{c_i^*}{\lambda_{i,L}}\right]}{r}\right)\\
				&\le \lambda_{i,U} \log\left(\frac{\theta_i \kappa_{i,U} \left[\Gamma(\eta_i;1) - (n-1) \log\left(\frac{\phi_i 	\theta_i \kappa_{i,L} \lambda_{i,L}}{\mu_i} \wedge 1\right) +(n-1) \frac{c_i^*}{\lambda_{i,L}}\right]}{r}\right).
			\end{split}
		\end{equation}
		Since each of the three terms in the brackets is non-negative, we can upper bound this quantity by:	
		\begin{equation}
			c_i^* \le \lambda_{i,U} \log\left(\frac{\theta_i \kappa_{i,U} \left[\Gamma(\eta_i;1) -  \log\left(\frac{\phi_i 	\theta_i \kappa_{i,L} \lambda_{i,L}}{\mu_i} \wedge 1\right) + \lambda_{i,L}^{-1} \right]n c_i^*}{r}\right),
		\end{equation}
		and we define $D = \Gamma(\eta_i;1) -  \log\left(\frac{\phi_i \theta_i \kappa_{i,L} \lambda_{i,L}}{\mu_i} \wedge 1\right) + \lambda_{i,L}^{-1}$ for convenience. Recall that we obtained a crude upper bound of $c_i^* \le K n^2$, which, when plugged in, yields:
		\begin{equation}
			\begin{split}
				c_i^* & \le \lambda_{i,U} \log\left(\frac{\theta_i \kappa_{i,U} D K n^3}{r}\right).
			\end{split}
		\end{equation}
		This is a significantly tighter bound than $Kn^2$. Therefore, we plug it back into~\eqref{eq:c_exp_ub}. By simplifying and bounding the term in the logarithm, we compute:
		\small
		\begin{equation}
			\begin{split}	
				\frac{c_i^*}{\lambda_{i,U}} &\le   \log\left(\frac{\theta_i \kappa_{i,U} \left[\Gamma(\eta_i;1) - (n-1) \log\left(\frac{\phi_i 	\theta_i \kappa_{i,L} \lambda_{i,L}}{\mu_i} \right) +(n-1) \frac{\lambda_{i,U}}{\lambda_{i,L}} \log\left(\frac{\theta_i \kappa_{i,U} D K n^3}{r}\right)\right]}{r}\right) \\
				&\le  \log\left(\frac{\theta_i \kappa_{i,U} \left[\Gamma(\eta_i;1) + (n-1) \left[ \log\left(\frac{\mu_i}{\phi_i 	\theta_i \kappa_{i,L} \lambda_{i,L}} \left(\frac{\theta_i \kappa_{i,U} D K}{r}\right)^{\frac{\lambda_{i,U}}{\lambda_{i,L}}} \vee 1 \right) + 3\frac{\lambda_{i,U}}{\lambda_{i,L}} \log(n)\right]\right]}{r}\right).
			\end{split}	
		\end{equation}
		\normalsize
		Notice that $\Gamma(\eta_i;1)\ge 0$, $\log\left(\frac{\mu_i}{\phi_i 	\theta_i \kappa_{i,L} \lambda_{i,L}} \left(\frac{\theta_i \kappa_{i,U} D K}{r}\right)^{\frac{\lambda_{i,U}}{\lambda_{i,L}}} \vee 1 \right) \ge 0$. Therefore, we can write
		\small
		\begin{equation}
			\frac{c_i^*}{\lambda_{i,U}} \le \log\left(\frac{\theta_i \kappa_{i,U} \left[\Gamma(\eta_i;1) + \log\left(\frac{\mu_i}{\phi_i 	\theta_i \kappa_{i,L} \lambda_{i,L}} \left(\frac{\theta_i \kappa_{i,U} D K}{r}\right)^{\frac{\lambda_{i,U}}{\lambda_{i,L}}} \vee 1 \right)+ 3\frac{\lambda_{i,U}}{\lambda_{i,L}}\right](n-1)\log(n)}{r}\right),
		\end{equation}
		\normalsize
		and after simplification we obtain the desired bound of:
		\begin{equation}
			c_i^* \le \lambda_{i,U} \log\left(\frac{\theta_i \kappa_{i,U} C_U}{r}\right) + \lambda_{i,U} \log\left((n-1) \log(n) \right),
		\end{equation}
		where $C_U = \Gamma(\eta_i;1) + \log\left(\frac{\mu_i}{\phi_i 	\theta_i \kappa_{i,L} \lambda_{i,L}} \left(\frac{\theta_i \kappa_{i,U} D K}{r}\right)^{\frac{\lambda_{i,U}}{\lambda_{i,L}}} \vee 1 \right)+ 3\frac{\lambda_{i,U}}{\lambda_{i,L}}$. Observe that $C_U$ does not depend explicitly on $n$, but through $K$ it will be a function of parameters throughout the system.

		Finally, it follows that $\lim_{n\to \infty}\frac{c_i^*}{\log(n)} \le \lambda_{i,U}$.
		\item 
		\textbf{Lower Bound:} 
		We proceed with the lower bound identically. With our assumption of $f_i(x) \ge \kappa_{i,L} e^{-\frac{x}{\lambda_{i,L}}}$, we know 
		\begin{equation}
			\label{eq:c_exp_lb_simple}
			c_i^*  \ge \lambda_{i,L} \log\left(\frac{\theta_i \kappa_{i,L} \left[\Gamma(\eta_i;1) - (n-1) \log\left(\frac{\phi_i \theta_i}{\mu_i}\bar F_i(c_i^*)\right)\right]}{r}\right).
		\end{equation}
		Moreover, since $\Gamma(\eta_i;1) \ge 0$ this term can be dropped to obtain:
		\begin{equation}
			\label{eq:c_exp_lb}
			c_i^* \ge \lambda_{i,L} \log\left(\frac{-\theta_i \kappa_{i,L}  (n-1) \log\left(\frac{\phi_i \theta_i}{\mu_i}\bar F_i(c_i^*)\right)}{r}\right).
		\end{equation}
		By plugging in the initial crude bound of $c_i^*\hat c_i$, and since $\Gamma(\phi_i \hat w_{\cdot i};1) = - \log\left(\frac{\phi_i \theta_i}{\mu_i}\bar F_i(\hat c_i)\right)$ by definition, we can compute a tighter lower bound for $c_i^*$ of
		\begin{equation}
			\label{eq:c_lb_superexp}
			c_i^* \ge \lambda_{i,L} \log\left(\frac{\theta_i \kappa_{i,L} (n-1) \Gamma(\phi_i \hat w_{\cdot i};1)}{r}\right).
		\end{equation}
		This is precisely the lower bound in the first part of Lemma~\ref{lem:PoA_exp_c}. Note that for this result, we needed only the lower bound on $f_i(\cdot)$, through which~\eqref{eq:c_exp_lb_simple} follows.

		We now continue and prove the tighter lower bound, which requires the upper bound on $f_i(\cdot)$. In particular, we assumed that $f_i(x) \le \kappa_{i,U} e^{-\frac{x}{\lambda_{i,U}}}$, and it follows that $\bar F_i(c_i^*) \le \kappa_{i,U} \lambda_{i,U} e^{-\frac{c_i^*}{\lambda_{i,U}}}$. With~\eqref{eq:c_lb_superexp}, we can compute an improved upper bound of:
		\begin{equation}
			\bar F_i(c_i^*) \le \kappa_{i,U} \lambda_{i,U} \left(\frac{r} {\theta_i  \kappa_{i,L} (n-1) \Gamma(\phi_i \hat w_{\cdot i})}\right)^{\frac{\lambda_{i,L}}{\lambda_{i,U}}}.
		\end{equation}
		This upper bound on $f_i(\cdot)$ also implies that $\hat c_i \le \lambda_{i,U} \log\left(\frac{\theta_i \kappa_{i,U} \Gamma(\eta_i;1)}{r}\right)$. Similarly, the assumed $f_i(x) \ge \kappa_{i,L} e^{-\frac{x}{\lambda_{i,L}}}$ will give us $\bar F_i(\hat c_i) \ge \kappa_{i,L} \lambda_{i,L} e^{-\frac{\hat c_i}{\lambda_{i,L}}}$. Putting the two together, we will have 
		$$\bar F_i(\hat c_i) \ge \kappa_{i,L}\lambda_{i,L} \left(\frac{r}{\theta_i \kappa_{i,U}\Gamma(\eta_i;1)}\right)^{\frac{\lambda_{i,U}}{\lambda_{i,L}}},$$
		and it follows that 
		$$\bar F_i(c_i^*) \le \bar F_i(\hat c_i)  \frac{\kappa_{i,U} \lambda_{i,U}}{\kappa_{i,L}\lambda_{i,L}} \left(\frac{r}{\theta_i}\right)^{\frac{\lambda_{i,L}}{\lambda_{i,U}} - \frac{\lambda_{i,U}}{\lambda_{i,L}}} \frac{\left(\kappa_{i,U}\Gamma(\eta_i;1)\right)^{\frac{\lambda_{i,U}}{\lambda_{i,L}}}}{\left(\kappa_{i,L}\Gamma(\phi_i \hat w_{\cdot i};1)\right)^{\frac{\lambda_{i,L}}{\lambda_{i,U}}}} (n-1)^{-\frac{\lambda_{i,L}}{\lambda_{i,U}}}.$$

		Let $C_L = \frac{\kappa_{i,U} \lambda_{i,U}}{\kappa_{i,L}\lambda_{i,L}} \left(\frac{r}{\theta_i}\right)^{\frac{\lambda_{i,L}}{\lambda_{i,U}} - \frac{\lambda_{i,U}}{\lambda_{i,L}}} \frac{\left(\kappa_{i,U}\Gamma(\eta_i;1)\right)^{\frac{\lambda_{i,U}}{\lambda_{i,L}}}}{\left(\kappa_{i,L}\Gamma(\phi_i \hat w_{\cdot i};1)\right)^{\frac{\lambda_{i,L}}{\lambda_{i,U}}}}$. Plugging this bound into~\eqref{eq:c_exp_lb}, we obtain:
		\begin{equation}
			\begin{split}
				c_i^* &\ge \lambda_{i,L} \log\left(\frac{-\theta_i \kappa_{i,L}  (n-1) \log\left(\frac{\phi_i \theta_i}{\mu_i}\bar F_i(\hat c_i)   C_L (n-1)^{-\frac{\lambda_{i,L}}{\lambda_{i,U}}} \right)}{r}\right) \\
				&\ge \lambda_{i,L} \log\left(\frac{-\theta_i \kappa_{i,L}  (n-1) \log\left( C_L (n-1)^{-\frac{\lambda_{i,L}}{\lambda_{i,U}}} \right)}{r}\right),
			\end{split}
		\end{equation}
		since $- \log\left(\frac{\phi_i \theta_i}{\mu_i}\bar F_i(\hat c_i)\right) = \Gamma(\phi_i \hat w_{\cdot i};1) \ge 0$, and hence this term can be dropped. Simplifying, we arrive at the desired bound of:
		\begin{equation}
			c_i^* \ge \lambda_{i,L} \log\left(\frac{\theta_i \kappa_{i,L}\lambda_{i,L}}{r\lambda_{i,U}}\right) + \lambda_{i,L} \log\left((n-1) \Big[\log(n-1) - \frac{\lambda_{i,U}}{\lambda_{i,L}} \log\left(C_L \right)\Big]\right),
		\end{equation}
		from which it follows that $\lim_{n\to \infty}\frac{c_i^*}{\log(n)} \ge \lambda_{i,L}$.
	\end{enumerate}

	Putting both $(i)$ and $(ii)$ together, we see that $c_i^* = \Theta\big(\log(n)\big)$.
\end{proof}

\begin{proof}[Proof of Proposition~\ref{prop:PoA_order}]
	Using Propositions~\ref{prop:dec_hjb_soln} and~\ref{prop:cent_hjb_soln}, we can compute
	\begin{equation}
		\label{eq:val_gap_exact}
		\begin{split}
			\frac{V - \sum_{i=1}^n V_i}{T-t} =&  \sum_{i=1}^n\Bigg[-r(c_i^* - \hat c_i) + \theta_i \left(\bar F_i(\hat c_i) - \bar F_i(c_i^*)\right) \big[(n-1) + \Gamma(\eta_i;1)\big] \\
			& + \theta_i (n-1) \Big[\bar F_i(\hat c_i)\Gamma(\phi_i \hat w_{\cdot i};1) - \bar F_i(c_i^*)\Gamma(\phi_i w_{\cdot i}^*;1)\Big] \Bigg],
		\end{split}
	\end{equation}
	where $V$ and $V_i$ are evaluated at $(t, x_1,\dots,x_n)$ and the difference becomes independent of wealths because of logarithmic utility. Notice that any of the terms in the sum will equal zero if $w_{\cdot i}^* = 0$ (in which case we also must also have $\hat w_{\cdot i} = 0$, and hence $\hat c_i = c_i^*$). If not, then using the results from Section~\ref{sec:diff} we see that
	$$\frac{-r(c_i^* - \hat c_i) + \theta_i \left(\bar F_i(\hat c_i) - \bar F_i(c_i^*)\right) \big[(n-1) + \Gamma(\eta_i;1)\big]}{n} \underset{n\to\infty}{\to} \theta_i \bar F_i(\hat c_i),$$
	since $c_i^* \asymp \log(n)$ and $\bar F_i(c_i^*) \to 0$. Moreover, we have seen that $(n-1)\bar F_i(c_i^*)\Gamma(\phi_i w_{\cdot i}^*;1) = \Theta(1)$. Since the sum is now of order $|\M_n|$, putting the two together yields
	$$\frac{V - \sum_{i=1}^n V_i}{T-t} = \Theta\left(n|\M_n|\right).$$
	In Proposition~\ref{prop:dec_hjb_soln}, it is easy to see that $V_i = (T-t) \Theta\left(|\M_n|\right)$, and therefore we obtain
	$$\frac{V}{\sum_{i=1}^n V_i} = 1 + \Theta(1),$$
	as desired.
\end{proof}

\begin{proof}[Proof of Corollary~\ref{cor:PoA_limit}]
	This proposition is proved easily by analyzing the value functions in Propositions~\ref{prop:dec_hjb_soln} and~\ref{prop:cent_hjb_soln}. We will use the notation of Section~\ref{sec:diff}, where $\hat c_\cdot$ indicates the decentralized optimum, and $c_{\cdot}^*$ indicates the centralized optimum (likewise for $w_{\cdot \cdot}$).

	We begin by analyzing the decentralized value function $V_i$. Using the explicit formula in Corollary~\ref{cor:dec_verification}, we write:
	\begin{equation}
		\frac{V_i}{|\M_n|(T-t)} = \frac{J_i^*}{|\M_n|} + \frac{\log x}{|\M_n|(T-t)},
	\end{equation}
	and see that the second term will go to zero as $n\to \infty$. Moreover, by assumption that all banks in $\M_n$ are homogeneous, we will have $\hat w_{ij} = \hat w_{ik}$ for any $j,k \in M_n$. This yields:
	\begin{equation}
		J_i^* = (1 - \hat c_i)r - \theta_i \bar F_i(\hat c_i) \Gamma(\eta_i;1) + |\M_n| \left[\mu \hat w - \theta \bar F(\hat c) \Gamma(\phi \hat w;1)\right],
	\end{equation}
	where $\hat c$ denotes the optimal liquidity supply held by any bank in $\M_n$, and $\hat w$ denotes the optimal investment made by any bank to those in $\M_n$. By using Eq~\eqref{sol:dec} to compute $\hat w$, we obtain:
	\begin{equation}
		\begin{split}
			J_i^* &= (1 - \hat c_i)r - \theta_i \bar F_i(\hat c_i) \Gamma(\eta_i;1) \\
			& \quad + |\M_n| \left[\frac{\mu}{\phi}\left(1 - \frac{\phi \theta \bar F(\hat c)}{\mu}\right) + \theta \bar F(\hat c) \log\left(\frac{\phi \theta \bar F(\hat c)}{\mu}\right)\right],
		\end{split}
	\end{equation}
	and the desired limit follows. We note that this expression for $J_i^*$ is only correct for $i \notin \M_n$, otherwise we would have a factor of $|\M_n|-1$ in front of the term in brackets. However, in the limit this difference disappears.

	The analysis of the centralized setting is almost identical, using the value function in Proposition~\ref{prop:cent_hjb_soln}, we have:

	\begin{equation}
		\frac{V}{n|\M_n|(T-t)} = \frac{J^*_C}{n|\M_n|} + \frac{\sum_{i = 1}^n \log x_i}{n|\M_n|(T-t)}.
	\end{equation}
	The only term of interest for large $n$ will be $J_C^*$, and by homogeneity within $\M_n$ we can see that:
	\begin{equation}
		J_C^* = |\M_n| (n-1) w^* \mu + \sum_{i=1}^n \Bigg( \left(1-c_i^*\right)r  - \theta_i \bar F_i(c_i^*)\Big[ \Gamma(\eta_i;1) + (n-1) \Gamma(\phi_i w^*;1) \Big] \Bigg),
	\end{equation}
	where $w^*$ denotes the optimal fractional amount invested into each bank in $\M_n$. Notice that only for bank in $\M_n$ will we have $c_i^*$ growing with $n$ (logarithmically). Moreover, from the analysis in Section~\ref{sec:diff}, we also know that $(n-1) \bar F_i(c_i^*) \Gamma(\phi_iw_{\cdot i}^*;1)$ is of constant order. Therefore, when dividing by $n |\M_n|$ and taking the limit, the sum will go to zero. Only the first term will remain, and we also know that $w^* \to \phi^{-1}$ as $n\to \infty$, which concludes.
	
	In order to show the limit for the price of anarchy, it is only necessary to sum $V_i$ over $n$ and divide.
\end{proof}

\section{Price of Anarchy: Super-/Sub-Power Distribution}
\label{app:PoA_pwr}

In this section, we perform similar calculations to the main result of Section~\ref{sec:diff}, but for shock size densities bounded by power law distributions. In particular, we have the following analogue of Lemma~\ref{lem:PoA_exp_c}:

\begin{prop}
	\label{aprop:PoA_pwr_c}
	If for all $x$ we have
	$f_i(x) \ge \kappa_{i,L} (\zeta_{i}^0+x)^{-\frac{1}{\alpha_{i,L}}},$
	for some constants $\alpha_{i,L} < 1$, $\kappa_{i,L} > 0$, and $\zeta_{i}^0 \ge 1$, then
	\begin{equation}
		\label{aeq:PoA_pwr_c_lb_simple}
		c_i^* \ge \left(\frac{- \kappa_{i,L} \theta_i (n-1) 	\log\left(\frac{\phi_i \theta_i}{\mu_i} \bar F_i(\hat c_i)\right) }{r}\right)^{\alpha_{i,L}} - \zeta_{i}^0.
	\end{equation}

	If, furthermore, the density satisfies
	$f_i(x) \le \kappa_{i,U} (\zeta_{i}^0+x)^{-\frac{1}{\alpha_{i,U}}},$
	with $\kappa_{i,U} \ge \kappa_{i,L}$ and $\alpha_{i,L} \le \alpha_{i,U} < 1$, then:
	\begin{enumerate}[(i)]
		\item \textbf{Upper Bound:}
		\begin{equation}
			c_i^* \le C_U \big[(n-1)\log(n)\big]^{\alpha_{i,U}} - \zeta_{i}^0,
		\end{equation}
		where $C_U$ depends on all model parameters, but does not explicitly grow with $n$. As a result, $\lim_{n\to \infty} \frac{c_i^*}{\big[(n-1)\log(n)\big]^{\alpha_{i,U}}} \le C_U$.
		
		\item \textbf{Lower Bound:}
		\begin{equation}
			c_i^* \ge 	\left(\frac{\kappa_{i,L}\theta_i}{r}(n-1)\left[\left(\frac{\alpha_{i,L}}{\alpha_{i,U}} - \alpha_{i,L}\right)\log(n-1) - \log(C_L)\right]\right)^{\alpha_{i,L}} - \zeta_i^0,
		\end{equation}
		for $C_L > 0$ depending only on $i$. Hence, $\lim_{n\to \infty} \frac{c_i^*}{\big[(n-1)\log(n)\big]^{\alpha_{i,L}}} \ge \left(\frac{\kappa_{i,L}\theta_i}{r} \left(\frac{\alpha_{i,L}}{\alpha_{i,U}} - \alpha_{i,L}\right)\right)^{\alpha_{i,L}}$.
	\end{enumerate}
\end{prop}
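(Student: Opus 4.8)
The plan is to mirror the bootstrapping argument in the proof of Lemma~\ref{lem:PoA_exp_c}, substituting super-/sub-\emph{power-law} estimates for $f_i$ in place of the super-/sub-exponential ones. Everything rests on the fixed-point characterization of the planner's cash reserves that follows from~\eqref{sol:cent}, namely
\begin{equation*}
	c_i^* = f_i^{-1}\!\left(\frac{r}{\theta_i\bigl[\Gamma(\eta_i;1) - (n-1)\log\bigl(\tfrac{\phi_i\theta_i}{\mu_i}\,\bar F_i(c_i^*)\bigr)\bigr]}\right),
\end{equation*}
together with the crude two-sided estimate $\hat c_i \le c_i^* \le K n^2$, whose proof does not use the shape of $f_i$ and is therefore identical to the one in Lemma~\ref{lem:PoA_exp_c}. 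The inputs specific to the power-law setting are: (a) from $f_i(x) \ge \kappa_{i,L}(\zeta_i^0+x)^{-1/\alpha_{i,L}}$ and $f_i(x)\le\kappa_{i,U}(\zeta_i^0+x)^{-1/\alpha_{i,U}}$ one obtains, by monotone comparison of inverses, $f_i^{-1}(y) \ge (\kappa_{i,L}/y)^{\alpha_{i,L}} - \zeta_i^0$ and $f_i^{-1}(y)\le(\kappa_{i,U}/y)^{\alpha_{i,U}}-\zeta_i^0$; and (b) integrating these tails (finite precisely because $\alpha_{i,L}\le\alpha_{i,U}<1$) gives $\kappa_{i,L}\tfrac{\alpha_{i,L}}{1-\alpha_{i,L}}(\zeta_i^0+c)^{-(1-\alpha_{i,L})/\alpha_{i,L}} \le \bar F_i(c) \le \kappa_{i,U}\tfrac{\alpha_{i,U}}{1-\alpha_{i,U}}(\zeta_i^0+c)^{-(1-\alpha_{i,U})/\alpha_{i,U}}$, so that $-\log\bar F_i(c)$ is of order $\log c$.

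For the coarse lower bound~\eqref{aeq:PoA_pwr_c_lb_simple} I drop the nonnegative term $\Gamma(\eta_i;1)$ from the bracket, use the monotonicity of $f_i^{-1}$ together with $c_i^*\ge\hat c_i$ (hence $\bar F_i(c_i^*)\le\bar F_i(\hat c_i)$) to replace $\bar F_i(c_i^*)$ by $\bar F_i(\hat c_i)$ inside the logarithm, and then apply $f_i^{-1}(y)\ge(\kappa_{i,L}/y)^{\alpha_{i,L}}-\zeta_i^0$. This yields~\eqref{aeq:PoA_pwr_c_lb_simple} directly, and since $\hat c_i$ and $\bar F_i(\hat c_i)$ do not depend on $n$, it already shows $c_i^*$ grows at least like $n^{\alpha_{i,L}}$.

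Part (i) is then obtained by two refinement passes. Substituting the upper inverse estimate into the fixed-point equation gives $c_i^* \le \bigl(\tfrac{\kappa_{i,U}\theta_i}{r}[\Gamma(\eta_i;1) - (n-1)\log(\tfrac{\phi_i\theta_i}{\mu_i}\bar F_i(c_i^*))]\bigr)^{\alpha_{i,U}} - \zeta_i^0$; bounding $-\log\bar F_i(c_i^*)$ from above via (b) evaluated at $c_i^*\le Kn^2$ makes the bracket of order $n\log n$ and hence $c_i^* = O\bigl((n\log n)^{\alpha_{i,U}}\bigr)$; feeding this improved estimate once more through the same inequality makes $-\log\bar F_i(c_i^*)$ of order $\log n$ and produces $c_i^* \le C_U[(n-1)\log n]^{\alpha_{i,U}} - \zeta_i^0$, with $C_U$ collecting all the ($n$-independent) parameter dependence, which enters in particular through $K$. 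Part (ii) is the symmetric refinement from below: starting from~\eqref{aeq:PoA_pwr_c_lb_simple}, which gives $\zeta_i^0+c_i^* \gtrsim (n-1)^{\alpha_{i,L}}$, I insert this into the \emph{upper} tail estimate of (b) to get $\bar F_i(c_i^*) \lesssim (n-1)^{-\alpha_{i,L}(1-\alpha_{i,U})/\alpha_{i,U}}$ up to an $n$-independent constant, and then substitute back into the fixed-point equation; since $\alpha_{i,L}\tfrac{1-\alpha_{i,U}}{\alpha_{i,U}} = \tfrac{\alpha_{i,L}}{\alpha_{i,U}}-\alpha_{i,L}$, this exponent emerges exactly as the coefficient of $\log(n-1)$ in the bound, the remaining constants being absorbed into $C_L$. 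The two limiting statements follow by dividing through by $[(n-1)\log n]^{\alpha_{i,U}}$ and $[(n-1)\log n]^{\alpha_{i,L}}$ respectively and letting $n\to\infty$.

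The work here is bookkeeping rather than conceptual. The main obstacle, relative to the exponential case, is that $\log\bar F_i$ is now logarithmic rather than affine in $c$, so each substitution mixes the two exponents $\alpha_{i,L}$ and $\alpha_{i,U}$; one must check carefully that after the finitely many iterations the powers of $n$ collapse to exactly $\alpha_{i,U}$ (respectively $\alpha_{i,L}$), and that $C_U$, $C_L$ genuinely do not grow with $n$ — in particular that the crude $Kn^2$ bound, which seeds the first iteration, enters only through a logarithm. The hypotheses $\alpha_{i,L}\le\alpha_{i,U}<1$ and $\zeta_i^0\ge1$ are precisely what keep the tail integrals convergent and the inverse-function comparisons pointing in the right direction throughout.
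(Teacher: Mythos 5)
Your proposal is correct and follows essentially the same route as the paper: the paper's own proof simply reuses the crude estimate $\hat c_i \le c_i^* \le Kn^2$ from Lemma~\ref{lem:PoA_exp_c}, records the power-law bounds on $f_i^{-1}$ and $\bar F_i$, and then repeats the exponential-case bootstrap verbatim with those bounds, which is exactly the iteration you describe. Your accounting of the mixed exponents (in particular that $\alpha_{i,L}\tfrac{1-\alpha_{i,U}}{\alpha_{i,U}} = \tfrac{\alpha_{i,L}}{\alpha_{i,U}}-\alpha_{i,L}$ produces the coefficient of $\log(n-1)$) matches the stated constants.
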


The proof follows an identical technique. In the special case where the shock density is indeed a power distribution, we have the following analogue of Proposition~\ref{prop:PoA_exp_exact_c}.

\begin{corollary}
	If $f_i(x) = \frac{\left(\frac{1}{\alpha_i}-1\right)\left(\zeta_{i}^0\right)^{\frac{1}{\alpha_i} - 1}}{(\zeta_i^0 + x)^{\frac{1}{\alpha_i}}}$, then 
	$$c_i^* = \Theta\left( \big[(n-1)\log(n)\big]^{\alpha_i} \right).$$
\end{corollary}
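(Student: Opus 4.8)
The plan is to observe that the power density in the statement is \emph{simultaneously} of the super-power and sub-power form required by Proposition~\ref{aprop:PoA_pwr_c}, with the two tail exponents coinciding, so that the claim follows at once by specializing that proposition. Concretely, I would take
$$\alpha_{i,L} = \alpha_{i,U} = \alpha_i, \qquad \kappa_{i,L} = \kappa_{i,U} = \left(\tfrac{1}{\alpha_i}-1\right)\!\left(\zeta_i^0\right)^{\frac{1}{\alpha_i}-1},$$
keeping the same $\zeta_i^0$; then $f_i(x) = \kappa_{i,L}(\zeta_i^0+x)^{-1/\alpha_{i,L}} = \kappa_{i,U}(\zeta_i^0+x)^{-1/\alpha_{i,U}}$ for every $x \ge 0$, so both one-sided bounds hold (with equality). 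Since $\alpha_i<1$, this $f_i$ is positive, fully supported on $\R_+$, and strictly decreasing, so Assumption~\ref{assn} holds; it is moreover one of the families for which~\eqref{eq:cent_f_cond} is valid, so Proposition~\ref{prop:cent_hjb_soln} applies and $c_i^*$ is well defined and unique. As in Lemma~\ref{lem:PoA_exp_c}, the statement is understood for $i$ in the core of the planner's network (i.e.\ $w_{\cdot i}^*>0$), since otherwise $c_i^* = \hat c_i$ is of constant order and no such growth occurs.

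With these identifications, the upper bound of Proposition~\ref{aprop:PoA_pwr_c}(i) gives $c_i^* \le C_U\,[(n-1)\log(n)]^{\alpha_i} - \zeta_i^0$ with $C_U$ independent of $n$, so $c_i^*/[(n-1)\log(n)]^{\alpha_i}$ is asymptotically bounded above by $C_U<\infty$. For the matching lower bound I use part~(ii): the key point is that $\tfrac{\alpha_{i,L}}{\alpha_{i,U}}-\alpha_{i,L} = 1-\alpha_i > 0$, so the bracket $(1-\alpha_i)\log(n-1) - \log(C_L)$ is eventually positive and of order $\log(n)$, whence $c_i^* \ge \big(\tfrac{\kappa_{i,L}\theta_i}{r}(n-1)[(1-\alpha_i)\log(n-1)-\log(C_L)]\big)^{\alpha_i} - \zeta_i^0$, giving $\liminf_{n\to\infty} c_i^*/[(n-1)\log(n)]^{\alpha_i} \ge \big(\tfrac{\kappa_{i,L}\theta_i}{r}(1-\alpha_i)\big)^{\alpha_i} > 0$. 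Combining the two bounds produces positive constants $A_1 \le A_2$ sandwiching $c_i^*/[(n-1)\log(n)]^{\alpha_i}$, which is exactly $c_i^* = \Theta\big([(n-1)\log(n)]^{\alpha_i}\big)$.

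There is essentially no obstacle here: all the work is already contained in Proposition~\ref{aprop:PoA_pwr_c}, and this corollary merely records the special case in which the upper and lower power-tail exponents agree. The only points needing a line of care are (a) verifying that $1-\alpha_i>0$ so the lower bound is of the advertised order rather than vacuous — which is precisely the standing requirement $\alpha_i<1$ that makes $f_i$ integrable — and (b) the caveat that the claim concerns banks in the core, since peripheral banks retain constant-order cash reserves.
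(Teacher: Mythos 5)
Your proposal is correct and is exactly the paper's argument: the corollary is obtained by plugging $\alpha_{i,L}=\alpha_{i,U}=\alpha_i$ (and the matching $\kappa$'s) into Proposition~\ref{aprop:PoA_pwr_c} and combining its two one-sided bounds, with your checks of Assumption~\ref{assn}, the condition $1-\alpha_i>0$, and the core-bank caveat being sensible but routine additions. No gaps to report.
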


This result can be seen by simply plugging $\alpha_{i,L} = \alpha_{i,U} = \alpha_i$ into Proposition~\ref{aprop:PoA_pwr_c}.

This Corollary can be used to replicate the remaining analysis in Section~\ref{sec:diff}, but as the results are qualitatively similar, we omit these calculations.

\subsection{Proof of Proposition~\ref{aprop:PoA_pwr_c}}

\begin{proof}
	The proof of this result largely mirrors the proof of Lemma~\ref{lem:PoA_exp_c}. Recall that we have shown that 
	$$\hat c_i \le c_i^* \le Kn^2,$$
	for a suitable choice of $K$. By our assumptions on the density, it also follows that:
	$$\left(\frac{y}{\kappa_{i,L}}\right)^{-\alpha_{i,L}} - \zeta_{i}^0 \le f_i^{-1}(y) \le \left(\frac{y}{\kappa_{i,U}}\right)^{-\alpha_{i,U}} - \zeta_{i}^0,$$
	and
	$$\frac{\kappa_{i,L}}{\frac{1}{\alpha_{i,L}} - 1}(\zeta_{i}^0 + x)^{1 - \frac{1}{\alpha_{i,L}}} \le 1 - F_i(x) \le \frac{\kappa_{i,U}}{\frac{1}{\alpha_{i,U}} - 1}(\zeta_{i}^0 + x)^{1 - \frac{1}{\alpha_{i,U}}}.$$
	We can then follow the proof of Lemma~\ref{lem:PoA_exp_c} identically, but using these bounds instead.
\end{proof}

\change{

\section{Model Extensions}
\label{app:alternate_models}

\subsection{Liquidity Loss on Shock Arrival}
\label{subapp:liq_loss_shock}

In the main text of the paper, we make the assumption that the magnitude of a liquidity shock is important only if it exceeds the firm's cash reserves. 
We present an extension to our model that incorporates the magnitude of this loss directly into the dynamics of each firm's net capitalization.

Specifically, consider the event $d\tilde{N}^i_t = 1$, where a shock arrives for firm $i$.  In the main text, this event induces a jump in firm $i$'s capitalization process $X^i_t$ if and only if the size of the shock $\zeta^i_t$ is larger than the firm's available cash reserves $c^i_t$. However, we can build a model in which firms must themselves pay out any liquidity shocks, but only up to their maximum available supply of $c^i_t$. This feature is captured in the following equation for the dynamics of $X^i_t$.
\begin{align}
	\label{app:eq:liq_loss_shock_sde}
	\frac{dX^i_t}{X^i_t} &= \left( (1 - c^i_t) r  + \sum_{j\neq i}w_t^{ij} \mu_j + \frac{\eta_i }{\phi_i} \mu_i\right)dt - \sum_{j\neq i} w_t^{ij} \phi_j \,dN^j_t \\
	& \qquad  - (\eta_i \change{ \,+\, c^i_t}) \,dN^i_t \change{  \,-\, \zeta^i_t (d\tilde{N}^i_t - dN^i_t) }\, , \quad i=1,\cdots,n.
\end{align}
Notice that: i) when $dN^i_t = 1$, and a shock is larger than the firm's available liquidity, this liquidity is immediately lost to cover the shock, and ii) if $d \tilde{N}^i_t = 1$ but $dN^i_t = 0$, and the shock was smaller than the firm's supply of liquidity, then the firm only loses exactly $\zeta^i_t$ -- corresponding to their obligation to pay for the shock.

The value function for individual firm $i$, denoted $V_i(t,x)$ in the decentralized case, can be written identically to \eqref{eq:dec_val}, but under the same regularity conditions  of Proposition \ref{prop:dec_opt_hjb}, the associated non-local PDE it solves 
is now given by:
\begin{align}
	\label{app:eq:liq_loss_shock_pde}
	\begin{alignedat}{23}
		0&= \partial_t V_i  + \sup_{c_i,  w_{i\cdot}} \Bigg\{ 	&&\left[\left(1-c_i\right)r  + \sum_{j\neq i}w_{ij} \mu_j  + \frac{\eta_i \mu_i}{\phi_i}\right]x  \partial_x V_i + \theta_i \bar F_i(c_i)\Big[ V_i(t,x(1 - \eta_i \change{ \, -\,  c_i})) - V_i \Big] \\
		& && + \sum_{ j\neq i} \theta_j \bar F_j(c_j)\Big[ 	V_i(t,x(1-\phi_j w_{ij})) - V_i \Big] \\
		& && \change{ \, + \, \theta_i \int_{0}^{c_i} f_i(u)\left[V_i(t, x(1-u)) - V_i) du\right] }\Bigg\},
	\end{alignedat}
\end{align}
with terminal condition $V_i(T,x) = U_i(x)$. 

For simplicity, we will only consider the setting of a logarithmic utility function, (although the next step of separability can still be performed with power utility). 
We look for a separable solution to the PDE \eqref{app:eq:liq_loss_shock_pde} given by $V_i(t,x) = g_i(t) + \log(x)$. By plugging this ansatz into \eqref{app:eq:liq_loss_shock_pde} and simplifying, we can obtain the following ordinary differential equation for $g_i(t)$.
\begin{align}
	\label{app:eq:liq_loss_shock_ode_g}
	\begin{alignedat}{23}
		0&=  g_i'(t) &+ \sup_{c_i,  w_{i\cdot}} \Bigg\{ \left(1-c_i\right)r  + \sum_{j\neq i}w_{ij} \mu_j  + \frac{\eta_i \mu_i}{\phi_i}+ \theta_i \bar F_i(c_i) \log(1 - \eta_i \change{ \, - \, c_i}) \Big] \\
		&&+ \sum_{ j\neq i} \theta_j \bar F_j(c_j)\log(1-\phi_j w_{ij}) 
		\change{ \, + \, \theta_i \int_{0}^{c_i} f_i(u)\log(1-u) du }\Bigg\},
	\end{alignedat}
\end{align}
with terminal condition $g_i(T) = 0$.

Indeed the PDE solved by the value function has a separable solution, but now the first-order conditions for optimality solved by the optimal controls yield the following system of equations:
\begin{align}
	\label{app:eq:liq_loss_shock_opt_ctrls}
	0 &= \mu_j + \frac{\phi_j \theta_j \bar F_j(\hat c_j)}{1 - \phi_j \hat w_{ij}},\quad \text{ for all } j\neq i \\
	0 &= -r - \theta_i f_i(\hat c_i) \log(1-\eta_i \change{ \, -\, \hat c_i}) \change{ \, - \, \frac{\theta_i \bar F_i(\hat c_i)}{1 - \eta_i - \hat c_i} + \theta_i f_i(\hat c_i) \log(1-\hat c_i)}.
\end{align}
Although the second equation for the optimal control $\hat c_i$ still exists in isolation, this expression cannot be solved explicitly for the optimal controls -- certainly not for general choice of $\bar F(\cdot)$, and not even if we assume the liquidity shocks to be exponentially distributed. Hence, to study the optimal controls in this model, we would need to resort to computational methods.

Using the parameters in Table~\ref{tab:PoA_sim_params}, and varying the value of $\lambda$, we can see in Figure~\ref{fig:Liq_Loss_Shock_Arrival} that for large $\lambda$, and hence small expected size for liquidity shocks, the optimal cash reserves are largely identical. Since the first-order condition for the interbank investment amounts will only depend on the optimal cash reserves, this implies that the optimal allocations will coincide. As a result, the same results we obtain in the main text would hold.
\begin{figure}[!h]
	\centering
	\includegraphics[width=0.75\textwidth]{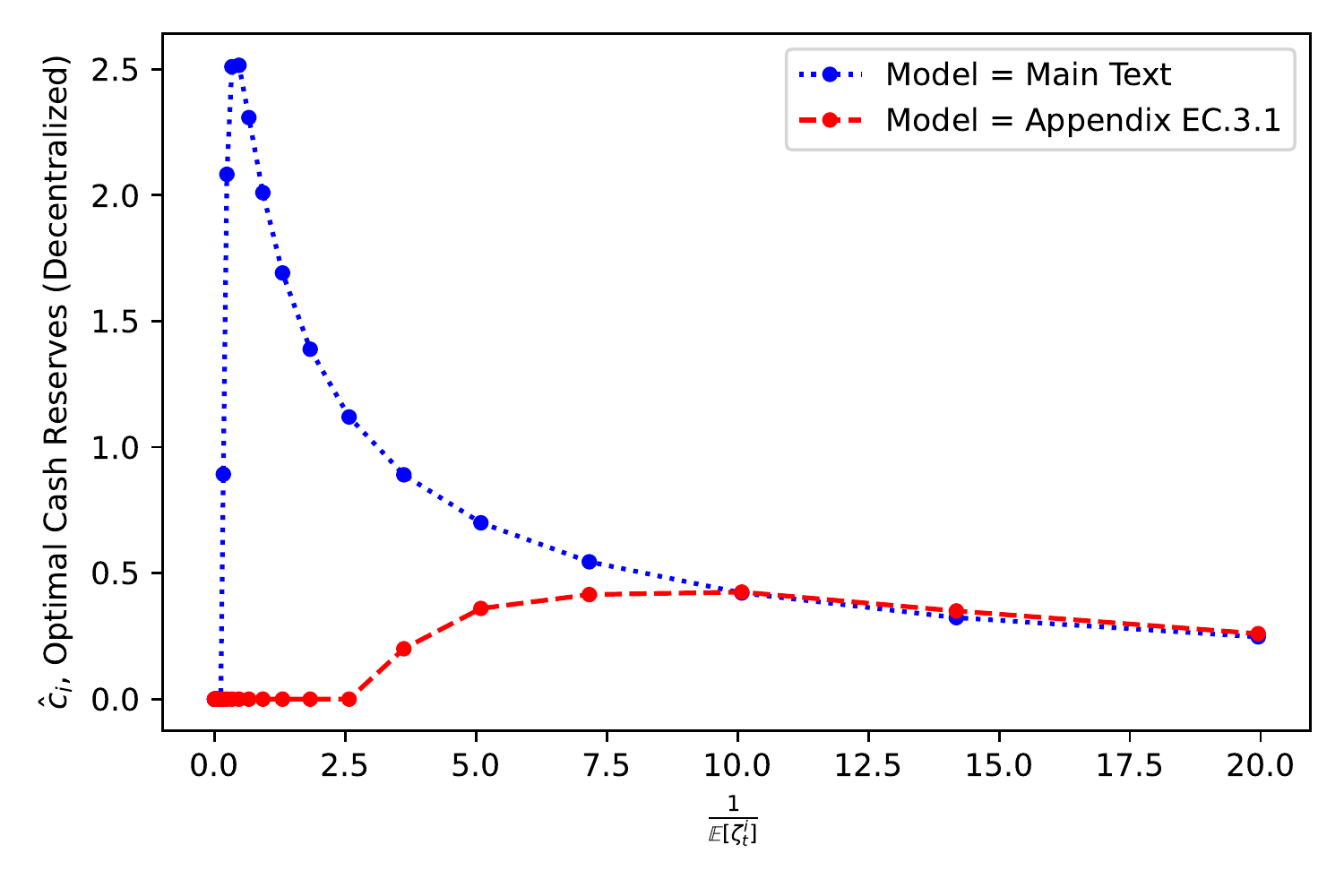}
	\captionsetup{width=.8\linewidth, font = small, justification=justified}
	\caption{Comparison of the optimal cash reserves when liquidity shocks directly induce losses regardless of size.}
	\label{fig:Liq_Loss_Shock_Arrival}
\end{figure}

\subsection{Partially Liquid Risk-Free Asset}
\label{subapp:partially_liq_rf_asset}

A further simplifying assumption made in this paper is that the risk-free asset, with constant rate of return $r$, does not provide a source of liquidity. In this section, we relax this assumption by allowing a firm's capital held in the risk-free asset to provide \textit{some} liquidity upon shock arrival.

The only substantial difference we must make, with respect to the model in the main text, concerns the construction of the thinned Poisson process $N^i_t$ from the shock arrival process $\tilde N^i_t$. We now define $dN^i_t$ as follows:
\begin{equation}
	\label{app:eq:partially_liq_rf_asset_thinned_cond}
	dN^i_t = 1 \iff d \tilde{N}^i_t = 1 \text{ and } \zeta^i_t > c^i_t \change{ \, +\, \alpha \left(1 - c^i_t - \sum_{j\neq i} w^{ij}_t\right)_+},
\end{equation}
where the parameter $\alpha \in (0,1)$ controls the fraction of value obtained by liquidating the risk-free asset immediately, and $(y)_+$ denotes the positive part of $y$. Note that some firms may borrow at the risk-free rate ($1 - c^i_t - \sum_{j\neq i} w^{ij}_t<0$), but obtain no additional liquidity. 

With $dN^i_t$ defined in this manner, the equation \eqref{eq:wealth_sde} for the dynamics of $X^i_t$ remains identical. We define the value function of firm $i$ in the same manner as the main text, and it can be shown that under regularity conditions, $V_i(t,x)$ solves:
\begin{align}
	\label{app:eq:partially_liq_rf_asset_pde}
	\begin{alignedat}{23}
		0&= \partial_t V_i  + \sup_{c_i,  w_{i\cdot}} \Bigg\{ 	&&\left[\left(1-c_i\right)r  + \sum_{j\neq i}w_{ij} \mu_j  + \frac{\eta_i \mu_i}{\phi_i}\right]x  \partial_x V_i \\
		& &&
		+ \theta_i \bar F_i\left(c_i \change{ \, + \,  \alpha \Big(1 - c_i - \sum_{j\neq i} w_{ij} \Big)_+} \right)\Big[ V_i(t,x(1 - \eta_i)) - V_i \Big] \\
		& && + \sum_{ j\neq i} \theta_j \bar F_j\left(c_j \change{ \, +\,  \alpha \Big(1 - c_j - \sum_{k\neq j} w_{jk} \Big)_+} \right)\Big[ 	V_i(t,x(1-\phi_j w_{ij})) - V_i \Big] \Bigg\}.
	\end{alignedat}
\end{align}

Indeed, if we assume firm $i$ has logarithmic utility, and plug in the ansatz of $V_i(t,x) = g_i(t) + \log(x)$, the PDE admits a separable solution and we obtain the following ODE for $g_i(t)$.
\begin{align}
	\label{app:eq:partially_liq_rf_asset_ode}
	\begin{alignedat}{23}
		0&= g_i'(t)  + \sup_{c_i,  w_{i\cdot}} \Bigg\{ 	&&\left[\left(1-c_i\right)r  + \sum_{j\neq i}w_{ij} \mu_j  + \frac{\eta_i \mu_i}{\phi_i}\right] 
		+ \theta_i \bar F_i\left(c_i \change{ \, + \,  \alpha \Big(1 - c_i - \sum_{j\neq i} w_{ij} \Big)_+} \right)\log(1 - \eta_i)\\
		& && + \sum_{ j\neq i} \theta_j \bar F_j\left(c_j \change{ \, +\,  \alpha \Big(1 - c_j - \sum_{k\neq j} w_{jk} \Big)_+} \right)\log(1-\phi_j w_{ij})\Bigg\},
	\end{alignedat}
\end{align}
with terminal condition $g_i(T) = 0$.

We note that the first order conditions may not be sufficient for the optimization problem in Eq.~\eqref{app:eq:partially_liq_rf_asset_ode} since the objective function is not differentiable whenever $c_i + \sum_{j\neq i} w_{ij} = 1$.
However, we can still differentiate the optimization problem with respect to the decision variables, and obtain the following system of equations for the optimal controls of firm $i$:
\begin{align}
	\label{app:eq:partially_liq_rf_asset_opt_ctrls}
	0 &= \mu_j \change{ - \theta_i f_i \left( \hat c_i + \alpha \Big( 1 - \hat c_i - \sum_{k\neq i} \hat w_{ik} \Big)_+ \right) \left[ -\alpha \cdot \mathbf{1} \Big\{	1 - \hat c_i - \sum_{k\neq i} \hat w_{ik} > 0\Big\} \right] }\\
	& \qquad - \frac{\phi_j \theta_j \bar F_j \left( \hat c_j \change{ + \alpha \Big( 1 - \hat c_j - \sum_{k\neq j} \hat w_{jk} \Big)_+} \right)}{1 - \phi_j \hat w_{ij}}  \quad \text{ for all } j\neq i \\
	0 &= -r - \theta_i f_i\left(\hat c_i \change{ + \alpha \Big( 1 - \hat c_i - \sum_{k\neq i} \hat w_{ik} \Big)_+  } \right)\log(1-\eta_i) \change{ \left( 1 - \alpha \cdot \mathbf{1} \Big\{	1 - \hat c_i - \sum_{k\neq i} \hat w_{ik} > 0\Big\} \right). }
\end{align}
Observe that this system of equations differs from the system in Eq.~\eqref{sol:dec} if and only if $1 - \hat c_i - \sum_{k\neq i} \hat w_{ik} > 0$, and hence if firm $i$ holds a long position in the risk-free bond.

We have numerically obtained solutions for the optimal controls $(\hat c_i, \hat w_{i \cdot})$ in the symmetric case, where all banks have identical projects and utility functions. In doing so, we have seen that for large enough $n$, these optimal controls will exactly equal those in the main text (Section~\ref{ssec:opts-dec}). The rationale for this is intuitive. 
Under the assumption of identical banks, the total amount invested in other banks' projects $ \sum_{k\neq i} \hat w_{ik}$ grows linearly in $n$. This quantity eventually becomes large enough for bank $i$ to short the risk-free bond and use this to finance their interbank project investments, which implies that the system of equations in Eq.~\eqref{app:eq:partially_liq_rf_asset_opt_ctrls} will simplify to the system of equations that gives rise to the optimal controls in Eq.~\eqref{sol:dec}.

We note that this observation will easily extend to the centralized optimum, since the planner will always choose to hold no less cash than decentralized banks. Hence under the planner's optimum, the optimal project investment amounts are no smaller than those in the decentralized setting, and it will still be the case that banks hold short positions in the risk-free bond.

Regardless, we have obtained numerical solutions for the optimal controls for decentralized banks. Figure~\ref{fig:Partially_Liq_RF_Asset} compares the optimal cash reserves for a bank as $\alpha$, the fraction of liquidity available due to a long position in the risk-free asset, varies. The parameters used in these simulations can be found in Table~\ref{tab:Partially_Liq_RF_Asset_params}. We note that even for a 2-bank system, cash reserves quickly drop to zero as $\alpha$ grows. And of course, for $\alpha = 0$, we obtain the optimal controls from the model in the main text. In Figure~\ref{fig:Partially_Liq_RF_Asset_n_6}, with $n=6$ we find that for the majority of values of $\alpha$, the optimal cash allocations also coincide, since there is a sufficiently large number of return-bearing projects for a bank to invest in to justify shorting the risk-free asset. 
\begin{figure}[!h]
	\centering
	\begin{subfigure}{0.45\linewidth}
		\includegraphics[width = \linewidth]{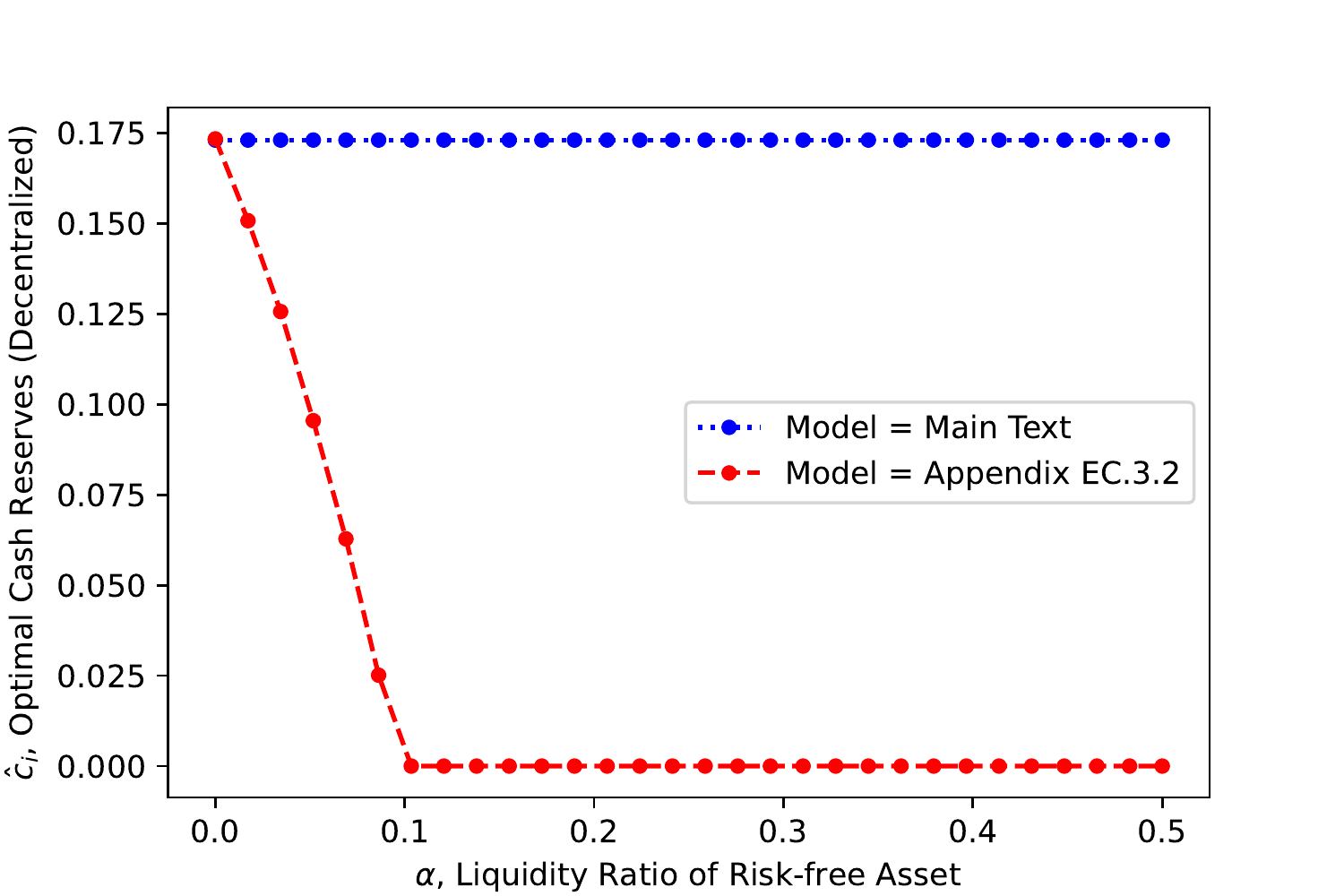}
		\captionsetup{font = small}
		\caption{$n=2$}
		\label{fig:Partially_Liq_RF_Asset_n_2}
	\end{subfigure}
	\hspace{2em}
	\begin{subfigure}{0.45\linewidth}
		\includegraphics[width = \linewidth]{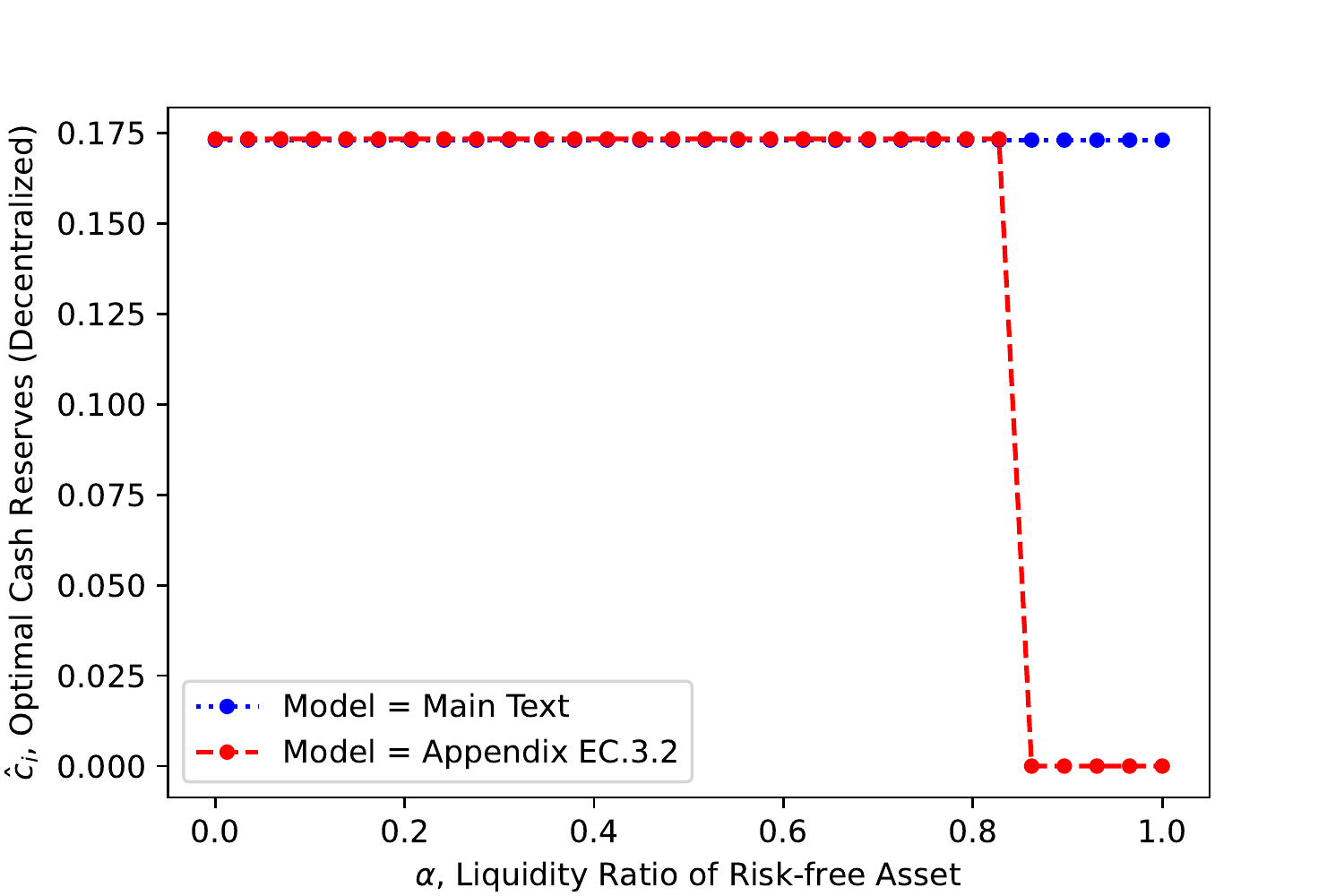}
		\captionsetup{font = small}
		\caption{$n=6$}
		\label{fig:Partially_Liq_RF_Asset_n_6}
	\end{subfigure}
	\captionsetup{width=.8\linewidth, font = small, justification=justified}
	\caption{Comparison of the optimal cash reserves when liquidity shocks directly induce losses regardless of size. Parameters (excluding $n$) are detailed in Table~\ref{tab:Partially_Liq_RF_Asset_params}.}
	\label{fig:Partially_Liq_RF_Asset}
\end{figure}

\begin{table}[h]
	\centering
	\captionsetup{width=.8\linewidth, font = small, justification=justified}
	\caption{Parameters used for obtaining optimal cash reserves shown in Figures~\ref{fig:Partially_Liq_RF_Asset_n_2} and~\ref{fig:Partially_Liq_RF_Asset_n_6}. Code is available \href{https://github.com/drigobon/rigobon-sircar-2022}{\color{blue}here}.}
	\label{tab:Partially_Liq_RF_Asset_params}
	\begin{tabular}{ c  c  l }
		\toprule
		Notation & Value & Description \\
		\midrule
		$r$ & 0.03 & Risk-free rate \\
		$\mu$ & 0.045 & Excess drift\\
		$\phi$ & 0.4 & Losses to External Investors\\
		$\eta$ & 0.3 & Losses to Associated Bank\\
		$F(x)$ & $1 - e^{-\frac{x}{\lambda}}$ & CDF of shock size\\
		$\lambda$ & 1 & Parameter of $F(\cdot)$ \\
		$\theta$ & 0.1 & Shock arrival rate\\
		$\gamma$ & 1 & Relative risk aversion coefficient\\
		\bottomrule
	\end{tabular}
\end{table}

\subsection{Endogenizing Self-Investment Parameter $\eta_i$}
\label{subapp:endog_eta}

A third assumption made in the main text concerns the fact that the parameter $\eta_i$, which governs the fraction of bank $i$'s wealth lost upon a liquidity shortage, is exogenous. In this section, we will relax this assumption and study the optimal control problem for bank $i$ in a decentralized setting, when $\eta_i$ can also be chosen freely.

The derivation of the HJB is largely identical, and after plugging in a separable form for the value function, we would obtain the following ODE for $g_i(t)$:
\begin{align}
	\label{app:eq:endog_eta_ode_g}
	\begin{alignedat}{23}
		0&=  g_i'(t) + \sup_{c_i,  \eta_i, w_{i\cdot}} \Bigg\{ 	&&\left(1-c_i\right)r  + \sum_{j\neq i}w_{ij} \mu_j  + \frac{\eta_i \mu_i}{\phi_i}+ \theta_i \bar F_i(c_i) \log(1 - \eta_i) \Big] \\
		& && + \sum_{ j\neq i} \theta_j \bar F_j(c_j)\log(1-\phi_j w_{ij})\Bigg\}.
	\end{alignedat}
\end{align}
The only difference between this equation and its analogue in the main text is that $\eta_i$ is included as a control variable. We will only analyze the optimization problem for both $c_i$ and $\eta_i$, since the first-order conditions for each $w_{ij}$ remain identical to that of the main text. 

Now, the first-order conditions for these two controls are as follows:
\begin{align}
	\label{app:eq:endog_eta_opt_ctrls}
	0 = -r - \theta_i f_i(\hat c_i) \log(1-\hat\eta_i), \qquad
	0 = \frac{\mu_i}{\phi_i} - \frac{\theta_i \bar F_i(\hat c_i)}{1-\hat\eta_i},
\end{align}
where $\hat\eta_i$ denotes the optimal choice of $\eta_i$ for bank $i$ in the decentralized setting. Substituting $1-\hat\eta_i$ by solving the second expression yields the following expression in only $\hat c_i$:
\begin{equation}
	0 = -r - \theta_i f_i(\hat c_i) \log\left(\frac{\phi_i \theta_i \bar F_i(\hat c_i)}{\mu_i}\right). 
\end{equation}
In general, this expression cannot be explicitly solved without further assumptions on the distribution of liquidity shocks. 

If, however, we assume that the size of liquidity shocks follows an exponential distribution, with $\bar F_i(x) = e^{-\frac{x}{\lambda_i}}$ and $f_i(x) = \lambda_i^{-1} e^{-\frac{x}{\lambda_i}}$, we can simplify and isolate $\hat c_i$ to obtain:
\begin{equation}
	\label{app:eq:endog_eta_opt_ci}
	\hat c_i = -\lambda_i \log\left( \frac{-r  \lambda_i}{\theta_i \mathbb{W}\left(\frac{-r \lambda_i \phi_i}{ \mu_i }\right) } \right),
\end{equation}
where $\mathbb{W}(x)$ denotes the Lambert-W function defined as the inverse function of $x e^x$, restricted to the range $[-1,\infty)$ and the domain $[-e^{-1}, \infty)$. We note that this expression holds only when there is an interior solution to the optimization problem, i.e. when the optimal $\hat c_i$ and $\hat\eta_i$ are strictly positive. 

The analogue of this expression in the main text is:
\begin{equation}
	\label{app:eq:main_text_opt_ci}
	\hat c_i = -\lambda_i \log\left( \frac{-r \lambda_i}{\theta_i \log(1-\eta_i)} \right),
\end{equation}
when the parameter $\eta_i$ is assumed to be exogenous and given. The two are equal if and only if $\log(1-\eta_i) = \mathbb{W}\left(\frac{-r \lambda_i \phi_i}{ \mu_i }\right)$, or equivalently when $(1-\eta_i) \log(1-\eta_i) = \frac{-r\lambda_i \phi_i}{\mu_i }$.

We can numerically compute the value of the objective function for various choices of $c_i$, where we use the second expression in Eq.~\eqref{app:eq:endog_eta_opt_ctrls} to compute the optimal value of $\eta_i$ for each possible $c_i$ as follows:
\begin{equation}
	\hat\eta_i(c_i) = \left(1 - \frac{\phi_i \theta_i \bar F_i(c_i)}{\mu_i}\right)_+,
\end{equation}
since we assume that firm $i$ still cannot short their own project. Figure~\ref{fig:Endog_Eta_Obj} shows, for different values of $c_i$, the quantity:
\begin{equation}
	\label{app:eq:obj_c_i_with_endog_eta}
	\mathrm{Obj}(c_i) = (1-c_i)r + \frac{\hat\eta_i(c_i) \mu_i}{\phi_i} + \theta_i \bar F_i(c_i) \log(1-\hat\eta_i(c_i)),
\end{equation}
which corresponds to the portion of the objective function in Eq.~\eqref{app:eq:endog_eta_ode_g} tied to the two controls $\eta_i$ and $c_i$. 

We have chosen the parameters in Table~\ref{tab:Endog_Eta_Obj_params} to show the existence of two optima. In one `corner' solution, firm $i$ chooses to hold zero cash reserves and zero stake in their own project. In the second `interior' solution, both of these optimal quantities are strictly positive fractions of $i$'s wealth. This can be contrasted with the second curve, which shows Eq~\eqref{app:eq:obj_c_i_with_endog_eta} when $\eta_i$ is fixed at a value of 0.5, and for which a unique maximum exists.

Note that the interior solution in Figure~\ref{fig:Endog_Eta_Obj} with endogenous $\eta_i$ represents an equilibrium in which projects are \textit{less} risky than their analogue in the main text model. Cash reserves are higher, and hence bank $i$'s degree of investment in their own project is also greater than the fixed value of 0.5. In contrast, the corner solution with zero cash reserves and zero investment in project $i$ is not particularly interesting nor realistic, as it represents a setting in which bank $i$ is wholly immune to all liquidity shocks. 
\newpage
\begin{figure}[htbp]
	\centering
	\includegraphics[width=0.75\textwidth]{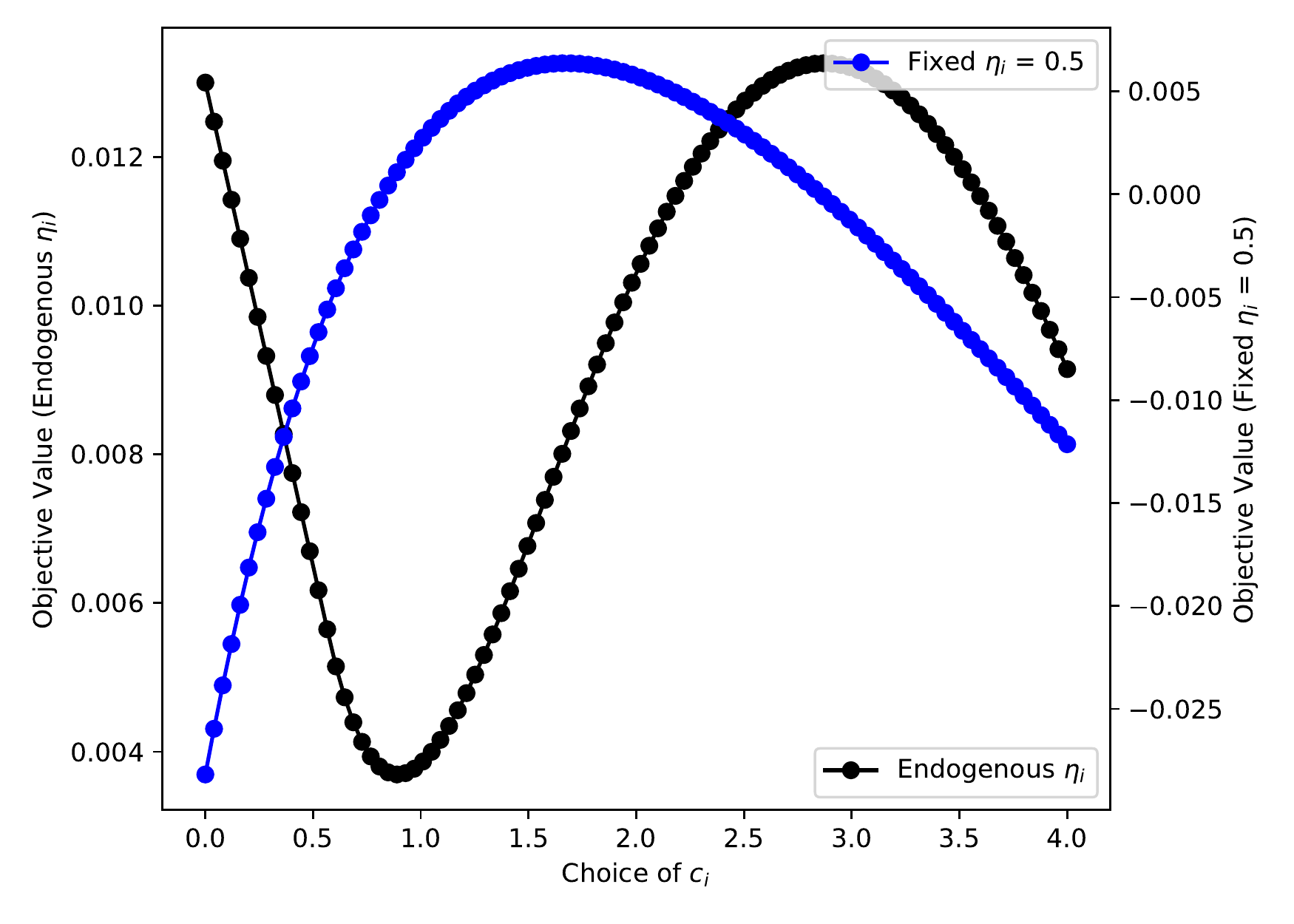}
	\captionsetup{width=.8\linewidth, font = small, justification=justified}
	\caption{Comparison of the optimal cash reserves when liquidity shocks directly induce losses regardless of size.}
	\label{fig:Endog_Eta_Obj}
\end{figure}
\begin{table}[h]
	\centering
	\captionsetup{width=.8\linewidth, font = small, justification=justified}
	\caption{Parameters used for generating Figure~\ref{fig:Endog_Eta_Obj}. Code is available \href{https://github.com/drigobon/rigobon-sircar-2022}{here}.}
	\label{tab:Endog_Eta_Obj_params}
	\begin{tabular}{ c  c  l }
		\toprule
		Notation & Value & Description \\
		\midrule
		$r$ & 0.013 & Risk-free rate \\
		$\mu_i$ & 0.045 & Excess drift\\
		$\phi_i$ & 0.8 & Losses to External Investors\\
		$F_i(x)$ & $1 - e^{-\frac{x}{\lambda_i}}$ & CDF of shock size\\
		$\lambda_i$ & 1 & Parameter of $F(\cdot)$ \\
		$\theta_i$ & 0.1 & Shock arrival rate\\
		$\gamma_i$ & 1 & Relative risk aversion coefficient\\
		\bottomrule
	\end{tabular}
\end{table}

}

\end{document}